\pgfplotsset{compat=1.5}
\setlist{topsep=2pt,parsep=1pt,itemsep=1pt}
\begin{document}

\title{Estimating the Cardinality of Conjunctive Queries \\ over RDF Data Using Graph Summarisation}
\author{Giorgio Stefanoni}
\affiliation{Bloomberg L.P., London, UK}
\author{Boris Motik}
\affiliation{University of Oxford, UK}
\author{Egor V.\ Kostylev}
\affiliation{University of Oxford, UK}

\begin{abstract}
Estimating the cardinality (i.e., the number of answers) of conjunctive queries
is particularly difficult in RDF systems: queries over RDF data are
navigational and thus tend to involve many joins. We present a new, principled
cardinality estimation technique based on graph summarisation. We interpret a
summary of an RDF graph using a possible world semantics and formalise the
estimation problem as computing the expected cardinality over all RDF graphs
represented by the summary, and we present a closed-form formula for computing
the expectation of arbitrary queries. We also discuss approaches to RDF graph
summarisation. Finally, we show empirically that our cardinality technique is
more accurate and more consistent, often by orders of magnitude, than the state
of the art.

\end{abstract}

\maketitle

\section{Introduction}\label{sec:introduction}

The Resource Description Framework (RDF) \cite{rdf11-concepts} data model is
often used in applications where developing a rigid schema is either infeasible
or undesirable. An RDF graph is a set of triples of the form $\tuple{s, p, o}$,
where $s$, $p,$ and $o$ are objects called \emph{resources}; by viewing triples
as directed labelled edges, an RDF graph corresponds to an ordinary directed
labelled graph. SPARQL \cite{sparql11-query} is the standard RDF query
language. \emph{Conjunctive queries} (also known as \emph{basic graph
patterns}) are the basic type of SPARQL queries, and estimating their
cardinality (i.e., the number of answers) is an important problem. For example,
estimates are used to determine the cost of query plans, so estimation accuracy
directly influences plan quality \cite{Leis:2015:GQO:2850583.2850594}. Thus,
cardinality estimators are key components of most RDF systems.

Cardinality estimation has a long tradition in databases. It is usually solved
by summarising a database to different kinds of \emph{synopses} that can be
used to accurately estimate the cardinality of certain queries. One-dimensional
synopses, such as one-dimensional histograms
\cite{Poosala:1996:IHS:233269.233342,Ioannidis:2003} and wavelets
\cite{Matias:1998:WHS:276304.276344}, can summarise one attribute of one
relation, so they can be used on queries involving one selection on a single
relation. Multidimensional synopses, such as multidimensional histograms
\cite{Poosala:1997, Bruno:2001, Gunopulos2005, Aboulnaga:1999:SHB},
multidimensional wavelets \cite{Chakrabarti2001, Garofalakis:2002}, discrete
cosine transforms \cite{Lee:1999:MSE}, and kernel methods
\cite{Gunopulos:2000:AMA}, can summarise several attributes of one relation, so
they can be used on queries involving several selections on a single relation.
Schema-level synopses, such as graphical models
\cite{Getoor:2001,Tzoumas:2013}, sampling
\cite{Acharya:1999,Yu:2013:CND:2463676.2463701}, TuG synopses
\cite{Spiegel:2009}, and statistical views \cite{DBLP:conf/sigmod/BrunoC04,
DBLP:conf/vldb/Galindo-LegariaJWW03, DBLP:conf/sigmod/TerleckiBGZ09} can
summarise several attributes over different relations, but they often require a
\emph{join schema} that identifies the supported joins.

Queries whose cardinality cannot be estimated using the available synopses are
typically broken into subqueries that can be estimated, and partial estimates
are combined using ad hoc assumptions
\cite{Bruno:2003,DBLP:books/ph/Garcia-MolinaUW99}: the \emph{independence
assumption} assumes that each selection or join affects the query answers
independently, the \emph{preservation assumption} assumes that each attribute
value of a relation participating in a join is present in the join result, and
the \emph{containment assumption} assumes that, for each pair of joined
attributes, all values of one attribute are contained in the other attribute.
However, these assumptions usually do not hold in practice and thus introduce
estimation errors that compound exponentially with the number of
joins~\cite{Ioannidis:1991:PES:115790.115835}. Due to the graph-like nature of
RDF, queries in RDF often navigate over paths and thus contain many
(self-)joins (ten or more are not rare). Techniques that aim to address these
specifics of RDF either just adapt relational approaches \cite{Stocker:2008,
Neumann2010, Huang:2011} or focus on particular types of queries (e.g., star or
chain queries) \cite{Neumann:2011} and fall back to the ad hoc assumptions for
other types of queries. Hence, as we show in Section~\ref{sec:evaluation},
accurately estimating the cardinality of complex RDF queries remains
challenging.

To address these issues, in Section~\ref{sec:framework} we propose a new,
principled technique for estimating the cardinality of conjunctive queries over
RDF data. Our technique is based on graph summarisation---the process of
compressing a graph by merging its vertices. Summaries have already been used
for graph exploration~\cite{Navlakha:2008:GSB:1376616.1376661,
Tian:2008:EAG:1376616.1376675}, fast approximate graph
analytics~\cite{DBLP:conf/sdm/LeFevreT10, DBLP:journals/datamine/RiondatoGB17},
and query processing~\cite{Cebiric:2015:QSR:2824032.2824124}. In
Section~\ref{sec:framework}, we introduce a specific kind of summary that we
use as a schema-level synopsis for RDF data. Following
\citet{DBLP:conf/sdm/LeFevreT10}, we interpret a summary using a \emph{possible
world semantics} as a family of RDF graphs represented by the summary. We
formalise the cardinality estimation problem as computing the expectation of
the query cardinality across this family, and so our approach does not require
any ad hoc assumptions. Finally, we show how to determine the probability that
the estimation error exceeds a given bound. To the best of our knowledge, this
is the first approach that can provide such guarantees for arbitrary queries.

In Section~\ref{sec:estimation} we present formulas for computing the expected
cardinality and its variance. Our formulas can handle queries of the form
\texttt{SELECT * WHERE \{ BGP \}} for \texttt{BGP} a basic graph pattern, which
form the core of SPARQL. Note that, without \texttt{DISTINCT}, projecting
variables in the \texttt{SELECT} clause does not affect the query cardinality.

To apply our framework in practice, effective graph summarisation algorithms
are needed. Any summarisation algorithm can be used with our technique in
principle, but some will fit our framework better than others: we intuitively
expect that a good summary should merge vertices participating in similar
connections. Hence, algorithms for unlabelled graphs
\cite{Navlakha:2008:GSB:1376616.1376661, DBLP:conf/sdm/LeFevreT10,
DBLP:journals/datamine/RiondatoGB17} are unlikely to work well on RDF, where
links are inherently labelled. We experimented with an adaptation of the SNAP
\cite{Tian:2008:EAG:1376616.1376675} technique to RDF, but it produced very
large summaries. A recently proposed RDF-specific summarisation technique seems
very promising, but its computational complexity seems very high
\cite{Cebiric:2015:QSR:2824032.2824124}. Thus, in
Section~\ref{sec:summarisation} we present a new graph summarisation approach.
In the first step, we use a simple notion of similarity based on the labels on
the edges that resources occur in. If the resulting summary is not sufficiently
small, in the second step we further merge resources based on the similarity of
the connections resources participate in.

In Section~\ref{sec:evaluation} we present the results of an extensive
evaluation on LUBM, WatDiv, and DBLP benchmarks with 55~M to 109~M triples, and
15 to 103 queries. We compared the accuracy of our technique against RDF-3X
\cite{Neumann2010} and the \emph{characteristic sets technique}
\cite{Neumann:2011} (both specifically targeting RDF data), PostgreSQL (which
uses one-dimensional histograms), and SystemX (a prominent commercial RDBMS
that combines histograms with dynamic sampling); we considered both triple
table and vertical partitioning for storing RDF in RDBMSs. We show that our
technique is generally more accurate than the related approaches, often by
orders of magnitude.

The proofs of all of our technical results are given \ifappendix{in full in the
appendices.}{in the extended version of the paper at
\url{http://arxiv.org/abs/1801.09619}.}

\section{Definitions and Notation}\label{sec:preliminaries}

We now recall the relevant definitions and notation. RDF vocabulary consist of
\emph{resources}, which are URIs (i.e., abstract entities) or \emph{literals}
(i.e., concrete values described by datatypes such as
$\mathit{xsd}{:}\mathit{string}$ or $\mathit{xsd}{:}\mathit{integer}$). A
\emph{term} is a resource or a variable. An \emph{(RDF) atom} is a triple of
the form ${\tuple{s, p, o}}$, where $s$, $p$, and $o$ are terms called the
\emph{subject}, \emph{predicate}, and \emph{object}, respectively. An
\emph{(RDF) triple} is a variable-free atom. An \emph{(RDF) graph} is a finite
set of triples.

A \emph{substitution} $\pi$ is a mapping of variables to terms; the domain and
range of $\pi$ are $\dom{\pi}$ and $\rng{\pi}$, respectively; $\vrng{\pi}$ is
the set of all variables in $\rng{\pi}$; and $\pi$ is \emph{empty} if
${\dom{\pi} = \emptyset}$. Let $Z$ be a term, an atom, or a set of atoms. Then,
$\pi(Z)$ is obtained from $Z$ by replacing each occurrence of ${x \in
\dom{\pi}}$ in $Z$ with ${\pi(x)}$; if $Z$ is a set of atoms, then $\pi(Z)$ is
also a set and does not contain duplicates. Moreover, $\res{Z}$, $\var{Z}$, and
$\term{Z}$ are the sets of resources, variables, and terms occurring in $Z$,
respectively.

SPARQL \cite{sparql11-query} is a standard query language for RDF. Its syntax
is very verbose, so we use a more compact notation that captures basic SPARQL
queries of the form ${\texttt{SELECT * WHERE \{ BGP \}}}$. SPARQL variables are
written as $?x$, but we drop the question mark and reserve (possibly indexed)
letters $x$, $y$, and $z$ for variables. A \emph{conjunctive query} (\emph{CQ})
$q$ is a finite set of atoms. A substitution $\pi$ is an \emph{answer} to $q$
on an RDF graph $\graph$ if ${\dom{\pi} = \var{q}}$ and ${\pi(q) \subseteq
\graph}$; moreover, $\ans{q}{\graph}$ is the set of all answers to $q$ on
$\graph$, and the \emph{cardinality} of $q$ on $\graph$ is the size of
$\ans{q}{\graph}$. For technical reasons, we allow $q$ to be empty, in which
case $\ans{q}{G}$ contains just one empty answer. These definitions are
compatible with the ones by \citet{DBLP:journals/tods/PerezAG09}; however,
since we consider only CQs without variable projection, we simplify the
notation and treat the query as a finite set of atoms.

The q-error is often used to measure precision of cardinality estimates
\cite{Neumann:2011}: if a query returns $N$ answers and its cardinality is
estimated as $E$, the \emph{q-error} of the estimate is ${\qerror =
\max(\myfrac{N'}{E'},\myfrac{E'}{N'})}$, where ${N' = \max(N,1)}$ and ${E' =
\max(E,1)}$. Intuitively, the q-error shows the difference in the orders of
magnitude between a real cardinality $N$ and its estimation $E$, regardless of
whether $E$ over- or undershoots $N$. We use $N'$ and $E'$ instead of $N$ and
$E$ in the definition of q-error in order to avoid division by zero, as well as
to prevent artificially high q-error values when ${E < 1}$.

Let ${n \geq m > 0}$ be integers. Then, $\binom{n}{m} = \frac{n!}{m! (n-m)!}$
is the \emph{binomial coefficient}, ${(n)_m = n(n-1) \dots (n - m +1)}$ is the
\emph{falling factorial}, and ${\interval{m}{n}}$ is the set of all integers
between $m$ and $n$ (inclusive).

\section{Cardinality Estimation Framework}\label{sec:framework}

\begin{figure*}[tb]
    \centering
    \newenvironment{customlegend}[1]{
    \begingroup
    \csname pgfplots@init@cleared@structures\endcsname
    \pgfplotsset{#1}
}{
    \csname pgfplots@createlegend\endcsname
    \endgroup
}
\def\addlegendimage{\csname pgfplots@addlegendimage\endcsname}
\pgfkeys{/pgfplots/number in legend/.style={
        /pgfplots/legend image code/.code={
            \node at (0.125,-0.0225){#1};
        },
    },
}
\tikzset{
    align at top/.style={baseline=(current bounding box.north)},
    x = 1cm,
    y = 1.1cm,
    font = \tiny,
    graph vertex/.style={
        draw,
        minimum size = 11pt,
        inner sep = 0.7pt,
    },
    graph married/.style = {
        graph vertex,
        circle,
        fill = gray!20,
    },
    graph single/.style = {
        graph vertex,
        diamond,
        fill = red!20,
    },
    graph road/.style = {
        graph vertex,
        rectangle,
        fill = blue!20,
    },
    graph van/.style = {
        graph vertex,
        regular polygon,
        regular polygon sides=5,
        fill = green!20,
    },
    graph edge owns/.style= {
        ->,
    },
    graph edge boss/.style= {
        ->,
        dashed,
    }
}
\begin{tikzpicture}[]
    \node[graph single]     (s1)       at (0, 0)    {\footnotesize{$e_1$}};
    \node[graph single]     (s2)       at (1, 0)    {\footnotesize{$e_2$}};
    \path[graph edge boss]  (s1) edge node[above]   {}  (s2) ;
    \node[graph married]    (m1)       at (0, -0.8) {\footnotesize{$e_3$}};
    \node[graph married]    (m2)       at (1, -0.8) {\footnotesize{$e_4$}};
    \path[graph edge boss]  (s1) edge node[above]   {}  (m1) ;
    \path[graph edge boss]  (s2) edge node[above]   {}  (m2) ;
    \node[graph van]        (v1)       at (0, -1.6) {\footnotesize{$c_3$}};
    \node[graph van]        (v2)       at (1, -1.6) {\footnotesize{$c_4$}};
    \path[graph edge owns]  (m1) edge node[above]   {}  (v1) ;
    \path[graph edge owns]  (m2) edge node[above]   {}  (v2) ;
    \node[graph road]       (r1)       at (2, 0)    {\footnotesize{$c_1$}};
    \node[graph road]       (r2)       at (2, -0.8) {\footnotesize{$c_2$}};
    \path[graph edge owns]  (s2) edge node[above]   {}  (r1) ;
    \path[graph edge owns]  (m2) edge node[above]   {}  (r2) ;
    \node[rounded corners=0.1cm, draw, dotted, fit=(s1) (s2)] (S) {};
    \node[left] at (S.west) {\footnotesize{$b_1$}};
    \node[rounded corners=0.1cm, draw, dotted, fit=(m1) (m2)] (M) {};
    \node[left] at (M.west) {\footnotesize{$b_3$}};
    \node[rounded corners=0.1cm, draw, dotted, fit=(v1) (v2)] (V) {};
    \node[left] at (V.west) {\footnotesize{$b_4$}};
    \node[rounded corners=0.1cm, draw, dotted, fit=(r1) (r2)] (R) {};
    \node[right] at (R.east) {\footnotesize{$b_2$}};

    \node[] at (1, -2.1) {\footnotesize{(a)}};

    \node[graph single]     (B1)       at (4, 0)    {\footnotesize{$b_1$}};
    \node[graph married]    (B3)       at (4, -0.8) {\footnotesize{$b_3$}};
    \node[graph van]        (B4)       at (4, -1.6) {\footnotesize{$b_4$}};
    \node[graph road]       (B2)       at (5, 0)    {\footnotesize{$b_2$}};
    \path[graph edge boss]  (B1) edge[loop left] node[left] {\footnotesize{1}}  (B1) ;
    \path[graph edge owns]  (B1) edge node[below]   {\footnotesize{1}}  (B2) ;
    \path[graph edge owns]  (B3) edge node[right]   {\footnotesize{1}}  (B2) ;
    \path[graph edge boss]  (B1) edge node[left]    {\footnotesize{2}}  (B3) ;
    \path[graph edge owns]  (B3) edge node[left]    {\footnotesize{2}}  (B4) ;

    \node[] at (4.5, -2.1) {\footnotesize{(b)}};

    \node[graph single]     (s1a)      at (7, 0)    {\footnotesize{$e_1$}};
    \node[graph single]     (s2a)      at (8, 0)    {\footnotesize{$e_2$}};
    \path[graph edge boss]  (s2a) edge node[above]  {}  (s1a) ;
    \node[graph married]    (m1a)      at (7, -0.8) {\footnotesize{$e_3$}};
    \node[graph married]    (m2a)      at (8, -0.8) {\footnotesize{$e_4$}};
    \path[graph edge boss]  (s1a) edge node[above]  {}  (m1a) ;
    \path[graph edge boss]  (s2a) edge node[above]  {}  (m1a) ;
    \node[graph van]        (v1a)      at (7, -1.6) {\footnotesize{$c_3$}};
    \node[graph van]        (v2a)      at (8, -1.6) {\footnotesize{$c_4$}};
    \path[graph edge owns]  (m2a) edge node[above]  {}  (v1a) ;
    \path[graph edge owns]  (m2a) edge node[above]  {}  (v2a) ;
    \node[graph road]       (r1a)      at (9, 0)    {\footnotesize{$c_1$}};
    \node[graph road]       (r2a)      at (9, -0.8) {\footnotesize{$c_2$}};
    \path[graph edge owns]  (s2a) edge node[above]  {}  (r1a) ;
    \path[graph edge owns]  (m2a) edge node[above]  {}  (r2a) ;

    \node[graph single]     (s1b)     at (10, 0)    {\footnotesize{$e_1$}};
    \node[graph single]     (s2b)     at (11, 0)    {\footnotesize{$e_2$}};
    \path[graph edge boss]  (s2b) edge node[above]  {}  (s1b);
    \node[graph married]    (m1b)     at (10, -0.8) {\footnotesize{$e_3$}};
    \node[graph married]    (m2b)     at (11, -0.8) {\footnotesize{$e_4$}};
    \path[graph edge boss]  (s1b) edge node[above]  {}  (m1b) ;
    \path[graph edge boss]  (s2b) edge node[above]  {}  (m1b) ;
    \node[graph van]        (v1b)     at (10, -1.6) {\footnotesize{$c_3$}};
    \node[graph van]        (v2b)     at (11, -1.6) {\footnotesize{$c_4$}};
    \path[graph edge owns]  (m1b) edge node[above]  {}  (v1b) ;
    \path[graph edge owns]  (m1b) edge node[above]  {}  (v2b) ;
    \node[graph road]       (r1b)     at (12, 0) {\footnotesize{$c_1$}};
    \node[graph road]       (r2b)     at (12, -0.8) {\footnotesize{$c_2$}};
    \path[graph edge owns]  (s2b) edge node[above]  {}  (r1b) ;
    \path[graph edge owns]  (m2b) edge node[above]  {}  (r2b) ;

    \node[] at (9.5, -2.1) {\footnotesize{(c)}};

    \begin{customlegend}{
        legend cell align=left,
        legend entries={
            $\mathit{owns}$,
            $\mathit{manages}$,
            emp.\ of class Single,
            emp.\ of class Married,
            car of class Roadster,
            car of class Van,
        },
        legend style={
            at={(16,0.2)},
            font=\footnotesize,
            draw=none,
            mark size=3pt,
            mark options={draw=black}
        }
    }
        \addlegendimage{->,black}
        \addlegendimage{->,black,dashed}
        \addlegendimage{mark =diamond*,  draw=white,  fill=red!20}
        \addlegendimage{mark =*, draw=white, fill=gray!20}
        \addlegendimage{mark =square*, draw=white, fill=blue!20}
        \addlegendimage{mark =pentagon*, draw=white, fill=green!20}
    \end{customlegend}

    \node[] at (14, -2.1) {\footnotesize{Legend}};

\end{tikzpicture}
    \caption{An example RDF graph $\graph$ (a), its summary $\summary$ (b), and example graphs $\graph_1$ and $\graph_2$ that also summarise as $\summary$ (c)}\label{fig:summary}
\end{figure*}

To estimate the cardinality of a query without ad hoc assumptions, we must
`compress' our input RDF graph while preserving as much of its structure as
possible. Graph summarisation has already been used to this end in related
settings~\cite{Navlakha:2008:GSB:1376616.1376661,
Tian:2008:EAG:1376616.1376675, DBLP:conf/sdm/LeFevreT10,
DBLP:journals/datamine/RiondatoGB17, Cebiric:2015:QSR:2824032.2824124}, and in
our work we use it to estimate CQ cardinality in a principled way.

Consider the input graph $\graph$ in Figure~\ref{fig:summary} where employees
are connected to their cars and managers, and resources are assigned to classes
(using the $\rdftype$ predicate) as shown in the legend. A summary $\summary$
of $\graph$ is obtained by merging the resources of $G$ into \emph{buckets} as
shown using dotted boxes; for example, employees $e_1$ and $e_2$ are replaced
by bucket $b_1$, and $e_3$ and $e_4$ by another bucket $b_3$. In contrast to
the existing approaches, in our approach each edge in $\summary$ is labelled
with the number of edges of $\graph$ that collapse due to merging. For example,
the $\mathit{manages}$ edge from $e_1$ to $e_2$ collapses into a self-loop on
$b_1$ of weight $1$, and the edges from $e_1$ to $e_3$ and from $e_2$ to $e_4$
collapse into an edge from $b_1$ to $b_3$ of weight $2$.

Our objective is to estimate the cardinality of a CQ $q$ on $\graph$ using
$\summary$, rather than $\graph$. To this end, we interpret $\summary$ using
the `possible worlds' semantics~\cite{DBLP:conf/sdm/LeFevreT10}: we assume that
$\summary$ represents with equal probability each graph $\graph'$ that
summarises as $\summary$---that is, if we replace the resources of $\graph'$
with buckets in the same way as when we constructed $\summary$ from $\graph$,
we obtain $\summary$. Thus, $\summary$ represents the input graph $\graph$,
but, due to the loss of information in summarisation, it represents other
graphs as well. In Figure~\ref{fig:summary}, in addition to $\graph$, summary
$\summary$ also represents graphs $\graph_1$ and $\graph_2$. We then estimate
the cardinality of $q$ on $\graph$ as the expected cardinality of $q$ over all
graphs represented by $\summary$. Instead of the assumptions mentioned in
Section~\ref{sec:introduction}, we thus use a consistent semantic
interpretation of $\summary$.

We next formalise our notion of a summary. We do not talk of a `summary of an
input graph $\graph$': a summary is a synopsis that represents a family of
graphs.

\begin{definition}\label{def:summary}\em
    A \emph{summary} ${\summary = \summarytuple}$ is a triple where $\sgraph$
    is an RDF graph called the \emph{summarisation graph}, ${\weightfunction :
    \sgraph \to \nat}$ is a \emph{weight function} assigning a positive
    natural number to each triple in $\sgraph$, and $\mu$ is a surjective
    \emph{summarisation function} from a finite set of resources $\dom{\mu}$ to
    $\res{\sgraph}$. The resources of ${\res{\sgraph}}$ are called
    \emph{buckets}.

    For $Z$ a term, an atom, or a set of atoms, $\mu(Z)$ is obtained from $Z$
    by replacing each resource ${\gres \in \res{Z} \cap \dom{\mu}}$ with
    ${\mu(\gres)}$; if $Z$ is a set, then $\mu(Z)$ is a set as well (i.e.,
    duplicates are removed). The $\summary$-\emph{size} of a bucket ${b \in
    \res{\sgraph}}$ is defined as ${\size{b} = |\setof{\gres \in \dom{\mu} \mid
    \mu(\gres) = b}|}$, and the $\summary$-\emph{size} of a triple
    ${\tuple{b_1,b_2,b_3} \in \res{\sgraph} \times \res{\sgraph} \times
    \res{\sgraph}}$ is defined as ${\size{\tuple{b_1,b_2,b_3}} = \size{b_1}
    \times \size{b_2} \times \size{b_3}}$.
\end{definition}

While buckets in histograms record value ranges, in our approach $\mu$
explicitly associates buckets with resources; this is needed as there is no
natural notion of a range of RDF resources. In Figure~\ref{fig:summary}, we
have ${\mu(e_1) = \mu(e_2) = b_1}$, ${\mu(c_1) = \mu(c_2) = b_2}$, ${\mu(e_3) =
\mu(e_4) = b_3}$, ${\mu(c_3) = \mu(c_4) = b_4}$, and $\mu$ is identity on all
remaining resources, all of which occur in $\graph$ as a predicate or the
object of a triple whose predicate is $\rdftype$ (such as $\mathit{owns}$,
$\rdftype$, or $\mathit{Van}$).

The size of a bucket $b$ is the number of resources mapped to $b$; in our
example, ${\size{b_i} = 2}$ for ${i \in \interval{1}{4}}$, but
${\size{\rdftype} = \size{\mathit{Van}} = 1}$. For ${\gtriple \in \graph}$ and
${\striple \in \sgraph}$ where ${\mu(\gtriple) = \striple}$, we say that
$\gtriple$ is \emph{summarised} as $\striple$, and that $\striple$ can be
\emph{expanded into} $\gtriple$. Clearly, $\striple$ can be expanded into at
most $\size{\striple}$ triples. In our example, we have
\begin{displaymath}
\begin{array}{@{}r@{\;}l@{\;}r@{\;}l@{}}
    \size{\tuple{b_1, \mathit{manages}, b_1}}   & = \size{b_1} \times \size{\mathit{manages}} \times \size{b_1}     & = 4, \\
    \size{\tuple{b_4, \rdftype, \mathit{Van}}}  & = \size{b_4} \times \size{\rdftype} \times \size{\mathit{Van}}    & = 2. \\
\end{array}
\end{displaymath}

Definition~\ref{def:semantics} formalises the `possible worlds' semantics of
$\summary$.

\begin{definition}\label{def:semantics}\em
    A summary ${\summary = \summarytuple}$ \emph{represents} a graph $\graph$
    if (i)~${\res{\graph} \subseteq \dom{\mu}}$, (ii)~${\sgraph =
    \mu(\graph)}$, and (iii)~for each ${\striple \in \sgraph}$, it is the case
    that ${\weight{\striple} = |\setof{\gtriple \in \graph \mid \mu(\gtriple) =
    \striple}|}$. Moreover, $\semantics$ is the set of all graphs that
    $\summary$ represents. Finally, summary $\summary$ is \emph{consistent} if
    ${\semantics \neq \emptyset}$.
\end{definition}

At most $\size{\striple}$ triples can be summarised as ${\striple \in
\sgraph}$, so ${\weight{\striple} \leq \size{\striple}}$ must hold for
$\summary$ to be consistent. Also, a graph in $\semantics$ is obtained by
expanding each ${\striple \in \sgraph}$ arbitrarily into $\weight{\striple}$
triples if ${\weight{\striple} \leq \size{\striple}}$ holds. Thus, we can check
the consistency of $\summary$ as in Proposition~\ref{prop:consistency-checking}.

\begin{proposition}\label{prop:consistency-checking}
    A summary ${\summary = \summarytuple}$ is consistent if and only if
    ${\weight{\striple} \leq \size{\striple}}$ holds for each ${\striple \in
    \sgraph}$.
\end{proposition}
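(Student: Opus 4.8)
The statement is a biconditional, so the plan is to establish the two implications separately, relying throughout on the observation that the summarisation function $\mu$ partitions the set of all triples over $\dom{\mu}$ into preimage classes: every triple ${\gtriple = \tuple{s,p,o}}$ with ${s,p,o \in \dom{\mu}}$ summarises to the unique triple ${\mu(\gtriple) = \tuple{\mu(s),\mu(p),\mu(o)}}$. For a fixed ${\striple = \tuple{b_1,b_2,b_3} \in \sgraph}$, the triples summarising to $\striple$ are exactly those whose subject, predicate, and object are drawn independently from the ${\size{b_1}}$, ${\size{b_2}}$, and ${\size{b_3}}$ resources that $\mu$ maps to $b_1$, $b_2$, and $b_3$, respectively; hence there are precisely ${\size{b_1} \times \size{b_2} \times \size{b_3} = \size{\striple}}$ of them. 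This single counting fact drives both directions.

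For the forward direction, I would assume ${\semantics \neq \emptyset}$ and fix some ${\graph \in \semantics}$. Condition~(i) of Definition~\ref{def:semantics} gives ${\res{\graph} \subseteq \dom{\mu}}$, so every triple of $\graph$ summarising to a given ${\striple \in \sgraph}$ lies in the preimage class of $\striple$ described above, a set of size $\size{\striple}$. Condition~(iii) then identifies $\weight{\striple}$ with the number of such triples, which cannot exceed the size of the class; therefore ${\weight{\striple} \leq \size{\striple}}$, as required.

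For the backward direction, I would assume ${\weight{\striple} \leq \size{\striple}}$ for every ${\striple \in \sgraph}$ and build a witness ${\graph \in \semantics}$ explicitly. Since the weight function takes positive values, we have ${1 \leq \weight{\striple} \leq \size{\striple}}$, so from the preimage class of each ${\striple \in \sgraph}$---which has exactly $\size{\striple}$ elements---I can choose $\weight{\striple}$ distinct triples; let $\graph$ be the union of these choices over all triples of $\sgraph$. Because the preimage classes are pairwise disjoint, the triples of $\graph$ summarising to a given $\striple$ are precisely the $\weight{\striple}$ selected for it, which secures condition~(iii); positivity of $\weight{\striple}$ ensures every ${\striple \in \sgraph}$ is covered, and no triple of $\graph$ summarises outside $\sgraph$, so ${\mu(\graph) = \sgraph}$ gives~(ii); finally, all chosen triples have components in $\dom{\mu}$, giving~(i). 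Hence ${\graph \in \semantics}$ and $\summary$ is consistent.

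The argument involves no real obstacle---both directions reduce to the same elementary count---so the only point demanding care is the bookkeeping in the backward construction: one must verify that all three conditions of Definition~\ref{def:semantics} hold simultaneously for the assembled graph, and in particular that the disjointness of the preimage classes prevents the selections made for different summary triples from interfering with one another's weights.
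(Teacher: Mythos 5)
Your argument is correct and is essentially the paper's own proof: your ``preimage classes'' are exactly the sets $\inv{\mu}(\striple)$ the paper introduces, the forward direction is the same counting observation, and the backward direction is the same witness construction by selecting $\weight{\striple}$ triples from each class. The only cosmetic difference is that you verify conditions (i)--(iii) of Definition~\ref{def:semantics} explicitly for the assembled graph, whereas the paper folds this into a preliminary restatement of membership in $\semantics$ (which it also reuses to count $|\semantics|$, a fact not needed for this proposition).
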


We now estimate the cardinality of a CQ $q$ as the expected cardinality of $q$
over all graphs represented by $\summary$. We also introduce the related,
natural notion of cardinality variance.

\begin{definition}\label{def:expect-var}\em
    The \emph{expectation} $\expected{q}$ and the \emph{variance}
    $\variance{q}$ of the cardinality of a CQ $q$ on a summary $\summary$ are
    defined as
    \begin{displaymath} 
        \expected{q} = \sum_{\graph \in \semantics}\frac{|\ans{q}{\graph}|}{|\semantics|} \;\; \text{and} \;\; \variance{q} = \sum_{\graph \in \semantics}\frac{(|\ans{q}{\graph}| - \expected{q})^2}{|\semantics|}.
    \end{displaymath}
\end{definition} 

Please note that $\expected{q}$ is not necessarily a natural number. For
example, if $q$ is a variable-free query, then $|\ans{q}{\graph}|$ is equal to
the number of graphs in $\semantics$ that contain $q$; since
${|\ans{q}{\graph}| \leq |\semantics|}$ is always satisfied, $\expected{q}$ is
a rational number satisfying ${0 \leq \expected{q} \leq 1}$.

Using $\expected{q}$ and $\variance{q}$, we can bound from above the
probability $\prob{\qerror \geq \varepsilon}$ that the q-error of the estimate
exceeds a given amount $\varepsilon$ as shown in Theorem~\ref{th:qerrorbound}.
Intuitively, for a graph $\graph'$ chosen from $\semantics$ uniformly at
random, $\prob{\qerror \geq \varepsilon}$ is the probability that
$\expected{q}$ over- or underestimates $|\ans{q}{\graph'}|$ by a factor larger
than $\varepsilon$; we can equivalently express this by saying that
${\prob{\qerror \geq \varepsilon} \cdot |\semantics|}$ is the number of such
graphs $G'$. We show in Section~\ref{sec:evaluation:bounds} that $\prob{\qerror
\geq \varepsilon}$ is often small even for $\varepsilon$ as low as $10$, so in
such cases there is good chance that the cardinality of $q$ on our input graph
is close to $\expected{q}$.

\begin{restatable}{theorem}{qerrorbound}\label{th:qerrorbound}
    Let $q$ be a CQ and let $\summary$ be a summary such that ${\expected{q}
    \geq 1}$. Then, for each ${\varepsilon > 1}$ and for $\qerror$ the q-error
    of estimating the cardinality of $q$ as $\expected{q}$, we have
    \begin{displaymath}
        \prob{\qerror \geq \varepsilon} \leq \left(\frac{\varepsilon \cdot \std{q}}{(\varepsilon - 1) \cdot \expected{q}}\right)^2.
    \end{displaymath}
    If ${\varepsilon \geq \expected{q}}$ holds additionally, then the numerator
    ${\varepsilon \cdot \std{q}}$ on the right-hand side of the inequality can
    be replaced with $\std{q}$.
\end{restatable}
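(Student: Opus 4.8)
The plan is to view the expectation and variance as the mean and variance of a single random variable and then apply Chebyshev's inequality. Concretely, I would turn $\semantics$ into a uniform probability space and let $X$ denote the random variable that maps a graph ${\graph \in \semantics}$ to its cardinality ${|\ans{q}{\graph}|}$. By Definition~\ref{def:expect-var}, $X$ then has mean ${\expected{q}}$ and variance ${\variance{q}}$, so ${\std{q} = \sqrt{\variance{q}}}$ is its standard deviation. Since ${\prob{\qerror \geq \varepsilon} \cdot |\semantics|}$ counts exactly those graphs whose cardinality is badly estimated, it suffices to bound the probability that $X$ strays far from its mean.

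First I would rewrite the bad event in terms of $X$. Because ${\expected{q} \geq 1}$, the clipped estimate is ${E' = \expected{q}}$, so by the definition of the q-error we have ${\qerror \geq \varepsilon}$ exactly when $X$ \emph{overshoots}, i.e.\ ${X \geq \varepsilon \cdot \expected{q}}$, or $X$ \emph{undershoots}, i.e.\ ${\max(X,1) \leq \expected{q}/\varepsilon}$ (and since ${\varepsilon > 1}$ these two events are disjoint). I would then translate each event into a one-sided deviation of $X$ from its mean: overshooting gives ${X - \expected{q} \geq (\varepsilon - 1)\,\expected{q}}$, whereas undershooting gives ${X \leq \expected{q}/\varepsilon}$ and hence ${\expected{q} - X \geq \expected{q}\,(\varepsilon-1)/\varepsilon}$.

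For the general bound, the key observation is that the undershoot threshold ${\expected{q}(\varepsilon-1)/\varepsilon}$ is the smaller of the two (since ${\varepsilon > 1}$), so the whole bad event is contained in ${\{\,|X - \expected{q}| \geq \expected{q}(\varepsilon-1)/\varepsilon\,\}}$. Applying Chebyshev's inequality ${\prob{|X - \expected{q}| \geq k} \leq \variance{q}/k^2}$ with ${k = \expected{q}(\varepsilon-1)/\varepsilon}$ and substituting ${\variance{q} = \std{q}^2}$ yields precisely ${(\varepsilon\,\std{q}/((\varepsilon-1)\expected{q}))^2}$, which is the first claimed inequality.

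For the sharper bound I would use the extra hypothesis ${\varepsilon \geq \expected{q}}$ to eliminate the undershoot branch: then ${\expected{q}/\varepsilon \leq 1 \leq \max(X,1)}$, so the undershoot event ${\max(X,1) \leq \expected{q}/\varepsilon}$ cannot occur (the degenerate equality ${\varepsilon = \expected{q}}$ needs a separate, routine check), leaving only the overshoot event with its larger threshold ${(\varepsilon-1)\expected{q}}$. Chebyshev with ${k = (\varepsilon-1)\expected{q}}$ then gives ${(\std{q}/((\varepsilon-1)\expected{q}))^2}$, i.e.\ the numerator drops from ${\varepsilon \cdot \std{q}}$ to ${\std{q}}$. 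I expect the only real subtlety to be the bookkeeping around the $\max$ in the q-error and the two-sided split into overshoot/undershoot events---in particular, recognising that the undershoot branch forces the smaller deviation threshold and hence governs the weaker general bound, and that the hypothesis ${\varepsilon \geq \expected{q}}$ is exactly what suppresses that branch.
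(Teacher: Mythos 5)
Your proposal is correct and takes essentially the same route as the paper: both arguments reduce to Chebyshev's inequality for the answer-count random variable on the uniform space $\semantics$, using the deviation radius $(\varepsilon-1)\cdot\expected{q}/\varepsilon$ (forced by the undershoot branch) for the general bound and the larger radius $(\varepsilon-1)\cdot\expected{q}$ once $\varepsilon \geq \expected{q}$ suppresses that branch. The only cosmetic differences are that the paper works with the complementary event $\qerror < \varepsilon$ and splits into the cases $\expected{q} > \varepsilon$ and $\varepsilon \geq \expected{q}$ (observing that the sharper bound implies the weaker one), and it treats the boundary case $\varepsilon = \expected{q}$ exactly as loosely as you do.
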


\section{Computing the Expectation}\label{sec:estimation}

Given a CQ $q$ and a summary ${\summary = \summarytuple}$, we can compute
$\expected{q}$ by iterating over all graphs in $\semantics$, but this is
impractical since the number of such graphs is exponential in $|\sgraph|$. In
this section, we present formulas that compute $\expected{q}$ in time
polynomial in $|\sgraph|$.

\subsection{Intuition}\label{sec:estimation:intuition}

Our formulas are quite complex, so we first discuss the intuitions using the
summary $\summary$ for Figure~\ref{fig:summary} and several example queries.

We first explain how to compute $\expected{q}$ for variable-free queries, which
is the basic element of our approach. Let ${q_1 = \setof{\gtriple_1,
\gtriple_2}}$ where ${\gtriple_1 = \tuple{e_1, \mathit{manages}, e_3}}$ and
${\gtriple_2 = \tuple{e_3, \mathit{owns}, c_3}}$. Since $q_1$ is variable-free,
for each ${\graph' \in \semantics}$, the cardinality ${\ans{q_1}{\graph'}}$ of
$q_1$ on $\graph'$ is one if ${q_1 \subseteq \graph'}$, and it is zero
otherwise; thus, the numerator of $\expected{q_1}$ is equal to the number of
graphs in $\semantics$ that contain all atoms of $q_1$ (i.e., that contain
$q_1$ as a subset). Triples $\gtriple_1$ and $\gtriple_2$ of the query are
summarised as ${\striple_1 = \tuple{b_1, \mathit{manages}, b_3}}$ and
${\striple_2 = \tuple{b_3, \mathit{owns}, b_4}}$; for readability, let ${w_i =
\weight{\striple_i}}$ and ${s_i = \size{\striple_i}}$ for ${i \in
\setof{1,2}}$. Now every expansion of $\striple_1$ is determined by a choice of
$w_1$ triples from $s_1$ possibilities, so there are a total of
${\binom{s_1}{w_1}}$ expansions of $\striple_1$. Moreover, each expansion of
$\striple_1$ that contains $\gtriple_1$ is obtained by choosing $\gtriple_1$
and the remaining ${w_1-1}$ triples from ${s_1-1}$ possibilities; thus, a total
of ${\binom{s_1-1}{w_1-1}}$ expansions of $\striple_1$ contain $\gtriple_1$.
Analogously, $\striple_2$ has ${\binom{s_2}{w_2}}$ expansions, of which
${\binom{s_2-1}{w_2-1}}$ contain $\gtriple_2$. Now a key observation is that
$\gtriple_1$ and $\gtriple_2$ are summarised as distinct triples; consequently,
the expansions of $\striple_1$ and $\striple_2$ are independent, and so the
total number of expansions of $\striple_1$ and $\striple_2$, and the number of
expansions containing both $\gtriple_1$ and $\gtriple_2$, are given by the
product of the above factors. Finally, the expansions of each triple ${\striple
\in \sgraph \setminus \setof{\striple_1,\striple_2}}$ are independent from and
thus irrelevant to $q_1$, so they do not affect $\expected{q_1}$. Hence, we get
\begin{displaymath}
    \expected{q_1}  = \frac{\binom{s_1-1}{w_1-1}}{\binom{s_1}{w_1}} \cdot \frac{\binom{s_2-1}{w_2-1}}{\binom{s_2}{w_2}}
                    = \frac{w_1}{s_1} \cdot \frac{w_2}{s_2}
                    = \frac{2}{2 \cdot 2} \cdot \frac{2}{2 \cdot 2}
                    = 0.25.
\end{displaymath}

Next, we demonstrate how to handle queries with variables. Let ${q_2 = \setof{
\tuple{x, \mathit{manages}, y}, \tuple{y, \mathit{owns}, z} }}$. For each
${\graph' \in \semantics}$ and each ${\pi \in \ans{q_2}{\graph'}}$, query
$\pi(q_2)$ does not contain variables, so we can compute $\expected{\pi(q_2)}$
just as for $q_1$. We can thus compute $\expected{q_2}$ by summing the
contribution of each such $\pi$, but enumerating all $\pi$ would be
inefficient. However, for each ${\pi \in \ans{q_2}{\graph'}}$, there exists
${\tau \in \ans{\mu(q_2)}{\sgraph}}$ such that ${\mu(\pi(x)) = \tau(x)}$ for
all ${x \in \dom{\pi}}$; we say that $\tau$ \emph{expands into} $\pi$. Thus,
for each $\tau$, we can compute $\expected{\pi(q_2)}$ for just one prototypical
substitution $\pi$ that $\tau$ expands into, and multiply $\expected{\pi(q_2)}$
by the number of the expansions of $\tau$. On our example, evaluating
$\mu(q_2)$ on $\sgraph$ produces substitutions ${\tau_1 = \setof{x \mapsto b_1,
y \mapsto b_3, z \mapsto b_4}}$, ${\tau_2 = \setof{x \mapsto b_1, y \mapsto
b_1, z \mapsto b_2}}$, \,\,\, ${\tau_3 = \setof{x \mapsto b_1, y \mapsto b_3, z
\mapsto b_2}}$. For $\tau_1$, we can select, say, ${\pi = \setof{ x \mapsto
e_1, y \mapsto e_3, z \mapsto c_3 }}$ as a prototypical expansion of $\tau_1$;
then ${\pi(q_2) = q_1}$, and so ${\expected{\pi(q_2)} = 0.25}$. Moreover, since
$\tau_1$ maps both atoms of $q_2$ to distinct triples in $\sgraph$, each
expansion $\pi$ of $\tau_1$ does the same; thus, we can obtain all such $\pi$
by expanding each variable in $\dom{\tau_1}$ independently, and so there are
${\size{\tau_1(x)} \cdot \size{\tau_1(y)} \cdot \size{\tau_1(z)} = 2 \cdot 2
\cdot 2 = 8}$ expansions of $\tau_1$. Thus, $\tau_1$ contributes to
$\expected{q_2}$ by ${8 \cdot 0.25 = 2}$. We compute the contributions of
$\tau_2$ and $\tau_3$ analogously as ${2^3 \cdot \frac{1 \cdot 1}{4 \cdot 4} =
0.5}$ and ${2^3 \cdot \frac{2 \cdot 1}{4 \cdot 4} = 1}$. By summing all
contributions, we get ${\expected{q_2} = 2 + 0.5 + 1 = 3.5}$.

Generalising to any CQ $q$, formula \eqref{eq:freeexpect} sums the contribution
of each ${\tau \in \ans{\mu(q)}{\sgraph}}$: the first factor counts the
expansions of $\tau$ to a prototypical $\pi$, and the second one counts the
contribution of $\pi$.
\begin{equation}
    \expected{q} = \sum_{\tau \in \ans{\mu(q)}{\sgraph}} \ \prod_{x \in \var{q}} \size{\tau(x)} \cdot \prod_{\qatom \in q} \frac{\weight{\tau(\mu(\qatom))}}{\size{\tau(\mu(\qatom))}} \label{eq:freeexpect}
\end{equation}
However, formula \eqref{eq:freeexpect} is correct only if each answer to
$\mu(q)$ in $H$ maps all atoms of $\mu(q)$ to distinct triples in $\sgraph$.
Consider a query ${q_3 = \setof{\tuple{e_3, \mathit{owns}, x}, \tuple{e_3,
\mathit{owns}, y}}}$, so $\ans{\mu(q_3)}{\sgraph}$ contains answers
\begin{displaymath}
\begin{array}{@{}r@{\;}lr@{\;}l@{}}
    \tau_1  & = \setof{x \mapsto b_2, y \mapsto b_4},   & \tau_2    & = \setof{x \mapsto b_4, y \mapsto b_2}, \\
    \tau_3  & = \setof{x \mapsto b_4, y \mapsto b_4},   & \tau_4    & = \setof{x \mapsto b_2, y \mapsto b_2}.
\end{array}
\end{displaymath}
Answers $\tau_1$ and $\tau_2$ map the atoms of $\mu(q_3)$ to distinct triples
in $\sgraph$, so their contributions are given by \eqref{eq:freeexpect} as ${2
\cdot 2 \cdot \frac{2}{2 \cdot 2} \cdot \frac{1}{2 \cdot 2} = 0.5}$. In
contrast, $\tau_3$ maps both atoms of $\mu(q_3)$ to ${\striple_3 = \tuple{b_3,
\mathit{owns}, b_4}}$, so $\tau_3$ expands into two kinds of $\pi$. First,
$\pi$ can map both atoms of $q_3$ to the same triple (e.g., ${\tuple{e_3,
\mathit{owns}, c_4} \in \graph_2}$). Then $\pi(q_3)$ contains just one atom, so
$\expected{\pi(q_3)} = \myfrac{\weight{\striple_3}}{\size{\striple_3}} =
\myfrac{2}{4} = 0.5$; also, $x$ and $y$ are mapped to the same value in each
such $\pi$, so the number of such $\pi$ is ${N_1 = \size{\tau_3(x)} = 2}$.
Second, $\pi$ can map the atoms of $q_3$ to distinct triples. Then, each
expansion of $\striple_3$ containing $\pi(q_3)$ corresponds to a choice of
${\weight{\striple_3}-2}$ triples out of ${\size{\striple_3}-2}$ possibilities,
so
\begin{displaymath}
    \expected{\pi(q_3)} = \frac{\binom{\size{\striple_3}-2}{\weight{\striple_3}-2}}{\binom{\size{\striple_3}}{\weight{\striple_3}}} = \frac{(\weight{\striple_3})_2}{(\size{\striple_3})_2} = \frac{(2)_2}{(4)_2} = \frac{2 \cdot 1}{4 \cdot 3} \approx 0.167.
\end{displaymath}
Moreover, there are ${N_2 = \size{\tau_3(x)} \cdot \size{\tau_3(y)} = 2 \cdot 2
= 4}$ expansions of $\tau_3$, and ${N_1 = 2}$ of these map $x$ and $y$ to the
same resource; thus, ${N_2 - N_1 = 2}$ expansions of $\tau_3$ map $x$ and $y$
to distinct resources; and so $\tau_3$ contributes to $\expected{q_3}$ by ${N_1
\cdot 0.5 + (N_2 - N_1) \cdot 0.167 = 1.33}$. Finally, answer $\tau_4$ maps both atoms
of $\mu(q_3)$ to the same triple ${\striple_4 = \tuple{b_3, \mathit{owns},
b_2}}$, but no expansion of $\tau_4$ contains two triples since
${\weight{\striple_4} = 1}$; thus, we consider only ${\size{\tau_4(x)} = 2}$
expansions $\pi$ of $\tau_4$, each mapping both atoms of $q_3$ to the same
triple and thus contributing by ${\expected{\pi(q_3)} = \myfrac{1}{2} = 0.5}$;
hence, $\tau_4$ contributes to $\expected{q_3}$ by ${2 \cdot 0.5 = 1}$.
Finally, we get ${\expected{q_3} \approx 0.5 + 0.5 + 1.33 + 1 = 3.33}$.

\subsection{Formalisation}\label{sec:estimation:formalisation}

For this section, we fix a consistent summary ${\summary = \summarytuple}$, a
CQ $q$ with ${\res{q} \subseteq \dom{\mu}}$, and an arbitrary total order
$\leq$ on terms where ${\gres \leq x}$ holds for each resource $\gres$ and each
variable $x$.

We first introduce a notion of a partition of $q$. Intuitively, a partition
groups the atoms of $q$ into disjoint sets, and it describes the `type' of
answer to $q$ on graphs in $\semantics$ where all atoms from each such group
are mapped to the same triple.

\begin{definition}\label{def:partition}\em
    A \emph{partition} of $q$ is a set $P$ of mutually disjoint nonempty
    subsets $u_i$ of $q$ such that ${q = \bigcup_{u_i \in P} u_i}$.
    
    The \emph{term graph} $\mathcal{G}_P$ for $P$ contains all terms from
    $\term{q}$ as vertices, and undirected edges between terms $s_1$ and $s_2$,
    $p_1$ and $p_2$, and $o_1$ and $o_2$ for each ${u \in P}$ and all atoms
    ${\tuple{s_1,p_1,o_1}}$ and ${\tuple{s_2,p_2,o_2}}$ in $u$. Partition $P$
    is \emph{unifiable} if no two distinct resources from $\res{q}$ are
    reachable in $\mathcal{G}_P$, in which case substitution $\psubst_P$ maps
    each variable ${x \in \var{q}}$ to the $\leq$-least term reachable from $x$
    in $\mathcal{G}_P$.
    
    The \emph{partition base} $\pbase$ \emph{for} $q$ is the set of all
    unifiable partitions of $q$. The partial order $\preceq$ on $\pbase$ is
    defined so that, for ${P,P' \in \pbase}$, relationship ${P \preceq P'}$
    holds iff, for each ${u \in P}$, there exists ${u' \in P'}$ with ${u
    \subseteq u'}$.

    For ${P,P' \in \pbase}$ with ${P \preceq P'}$, a \emph{chain from} $P$
    \emph{to} $P'$ \emph{of length} ${\ell \geq 0}$ is a sequence ${P = P_0,
    \ldots, P_\ell = P'}$ of partitions from $\pbase$ where ${P_{i-1} \prec
    P_i}$ holds for ${i \in \interval{1}{\ell}}$; moreover, $\Ceven{P}{P'}$ and
    $\Codd{P}{P'}$ are the numbers of chains from $P$ to $P'$ of even and odd
    length, respectively; finally, ${\dcoeff{P}{P'} = \Ceven{P}{P'} -
    \Codd{P}{P'}}$.
\end{definition}

Note that $\emptyset$ is the only unifying partition of ${q = \emptyset}$.
Moreover, ${\Ceven{P}{P} = 1}$ and ${\Codd{P}{P} = 0}$ for each ${P \in
\pbase}$, so ${\dcoeff{P}{P} = 1}$.

Let ${q_3 = \{ \qatom_1, \qatom_2 \}}$ for ${\qatom_1 = \tuple{e_3,
\mathit{owns}, x}}$ and ${\qatom_2 = \tuple{e_3, \mathit{owns}, y}}$ be as in
Section~\ref{sec:estimation:intuition}. Then, ${P_1 = \setof{ \setof{ \qatom_1
}, \setof{ \qatom_2 } }}$ and ${P_2 = \setof{ \setof{ \qatom_1, \qatom_2 } }}$
are the unifiable partitions of $q_3$: partition $P_1$ represents the answers
to $q_3$ on graphs in $\semantics$ that map $\qatom_1$ and $\qatom_2$ to
distinct triples, and $P_2$ represents the answers that map $\qatom_1$ and
$\qatom_2$ to the same triple; the latter is captured by ${\psubst_{P_2} =
\setof{ y \mapsto x}}$ (assuming ${x \leq y}$). Also, ${P_1 \prec P_2}$ and
${\dcoeff{P_1}{P_2} = -1}$ show that the answers to $q_3$ satisfying $P_2$ are
exactly the answers to $q_3$ minus the answers to $q_3$ that satisfy $P_1$.

For an answer $\tau$ to $\mu(q)$ on $\sgraph$ and an appropriate partition $P$,
we define coefficients $\coeff{\tau}{P}$ and $\factor{\tau}{P}$ as follows.

\begin{definition}\label{def:factor}\em
    Let $\tau$ be an answer to $\mu(q)$ on $\sgraph$. A partition ${P \in
    \pbase}$ is \emph{satisfied} by $\tau$ if, for each ${x \in \var{q}}$,
    (i)~${\psubst_P(x) \in \var{q}}$ implies ${\tau(x) = \tau(\psubst_P(x))}$,
    and (ii)~ ${\psubst_P(x) \in \res{q}}$ implies ${\tau(x) =
    \mu(\psubst_P(x))}$. Set $\pbase_\tau$ contains all partitions of $\pbase$
    satisfied by $\tau$. For ${P \in \pbase_\tau}$, let
    \begin{align*}
        \coeff{\tau}{P}     & = \prod_{x \in\vrng{\psubst_P}} \size{\tau(x)} \qquad\qquad \text{and} \\[1ex]
        \factor{\tau}{P}    & = \begin{cases}
                                    \prod\limits_{\striple \in \tau(\mu(q))} \frac{(\weight{\striple})_{\pcount{\striple}}}{(\size{\striple})_{\pcount{\striple}}}  & \text{if } \pcount{\striple} \leq \weight{\striple} ~ \forall \striple \in \tau(\mu(q)) \\[1ex]
                                    0                                                                                                                               & \text{otherwise}
                                 \end{cases} \\
                            & \hspace{1.5cm} \text{where } \pcount{\striple} = |\setof{u \in P \mid \tau(\mu(u)) = \setof{\striple}}|.
    \end{align*}
\end{definition}

In our example, ${\psubst_{P_2}(x) = \psubst_{P_2}(y)}$ and ${\tau_1(x) \neq
\tau_1(y)}$ imply that $P_2$ is not satisfied by $\tau_1$, which intuitively
means that no expansion of $\tau_1$ can match the atoms of $q_3$ as required by
$P_2$. Moreover, ${\coeff{\tau_3}{P_1} = \size{\tau_3(x)} \cdot
\size{\tau_3(y)} = 4}$ and ${\coeff{\tau_3}{P_2} = \size{\tau_3(x)} = 2}$ count
the expansions of $\tau_3$ that comply with $\psubst_{P_1}$ and
$\psubst_{P_2}$, respectively; thus, the number of expansions of $\tau_3$
complying with just $P_1$ is
\begin{displaymath}
    \dcoeff{P_1}{P_1} \cdot \coeff{\tau_3}{P_1} + \dcoeff{P_1}{P_2} \cdot \coeff{\tau_3}{P_2} = 1 \cdot 4 - 1 \cdot 2 = 2.
\end{displaymath}
The contribution of each expansion of $\tau_3$ matching the atoms of $q_3$ to
distinct triples is $\factor{\tau_3}{P_1}$: set $\tau_3(\mu(u))$ contains just
$\striple_3$ for each ${u \in P_1}$, so ${\pcount{\striple_3} = 2}$, and
${\factor{\tau_3}{P_1} \approx 0.167}$.

Thus, to compute $\expected{q}$, we evaluate $\mu(q)$ in $\sgraph$ and, for
each answer $\tau$, we combine $\factor{\tau}{P}$ and $\coeff{\tau}{P'}$ as in
Theorem~\ref{th:expect}. Enumerating all $\tau$ corresponds to query answering
and is thus polynomial in $|\sgraph|$, and the size of $\pbase_\tau$ does not
depend on $\sgraph$; hence $\expected{q}$ can be computed in time polynomial in
$|\sgraph|$. Moreover, Theorem~\ref{th:variance} shows how to compute the
variance of the cardinality of $q$.

\begin{restatable}{theorem}{expect}\label{th:expect}
    The following identity holds:
    \begin{align*} 
        \expected{q} = \sum_{\tau \in \ans{\mu(q)}{\sgraph}} \ \sum_{P \in \pbase_\tau} \factor{\tau}{P} \cdot \sum_{\substack{P' \in \pbase_\tau, P \preceq P'}} \dcoeff{P}{P'} \cdot \coeff{\tau}{P'}.
    \end{align*} 
\end{restatable}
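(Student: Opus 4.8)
The plan is to compute $\expected{q}$ directly from its definition by viewing the numerator $\sum_{\graph'\in\semantics}|\ans{q}{\graph'}|$ as a count of pairs $(\graph',\pi)$ with $\graph'\in\semantics$ and $\pi\in\ans{q}{\graph'}$, and then reorganising this count. Every such $\pi$ induces an answer $\tau = \mu\circ\pi$ to $\mu(q)$ on $\sgraph$ (i.e.\ $\tau(x)=\mu(\pi(x))$), since $\mu(\pi(q)) = \tau(\mu(q))\subseteq\mu(\graph')=\sgraph$; we say $\tau$ expands into $\pi$. Grouping the pairs by $\tau$ gives
\[
  \expected{q} = \frac{1}{|\semantics|}\sum_{\tau\in\ans{\mu(q)}{\sgraph}}\ \sum_{\pi\,:\,\mu\circ\pi=\tau}\ \bigl|\{\graph'\in\semantics : \pi(q)\subseteq\graph'\}\bigr|,
\]
so it remains, for each fixed $\tau$, to evaluate the inner sum and match it against $\sum_{P\in\pbase_\tau}\factor{\tau}{P}\sum_{P'\,:\,P\preceq P'}\dcoeff{P}{P'}\coeff{\tau}{P'}$.

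Second, I would evaluate the innermost count for a single expansion $\pi$. A graph in $\semantics$ is built by independently expanding each summary triple $\striple$ into exactly $\weight{\striple}$ of its $\size{\striple}$ candidate triples, so $|\semantics| = \prod_{\striple\in\sgraph}\binom{\size{\striple}}{\weight{\striple}}$, and the graphs containing the fixed triple set $\pi(q)$ are counted by the same product with the triples of $\pi(q)$ forced into the chosen set. Grouping $\pi(q)$ by summary triple, the number of distinct triples forced onto a given $\striple$ equals the number of blocks of the partition $P_\pi$ (atoms of $q$ grouped by equal $\pi$-image) that $\tau$ sends to $\{\striple\}$, which is exactly $\pcount{\striple}$ for $P=P_\pi$. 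Dividing the count by $|\semantics|$ and using $\binom{\size{\striple}-k}{\weight{\striple}-k}/\binom{\size{\striple}}{\weight{\striple}} = (\weight{\striple})_{k}/(\size{\striple})_{k}$ with $k=\pcount{\striple}$ collapses each factor to $(\weight{\striple})_{\pcount{\striple}}/(\size{\striple})_{\pcount{\striple}}$, while the degenerate case $\pcount{\striple}>\weight{\striple}$ forces a zero binomial; this is precisely $\factor{\tau}{P_\pi}$. Here I would also check that $P_\pi$ is always unifiable and satisfied by $\tau$, so $P_\pi\in\pbase_\tau$. The contribution of $\tau$ to $\expected{q}$ is therefore $\sum_{P\in\pbase_\tau} f(P)\,\factor{\tau}{P}$, where $f(P)$ is the number of expansions $\pi$ of $\tau$ with $P_\pi=P$.

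Third, I would recover $f(P)$ by Möbius inversion over the poset $(\pbase_\tau,\preceq)$. The coefficient $\coeff{\tau}{P'}=\prod_{x\in\vrng{\psubst_{P'}}}\size{\tau(x)}$ counts exactly the expansions $\pi$ of $\tau$ that respect the unification $\psubst_{P'}$: one freely assigns each representative variable a resource in its bucket $\tau(x)$ and propagates, and the satisfaction of $P'$ by $\tau$ guarantees $\mu\circ\pi=\tau$. Respecting $\psubst_{P'}$ is equivalent to each block of $P'$ mapping to a single triple, i.e.\ to $P'\preceq P_\pi$; hence $\coeff{\tau}{P'} = \sum_{P\,:\,P'\preceq P} f(P)$, the zeta transform of $f$. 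Möbius inversion then yields $f(P) = \sum_{P'\,:\,P\preceq P'}\dcoeff{P}{P'}\coeff{\tau}{P'}$, once $\dcoeff{P}{P'}$ is identified with the Möbius function of the interval $[P,P']$: by Hall's theorem this function equals the alternating chain count $\Ceven{P}{P'}-\Codd{P}{P'}=\dcoeff{P}{P'}$. Substituting this expression for $f(P)$ into $\sum_{P}f(P)\,\factor{\tau}{P}$ and summing over $\tau$ gives the claimed identity.

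The main obstacle I anticipate is the order-theoretic bookkeeping, not the counting. Concretely, I must prove that $\pbase_\tau$ is downward closed under $\preceq$ (if $P'\in\pbase_\tau$ and $P\preceq P'$ then $P\in\pbase_\tau$, because $\tau$ is constant on the connected components of $\mathcal{G}_{P'}$ and a fortiori on the finer components of $\mathcal{G}_P$). This closure is what makes every chain from $P$ to $P'$ in $\pbase$ stay inside $\pbase_\tau$, so the interval $[P,P']$ is the same in $\pbase$ and in $\pbase_\tau$ and the chain-based coefficient $\dcoeff{P}{P'}$ (defined via chains in $\pbase$) genuinely equals the Möbius function computed within $\pbase_\tau$. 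The delicate points are thus the precise identification of $\coeff{\tau}{\cdot}$ and $\factor{\tau}{\cdot}$ with the objects they count, and the Hall-theorem identification together with this interval subtlety; the two counting reductions are routine once the correspondence $\pi\mapsto(\tau,P_\pi)$ is set up correctly.
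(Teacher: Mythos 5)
Your proposal is correct and follows essentially the same route as the paper's proof: group the answers by the summary-level answer $\tau$ and by the partition $P_\pi$ induced by each expansion $\pi$, identify the per-expansion contribution with $\factor{\tau}{P_\pi}$ via the binomial/falling-factorial count of graphs containing $\pi(q)$, and recover the number of expansions inducing each $P$ by inclusion--exclusion over $(\pbase_\tau,\preceq)$ using $\coeff{\tau}{P'}=|\setof{\pi \mid P'\preceq P_\pi}|$. The only substantive difference is presentational: where you invoke Philip Hall's theorem and standard M\"obius inversion (correctly flagging the downward closure of $\pbase_\tau$ needed so that chains counted in $\pbase$ coincide with chains in $\pbase_\tau$), the paper reproves exactly these facts from scratch in Lemma~\ref{lemma:kappa} and Proposition~\ref{prop:count-expansions}.
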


\begin{restatable}{theorem}{variance}\label{th:variance}
    Identity ${\std{q}^2 = \expected{q \cup \rho(q)} - \expected{q}^2}$ holds,
    where $\rho$ is a substitution mapping each variable in $\var{q}$ to a
    fresh variable.
\end{restatable}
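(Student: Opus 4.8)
The plan is to reduce the statement to the elementary variance identity $\std{q}^2 = E[X^2] - (E[X])^2$ for the random variable $X = |\ans{q}{\graph}|$ obtained by drawing $\graph$ uniformly from $\semantics$, and then to identify the second moment $E[X^2]$ with $\expected{q \cup \rho(q)}$.

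First I would expand the square in the definition of $\variance{q}$ from Definition~\ref{def:expect-var}. Starting from $\variance{q} = \frac{1}{|\semantics|}\sum_{\graph \in \semantics}(|\ans{q}{\graph}| - \expected{q})^2$ and distributing, the cross term contributes $-2\expected{q} \cdot \frac{1}{|\semantics|}\sum_{\graph \in \semantics}|\ans{q}{\graph}| = -2\expected{q}^2$ by the definition of $\expected{q}$, and the constant term contributes $\expected{q}^2$; hence $\std{q}^2 = \variance{q} = \big(\frac{1}{|\semantics|}\sum_{\graph \in \semantics}|\ans{q}{\graph}|^2\big) - \expected{q}^2$. This step is routine and uses only $\frac{1}{|\semantics|}\sum_{\graph \in \semantics}1 = 1$.

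The crux is to show $|\ans{q \cup \rho(q)}{\graph}| = |\ans{q}{\graph}|^2$ for every $\graph \in \semantics$. Since $\rho$ renames each variable of $q$ to a fresh variable, the atoms of $q$ and of $\rho(q)$ share no variables, so $\var{q \cup \rho(q)}$ is the disjoint union of $\var{q}$ and $\vrng{\rho}$; moreover $\res{q \cup \rho(q)} = \res{q} \subseteq \dom{\mu}$ is unchanged, so $q \cup \rho(q)$ is a legitimate input to the expectation. Every answer $\sigma$ to $q \cup \rho(q)$ has domain $\var{q} \cup \vrng{\rho}$ and splits uniquely into its restrictions to these two disjoint blocks; the condition $\sigma(q \cup \rho(q)) \subseteq \graph$ is equivalent to $\sigma(q) \subseteq \graph$ together with $\sigma(\rho(q)) \subseteq \graph$, because no atom of $q$ mentions a variable of $\rho(q)$ and vice versa. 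The restriction of $\sigma$ to $\var{q}$ is then an answer to $q$, while composing the restriction of $\sigma$ to $\vrng{\rho}$ with $\rho$ yields a second, independent answer to $q$. This correspondence is a bijection between $\ans{q \cup \rho(q)}{\graph}$ and $\ans{q}{\graph} \times \ans{q}{\graph}$: its inverse glues any ordered pair $(\pi_1,\pi_2)$ of answers to $q$ into the substitution $x \mapsto \pi_1(x)$ on $\var{q}$ and $\rho(x) \mapsto \pi_2(x)$ on $\vrng{\rho}$, which is well defined because $\rho$ is injective on $\var{q}$. This yields the claimed square.

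Combining the two steps finishes the proof: by Definition~\ref{def:expect-var} and the bijection, $\expected{q \cup \rho(q)} = \frac{1}{|\semantics|}\sum_{\graph \in \semantics}|\ans{q \cup \rho(q)}{\graph}| = \frac{1}{|\semantics|}\sum_{\graph \in \semantics}|\ans{q}{\graph}|^2$, which is exactly the $E[X^2]$ term isolated in the first step, so $\std{q}^2 = \expected{q \cup \rho(q)} - \expected{q}^2$. The main obstacle is the bijection of the third paragraph: one must verify that freshness of $\rho$ makes the two copies genuinely independent—no shared variable couples the two sub-answers—while the resources shared between $q$ and $\rho(q)$ do not interfere, so that the count factors as a true square rather than something smaller. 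Everything else is bookkeeping with the finite uniform distribution on $\semantics$.
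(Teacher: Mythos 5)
Your proposal is correct and follows essentially the same route as the paper: the paper likewise reduces the claim to the elementary identity $\std{q}^2 = E[X^2] - (E[X])^2$ and establishes $|\ans{q}{\graph}|^2 = |\ans{q \cup \rho(q)}{\graph}|$ via exactly the pairing of answers $(\pi,\pi')$ with a single glued substitution on the disjoint variable sets. Your treatment of the bijection is, if anything, slightly more explicit about why freshness and injectivity of $\rho$ make the gluing well defined, but there is no substantive difference in approach.
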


We next introduce the notion of \emph{$\mu$-unification-free queries}, where no
two distinct atoms can ever be mapped to the same triple.
Proposition~\ref{prop:free-expect} shows that, for such queries, the simpler
formula \eqref{eq:freeexpect} correctly computes the expectation $\expected{q}$.

\begin{definition}\label{def:unification-free}\em
    Atoms $\qatom_1$ and $\qatom_2$ are \emph{unifiable} if a substitution
    $\kappa$ exists such that ${\kappa(\qatom_1) = \kappa(\qatom_2)}$. Query
    $q$ is \emph{$\mu$-unification-free} if no two distinct atoms of $\mu(q)$
    are unifiable; otherwise, query $q$ is \emph{$\mu$-unifiable}.
\end{definition}

\begin{restatable}{proposition}{freeexpect}\label{prop:free-expect}
    Formula~\eqref{eq:freeexpect} correctly computes $\expected{q}$ when $q$ is
    a $\mu$-unification-free query.
\end{restatable}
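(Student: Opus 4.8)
The plan is to derive Proposition~\ref{prop:free-expect} from the general formula of Theorem~\ref{th:expect} by showing that, for a $\mu$-unification-free query, the inner double sum collapses to a single term. Write $P_{\bot}$ for the partition of $q$ into singletons $\setof{\qatom}$, one per atom ${\qatom \in q}$. Since no block of $P_{\bot}$ contains two atoms, its term graph $\mathcal{G}_{P_{\bot}}$ has no edges, so $P_{\bot}$ is unifiable, $\psubst_{P_{\bot}}$ is the identity on $\var{q}$, and conditions (i)--(ii) of Definition~\ref{def:factor} hold trivially; hence ${P_{\bot} \in \pbase_\tau}$ for every $\tau$. I would then establish two claims: (a)~${\pbase_\tau = \setof{P_{\bot}}}$ for every ${\tau \in \ans{\mu(q)}{\sgraph}}$, and (b)~$\coeff{\tau}{P_{\bot}}$ equals the first product (over variables) and $\factor{\tau}{P_{\bot}}$ the second product (over atoms) in \eqref{eq:freeexpect}. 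Given these, Theorem~\ref{th:expect} reduces to \eqref{eq:freeexpect} at once, because ${\dcoeff{P_{\bot}}{P_{\bot}} = 1}$ and $P_{\bot}$ is the only index in range.

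The first step is the structural consequence of the hypothesis. Using Definition~\ref{def:unification-free}, I would show that for every ${\tau \in \ans{\mu(q)}{\sgraph}}$ the composite ${\qatom \mapsto \tau(\mu(\qatom))}$ is injective on the atoms of $q$. Indeed, distinct atoms ${\qatom_1 \neq \qatom_2}$ of $q$ must have distinct $\mu$-images (two distinct atoms with equal image would be trivially unifiable), and these images are non-unifiable by $\mu$-unification-freeness; since $\tau$ is a substitution, ${\tau(\mu(\qatom_1)) = \tau(\mu(\qatom_2))}$ would exhibit a unifier, a contradiction. Hence ${\qatom \mapsto \tau(\mu(\qatom))}$ is a bijection from the atoms of $q$ onto the triples of $\tau(\mu(q))$, so each ${\striple \in \tau(\mu(q))}$ is the image of exactly one atom.

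The main step, and the one I expect to be the crux, is claim~(a). Fix ${P \in \pbase_\tau}$ and suppose some block ${u \in P}$ contained two distinct atoms ${\qatom_1 = \tuple{s_1,p_1,o_1}}$ and ${\qatom_2 = \tuple{s_2,p_2,o_2}}$. By Definition~\ref{def:partition}, $\mathcal{G}_P$ then has edges ${s_1 - s_2}$, ${p_1 - p_2}$, ${o_1 - o_2}$, so the paired terms lie in common connected components. The key lemma is that satisfaction of $P$ by $\tau$ makes the extension $\widehat{\tau}$ of $\tau$ --- defined by ${\widehat{\tau}(t) = \tau(t)}$ for a variable $t$ and ${\widehat{\tau}(t) = \mu(t)}$ for a resource $t$ --- constant on every connected component of $\mathcal{G}_P$: the two endpoints of any edge reach the same $\leq$-least term in $\mathcal{G}_P$, and conditions (i)--(ii) of Definition~\ref{def:factor} together with unifiability (no two distinct resources are reachable) force equal $\widehat{\tau}$-values, the only subtlety being the mixed variable/resource case, which is handled by the $\leq$-minimality of resources. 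Since ${\tau(\mu(\qatom_i)) = \tuple{\widehat{\tau}(s_i),\widehat{\tau}(p_i),\widehat{\tau}(o_i)}}$, constancy on components yields ${\tau(\mu(\qatom_1)) = \tau(\mu(\qatom_2))}$ for the two distinct atoms ${\qatom_1,\qatom_2}$, contradicting the injectivity of step one. Therefore every block of $P$ is a singleton, i.e.\ ${P = P_{\bot}}$, and so ${\pbase_\tau = \setof{P_{\bot}}}$.

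Finally, I would evaluate the two coefficients at $P_{\bot}$. As $\psubst_{P_{\bot}}$ is the identity, ${\vrng{\psubst_{P_{\bot}}} = \var{q}}$ and ${\coeff{\tau}{P_{\bot}} = \prod_{x \in \var{q}} \size{\tau(x)}}$, which is the first product of \eqref{eq:freeexpect}. For $\factor{\tau}{P_{\bot}}$, each block is a singleton $\setof{\qatom}$ with ${\tau(\mu(\setof{\qatom})) = \setof{\tau(\mu(\qatom))}}$, so by the bijection of step one every ${\striple \in \tau(\mu(q))}$ has ${\pcount{\striple} = 1}$; since weights are positive, the side condition ${\pcount{\striple} \leq \weight{\striple}}$ always holds and ${(\weight{\striple})_1/(\size{\striple})_1 = \weight{\striple}/\size{\striple}}$, ruling out the zero branch. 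Re-indexing the resulting product over $\tau(\mu(q))$ back along the bijection gives ${\factor{\tau}{P_{\bot}} = \prod_{\qatom \in q} \weight{\tau(\mu(\qatom))}/\size{\tau(\mu(\qatom))}}$, the second product of \eqref{eq:freeexpect}. Substituting ${\pbase_\tau = \setof{P_{\bot}}}$, ${\dcoeff{P_{\bot}}{P_{\bot}} = 1}$, and these two evaluations into Theorem~\ref{th:expect} and summing over ${\tau \in \ans{\mu(q)}{\sgraph}}$ reproduces \eqref{eq:freeexpect} exactly, completing the proof.
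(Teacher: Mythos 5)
Your proposal is correct and follows essentially the same route as the paper, which likewise obtains Proposition~\ref{prop:free-expect} by specialising Theorem~\ref{th:expect}: the singleton partition is the only element of the partition base, so the inner sums collapse to $\dcoeff{P}{P} \cdot \coeff{\tau}{P}$ with ${\pcount{\striple} = 1}$ throughout, recovering the two products of \eqref{eq:freeexpect}. The only cosmetic difference is that you establish ${\pbase_\tau = \setof{P_\bot}}$ via the satisfaction conditions of Definition~\ref{def:factor}, whereas the paper observes directly that $P_\bot$ is the only unifiable partition of $q$; your supporting claim that distinct atoms of $q$ have distinct, non-unifiable $\mu$-images is exactly the (implicit) reading of Definition~\ref{def:unification-free} that the paper relies on as well.
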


Queries $q_1$ and $q_2$ from Section~\ref{sec:estimation:intuition} are
$\mu$-unification-free, while $q_3$ is not. Moreover, query ${q_4 =
\setof{\tuple{e_3, \mathit{owns}, x}, \tuple{e_4, \mathit{owns}, y}}}$ is also
\emph{not} $\mu$-unification-free: its atoms unify \emph{after} applying $\mu$
to $q_4$. Note that queries used in practice often consist of atoms of the form
$\tuple{t, r, t'}$ and $\tuple{t, \rdftype, r}$ where $t$ and $t'$ are terms
and $r$ is a distinct resource; then, such queries are unification-free
whenever each such resource $r$ is assigned to a distinct bucket.

If $q$ is $\mu$-unification-free, then ${P = \setof{ \setof{ \qatom_i } \mid
\qatom_i \in q}}$ is the only unifiable partition of $q$, so the estimation
formula simplifies to \eqref{eq:freeexpect}. In particular, $P'$ in the last
sum in Theorem~\ref{th:expect} can in such cases only be $P$, so the summation
reduces to $\coeff{\tau}{P}$---the first product of \eqref{eq:freeexpect}.
Moreover, the atoms of $\mu(q)$ cannot be equated, so ${\pcount{\striple} = 1}$
for each ${\striple \in \tau(\mu(q))}$; thus, $\factor{\tau}{P}$ reduces to the
second product of \eqref{eq:freeexpect}.

\section{Summarising RDF Graphs}\label{sec:summarisation}

As we have mentioned in Section~\ref{sec:introduction}, reusing existing graph
summarisation techniques proved challenging: some produced summaries that did
not lead to accurate cardinality estimates, and others could not be efficiently
applied to large graphs. We thus developed a new summarisation approach
consisting of two steps we describe next.

\subsection{Typed Summary}\label{sec:summarisation:typed-summary}

To obtain a coarse measure of resource similarity, we assign to each resource
$\gres$ in an RDF graph $\graph$ a \emph{type} ${\type{\gres} =
\tuple{C(\gres), O(\gres), I(\gres), P(\gres)}}$, which is a tuple consisting
of four components described next.

The most important component of $\type{\gres}$ is the \emph{class type}
$C(\gres)$ of $\gres$, which is the set defined as follows:
\begin{displaymath}
    C(\gres) = \begin{cases}
                    \setof{o \in \res{\graph} \mid \tuple{\gres, \rdftype, o} \in \graph}   & \text{if } \gres \text{ is a URI}, \\
                    \setof{ \text{the datatype of } \gres}                                  & \text{if } \gres \text{ is a literal}. \\
               \end{cases}
\end{displaymath}
One can intuitively expect that a summary whose buckets contain, say, both
people and cars, is unlikely to provide good cardinality estimates. Thus, our
summarisation approaches will put resources $\gres_1$ and $\gres_2$ into the
same bucket only if ${C(\gres_1) = C(\gres_2)}$ holds. A similar idea has been
used in query answering over ontologies \cite{DBLP:conf/aaai/DolbyFKKSSM07}.

Sets $O(\gres)$ and $I(\gres)$ describe the outgoing and incoming connections
of $\gres$. Let ${\tuple{p_1, \dots, p_k}}$ be an arbitrary ordering of the
predicates in $\graph$ different from $\rdftype$; then, we could capture the
outgoing connections of $\gres$ by a vector ${\tuple{n_1, \dots, n_k}}$, where
each $n_i$ is the number of the outgoing connections from $\gres$ for predicate
$p_i$. This, however, is likely to create too many types: resources in an RDF
graph are unlikely to have exactly the same outgoing connections with the same
cardinalities. We thus represent the cardinality information using histograms.
For each predicate $p_i$, we compute the set $\mathit{out}_{p_i}$ that, for
each resource $\gres_j$ occurring in $\graph$ in subject position, contains a
pair ${\tuple{\gres_j, \, |\setof{o \in \res{\graph} \text{ such that }
\tuple{\gres_j, p_i, o} \in \graph}| \, }}$ associating $\gres_j$ with the
number of resources that $\gres_j$ is connected to via property $p_i$. We sort
the pairs in $\mathit{out}_{p_i}$ by the value of their second component into a
list ${\tuple{\gres_1, n_1}, \dots, \tuple{\gres_N, n_N}}$. Finally, given a
predetermined number $J_i$ of histogram buckets (not to be confused with
summary buckets) for $p_i$, we create a histogram for $\mathit{out}_{p_i}$: for
${w = \lceil N / J_i \rceil}$, the first bucket contains resources ${\gres_1,
\dots, \gres_w}$ and has ID 1, the second bucket contains resources
${\gres_{w+1}, \dots, \gres_{2w}}$ and has ID 2, and so on. Thus, each bucket
contains resources with similar outgoing frequencies for $p_i$, and the
difference between bucket IDs is indicative of the bucket similarity. Then, the
\emph{outgoing type} $O(\gres)$ of a resource $\gres$ is the vector
$\tuple{O_1, \dots, O_k}$, where each $O_i$ is the ID of the bucket that
$\gres$ was assigned to in the histogram for $p_i$. The \emph{incoming type}
$I(\gres)$ of $\gres$ is defined analogously. The numbers of buckets can be
selected independently for each predicate and direction: it is reasonable to
use a larger number of buckets if the relevant frequencies vary significantly
across resources.

Finally, we also identify highly connected groups of resources: one can
intuitively expect that such resources should be assigned to the same bucket.
To this end, we divide the resources into a predetermined number of partitions
of roughly the same size while minimising the number of edges between
partitions using the algorithms by \citet{Karypis:1998}. The \emph{partition
type} $P(\gres)$ of a resource $\gres$ is the partition to which $\gres$ was
assigned to.

Thus, if the types $\type{\gres_1}$ and $\type{\gres_2}$ of resources $\gres_1$
and $\gres_2$ are the same, then $\gres_1$ and $\gres_2$ occur in the same
classes (or datatypes), have comparable outgoing and incoming frequencies for
all predicates, and belong to the same highly connected component of the RDF
graph; hence, we can expect $\gres_1$ and $\gres_2$ to be similar. Thus, we
define the \emph{typed summary} ${\summary = \tuple{\sgraph, \weightfunction,
\mu}}$ of an RDF $\graph$ as follows: for each resource ${\gres \in
\res{\graph}}$, if $\gres$ occurs in $\graph$ in triples of the form $\tuple{s,
\gres, o}$ or $\tuple{s, \rdftype, \gres}$, we let ${\mu(\gres) = \gres}$;
otherwise, we let ${\mu(\gres) = b_{\type{\gres}}}$, where $b_{\type{\gres}}$
is a distinct bucket uniquely associated with the type $\type{\gres}$ of
$\gres$. Finally, we define ${\sgraph = \mu(\graph)}$ and adjust the function
$\weightfunction$ accordingly.

\subsection{Summary Refinement by MinHashing}\label{sec:summarisation:MinHash-summary}

If a typed summary is large, we compress it by merging similar buckets and
types. To identify similar buckets, we define the \emph{vicinity} ${\vic(b)}$
of a bucket ${b \in \res{\sgraph}}$ as the union of the following sets:
\begin{align*}
    \vico(b)    & = \setof{\tuple{b_p, b_o}  \mid  \tuple{b, b_p, b_o} \in \sgraph \wedge b_p \neq \mu(\rdftype)}, \\
    \vici(b)    & = \setof{\tuple{b_i, b_p}  \mid \tuple{b_i, b_p, b} \in \sgraph \wedge b_p \neq \mu(\rdftype)}.
\end{align*}
Since vicinity $\vic(b)$ describes the connections of $b$, given buckets
${b_1,b_2 \in \res{\sgraph}}$, the degree of commonality between $\vic(b_1)$
and $\vic(b_2)$ is indicative of the similarity of $b_1$ and $b_2$. This is
captured by the \emph{Jaccard index} \cite{DBLP:books/cu/LeskovecRU14} of $b_1$
and $b_2$, which is defined as
\begin{displaymath}
    \Jind{\summary}{b_1}{b_2} = \frac{|\vic(b_1) \cap \vic(b_2)|}{|\vic(b_1) \cup \vic(b_2)|}
\end{displaymath}
if ${\vic(b_1) \cup \vic(b_2) \neq \emptyset}$, and otherwise
${\Jind{\summary}{b_1}{b_2} = 1}$. Thus, we can na{\"i}vely reduce the size of
$\summary$ by repeatedly merging the pair of buckets $b_1$ and $b_2$ with
maximal $\Jind{\summary}{b_1}{b_2}$, but this is impractical for two reasons.
First, computing $\Jind{\summary}{b_1}{b_2}$ requires iterating over
$\vic(b_1)$ and $\vic(b_2)$, which can be large. Second, the number of pairs of
$b_1$ and $b_2$ can be large in even moderately sized summaries.

We address the first problems using MinHashing
\cite{DBLP:books/cu/LeskovecRU14}, which uses  fixed-sized signatures to
approximate $\Jind{\summary}{b_1}{b_2}$. Given two integer parameters $\mdim$
and $\ndim$, we generate a \emph{MinHash scheme $\MHscheme$ of size ${\mdim
\times \ndim}$}, which is an ${\mdim \times \ndim}$ matrix of hash functions
chosen uniformly at random with replacement. Each $\MHscheme[i,j]$ maps
elements of $\vic(b)$ (i.e., pairs of resources) to natural numbers. The
\emph{signature} of a bucket ${b \in \res{\sgraph}}$ on $\MHscheme$ and
$\summary$ is the ${\mdim \times \ndim}$ matrix $\MHsig{b}$ where
\begin{displaymath}
    \MHsig{b}[i,j] = \min_{\alpha \in \vic(b)} \MHscheme[i,j](\alpha)
\end{displaymath}
if ${\vic(b) \neq \emptyset}$, and otherwise ${\MHsig{b}[i,j] = \infty}$. It is
known that
\begin{displaymath}
    \appJind{\summary}{b_1}{b_2} = \dfrac{| \setof{ \tuple{i,j} \mid  \MHsig{b_1}[i,j] = \MHsig{b_2}[i,j] }|}{\mdim \cdot \ndim} \\
\end{displaymath}
is a good approximation of the Jaccard index of buckets $b_1$ and $b_2$
\cite{DBLP:books/cu/LeskovecRU14}. For ${i \in \interval{1}{\mdim}}$, we write
$\MHsig{b}[i]$ for the $i$-th row of $\MHsig{b}$.

We address the second problem using locality sensitive hashing
\cite{DBLP:books/cu/LeskovecRU14}: we generate a hash function $\Call{lsh}{}$
for $\MHsig{b}[i]$ and use it to assign each bucket $b$ to a bin
$\binof{\Call{lsh}{\MHsig{b}[i]}}$. For each ${\bin \in \bins}$ and all buckets
${b_1,b_2 \in \bin}$, it is known that $\appJind{\summary}{b_1}{b_2}$ is likely
to be high, so all pairs of buckets in $\bin$ are likely to be
similar~\cite{DBLP:books/cu/LeskovecRU14}.

\begin{algorithm}[t]
\caption{MinHash summarisation algorithm}\label{algo:summ}
\begin{algorithmic}[1]
    \algrenewcommand\algorithmicindent{0.35cm}
    \Statex
    \begin{tabular}{@{}l@{\quad}l@{\;}c@{\;}l@{}}
    \textbf{Inputs:}    & $\graph$              & : & an RDF graph \\
                        & $\types$              & : & a set of types \\
                        & $\mdim \times \ndim$  & : & size of the MinHash scheme \\
                        & $\target$             & : & target size of the summary \\
    \end{tabular}
    \Statex
    \State Generate a MinHash scheme $\MHscheme$ of size $\mdim \times \ndim$
    \State Generate a locality sensitive hash $\Call{lsh}{} : \nat^\ndim \to \nat$
    \State Let $\summary \defeq \summarytuple$ be the trivial summary of $\graph$                               \label{algo:summ:create-trivial}
    \ForAll{$\type{} \in \types$}
        $\buckets_{\type{}} \defeq \{ b \in \res{\sgraph} \mid b \text{ is of type } \type{} \}$                \label{algo:summ:create-buckets}
    \EndFor
    \Loop                                                                                                       \label{algo:summ:loop:start}
        \If{$|\sgraph| \leq \target$}
            \Return $\summary$                                                                                  \label{algo:summ:success}
        \EndIf
        \State $\mergeQueue \defeq \emptyset$
        \ForAll{$\type{} \in \types$}                                                                           \label{algo:summ:t:start}
            \ForAll{$b \in \buckets_{\type{}}$}
                Compute $\MHsig{b}$ using $\MHscheme$ and $\summary$                                            \label{algo:summ:signature}
            \EndFor
            \For{$i \defeq 1$ \textbf{to} $\mdim$}                                                              \label{algo:summ:row:start}
                \State $\bins \defeq \emptyset$
                \ForAll{$b \in \buckets_{\type{}}$}
                    Add $b$ to $\binof{\Call{lsh}{\MHsig{b}[i]}}$                                               \label{algo:summ:add-to-bin}
                \EndFor
                \For{$\bin \in \bins$ such that $|\bin| \geq 2$}                                                \label{algo:summ:merge:start}
                    \State Choose some $b \in \bin$
                    \For{$b' \in \bin$ with $b \neq b'$}
                        \State Add $\tuple{b',b}$ to $\mergeQueue$ and remove $b'$ from $\buckets_{\type{}}$
                    \EndFor
                \EndFor                                                                                         \label{algo:summ:merge:end}
            \EndFor                                                                                             \label{algo:summ:row:end}
        \EndFor                                                                                                 \label{algo:summ:t:end}
        \State $\sizebefore \defeq |\sgraph|$
        \ForAll{$\tuple{b', b} \in \mergeQueue$}
            Merge $b'$ into $b$ in $\summary$                                                                   \label{algo:summ:apply-merge}
        \EndFor
        \If{$\sizebefore - |\sgraph| \leq (|\sgraph| - \target) \cdot 0.01$}
            \State Merge similar types in $\types$                                                              \label{algo:summ:merge-sim}
            \If{no merge has happened}
                \Return $\summary$
            \EndIf
        \EndIf
    \EndLoop                                                                                                    \label{algo:summ:loop:end}
\end{algorithmic}
\end{algorithm}

Algorithm~\ref{algo:summ} uses these ideas to compute a summary of a graph
$\graph$. It takes as arguments a set of types $\types$ covering all resources
of $\graph$. The algorithm first converts $\graph$ into a trivial summary
$\summary$ (line~\ref{algo:summ:create-trivial}) where ${\sgraph = \graph}$ and
${\mu(\gres) = \gres}$ for each ${\gres \in \res{\graph}}$, and then it
computes the sets $\buckets_{\type{}}$ of buckets of type $\type{}$ for each
${\type{} \in \types}$ (line~\ref{algo:summ:create-buckets}). The algorithm
next enters a loop (lines~\ref{algo:summ:loop:start}--\ref{algo:summ:loop:end})
that merges buckets of the same type. It returns $\summary$ if the number of
edges in $\sgraph$ is in the required space budget
(line~\ref{algo:summ:success}); otherwise, it considers each type ${\type{} \in
\types}$ (lines~\ref{algo:summ:t:start}--\ref{algo:summ:t:end}) and computes
the signature of each bucket ${b \in \buckets_{\type{}}}$
(line~\ref{algo:summ:signature}). For each row $i$ of the signature
(lines~\ref{algo:summ:row:start}--\ref{algo:summ:row:end}), the algorithm
applies locality sensitive hashing to each bucket ${b \in \buckets_{\type{}}}$
(line~\ref{algo:summ:add-to-bin}) and adds $b$ to a bin. The number of bins is
not predetermined, which reduces collisions. Finally, for all bins containing
at least two buckets, all buckets in the bin are scheduled to be merged into
one designated bucket $b$
(lines~\ref{algo:summ:merge:start}--\ref{algo:summ:merge:end}). Locality
sensitive hashing ensures that similar buckets are likely to end in the same
bin, but this is not guaranteed so the standard algorithm would merge $b'$ and
$b$ only if the estimate $\appJind{\summary}{b}{b'}$ is above a particular
threshold. We skip this step for two reasons: the quadratic step of computing
$\appJind{\summary}{b}{b'}$ for all pairs of buckets in a bin can be
prohibitive, and the chance that dissimilar buckets are assigned to the same
bin is small since we do not limit the number of bins. After all signature rows
have been considered, the merges are applied to the summary
(line~\ref{algo:summ:apply-merge}). Finally, if the size of $\summary$ has not
been reduced sufficiently, then the types in $\types$ are too specific to allow
further merges, so the algorithm merges similar types in $\types$
(line~\ref{algo:summ:merge-sim}) to allow further reducing in the size of
$\summary$.

Let $\type{}$ and $\type{}'$ be types whose outgoing and incoming types are
$\tuple{O_1, \dots, O_k}$ and $\tuple{O_1', \dots, O_k'}$, and $\tuple{I_1,
\dots, I_k}$ and $\tuple{I_1', \dots, I_k'}$. Then, $\type{}$ and $\type{}'$
can be similar only if their class and partition types coincide, and their
similarity $\mathcal{J}(\type{},\type{}')$ is the average of the generalised
Jaccard indexes \cite{DBLP:books/cu/LeskovecRU14} of their outgoing and
incoming types:
\begin{displaymath}
    \mathcal{J}(\type{},\type{}') = \frac{1}{2}\left(\sum_i\frac{\min(O_i,O_i')}{\max(O_i,O_i')} + \sum_i\frac{\min(I_i,I_i')}{\max(I_i,I_i')}\right).
\end{displaymath}
Merging $\type{}$ and $\type{}'$ produces $\type{}''$ with the class and
partition types same as $\type{}$ and $\type{}'$, outgoing type ${\tuple{(O_1 +
O_1')/2, \dots, (O_k + O_k')/2}}$, and an analogous incoming type; we also set
${\buckets_{\type{}''} = \buckets_{\type{}} \cup \buckets_{\type{}'}}$.

To merge the types in line~\ref{algo:summ:merge-sim}, the algorithm first
groups the types by their class and partition type. Ideally, we would merge the
most similar pairs within each such group, but computing the similarity of each
pair in the group would be prohibitively expensive. Instead, the algorithm
selects 500 pairs at random, computes their similarity, and merges the most
similar pair if their similarity is at least 50\%. This process is repeated
until the size of $\types$ is reduced by 20\% (if possible), which then allows
for further merging of buckets.

\subsection{Choosing the Parameters}

To recapitulate, our algorithm is parametrised by (i)~the number of buckets for
the incoming and outgoing histograms per predicate, (ii)~the number of
partitions, (iii)~the target size of the summary, and (iv)~the size of the
MinHash scheme ${m \times n}$ (if used). Standard techniques can be used to
choose the number of buckets, such as the Freedman--Diaconis rule
\cite{Freedman1981}. We found it reasonable to partition a graph into 10
partitions, and fall back to just one partition if more than 20\% of the edges
are cut. We also identified ${\mdim = 20}$, ${\ndim = 2}$, and target sizes in
the order of tens of thousands as reasonable defaults.

\section{Experimental Evaluation}\label{sec:evaluation}

\begin{table*}[p]
    \centering
    \footnotesize
    \setlength{\tabcolsep}{2pt}
    \def\arraystretch{1.2}

    \captionof{table}{Dataset, summary, and query statistics}\label{table:data}
    \begin{tabular}{|c|*{22}{c|}}
                                \cline{1-23}
        \multicolumn{1}{|c|}{}  & \multicolumn{2}{c|}{Original graph $\graph$}  & \multicolumn{4}{c|}{Summary $\summary$}               & \multicolumn{16}{c|}{Queries by type} \\
                                \cline{2-23}
        \multicolumn{1}{|c|}{Benchmark} & Nr            & Nr                    & Nr        & Nr        & Reduction & Construction      & \multicolumn{4}{c|}{Linear}           & \multicolumn{4}{c|}{Star}             & \multicolumn{4}{c|}{Snowflake}            & \multicolumn{4}{c|}{Complex} \\
                                                                                                                                                    \cline{8-23}
        \multicolumn{1}{|c|}{}          & resources     & triples               & buckets   & triples   & factor    & time              & total & $\mu$-un.\    & min   & max   & total & $\mu$-un.\    & min   & max   & total & $\mu$-un.\    & min   & max       & total & $\mu$-un.\    & min  & max \\
        \hline
        LUBM                            & 16.4~M        & \hspace{2pt} 91.1~M   & 313       & 14,926    & 6.1~k     & 6~min             & 16    & 3             & 1     & 6     & 2     & 0             & 4     & 10    & 6     & 0             & 4     & 10        & 14    & 2             & 4    & 9 \\
        WatDiv                          & 10.2~M        & 108.9~M               & 5,504     & 43,350    & 2.5~k     & 2~h\phantom{in}   & 30    & 0             & 2     & 3     & 40    & 0             & 2     & 9     & 30    & 0             & 5     & 9         & 3     & 0             & 6    & 10 \\
        DBLP                            & 25.4~M        & \hspace{2pt} 55.5~M   & 4,765     & 28,631    & 1.9~k     & 45~min            & 4     & 0             & 1     & 4     & 1     & 1             & 3     & 3     & 2     & 1             & 6     & 7         & 8     & 6             & 5    & 11 \\
        \hline
    \end{tabular}

    \vspace{0.3cm}

    \captionof{table}{Estimates confidence for varying q-errors}\label{table:std}
    \begin{tabular}{|r|*{15}{r|}}
        \hline
        Q-Error $\varepsilon$                                                                   & \multicolumn{5}{c|}{10}                               & \multicolumn{5}{c|}{100}                              & \multicolumn{5}{c|}{1000}           \\
        \hline
        The \% bound on $\prob{\qerror \geq \varepsilon}$ from Theorem~\ref{th:qerrorbound}   & [0,1]     & (1, 5]    & (5, 10] & (10, 25] & $>$ 25   & [0,1]     & (1, 5]    & (5, 10] & (10, 25] & $>$ 25   & [0,1]     & (1, 5]    & (5, 10] & (10, 25] & $>$ 25 \\
        \hline
        LUBM (29)                                                                               & 20 (2)    & 0 (0)     & 6 (1)   & 2 (0)    & 1 (0)    & 25 (3)    & 0  (0)    & 4 (0)   & 0 (0)    & 0  (0)   & 28 (1)    & 1 (0)     & 0 (0)   & 0 (0)    & 0 (0)  \\
        WatDiv (48)                                                                             & 16 (1)    & 6 (1)     & 1 (0)   & 10 (2)   & 15 (3)   & 42 (0)    & 1 (0)     & 0 (0)   & 0 (0)    & 5 (0)    & 43 (0)    & 0 (0)     & 0 (0)   & 0 (0)    & 5 (0)  \\
        DBLP (3)                                                                                & 2 (0)     & 1 (0)     & 0 (0)   & 0 (0)    & 0 (0)    & 3 (0)     & 0 (0)     & 0 (0)   & 0 (0)    & 0 (0)    & 3 (0)     & 0 (0)     & 0 (0)   & 0 (0)    & 0 (0)  \\
        \hline
    \end{tabular}

    \vspace{0.8cm}
    
    \newcommand{ \rotatebox[origin=l]{90}{\bf } & \includegraphics[width=0.64\linewidth,height=4.5cm]{figures/} & \begin{minipage}[t]{0.3\linewidth} \input{stats/-stats} \end{minipage}}[2]{ \rotatebox[origin=l]{90}{\bf #1} & \includegraphics[width=0.64\linewidth,height=4.5cm]{figures/#2} & \begin{minipage}[t]{0.3\linewidth} \input{stats/#2-stats} \end{minipage}}
    \begin{tabular}{@{}l@{\;\;}c@{\quad}c@{}}
         \rotatebox[origin=l]{90}{\bf \hspace{2cm} (a) LUBM} & \includegraphics[width=0.64\linewidth,height=4.5cm]{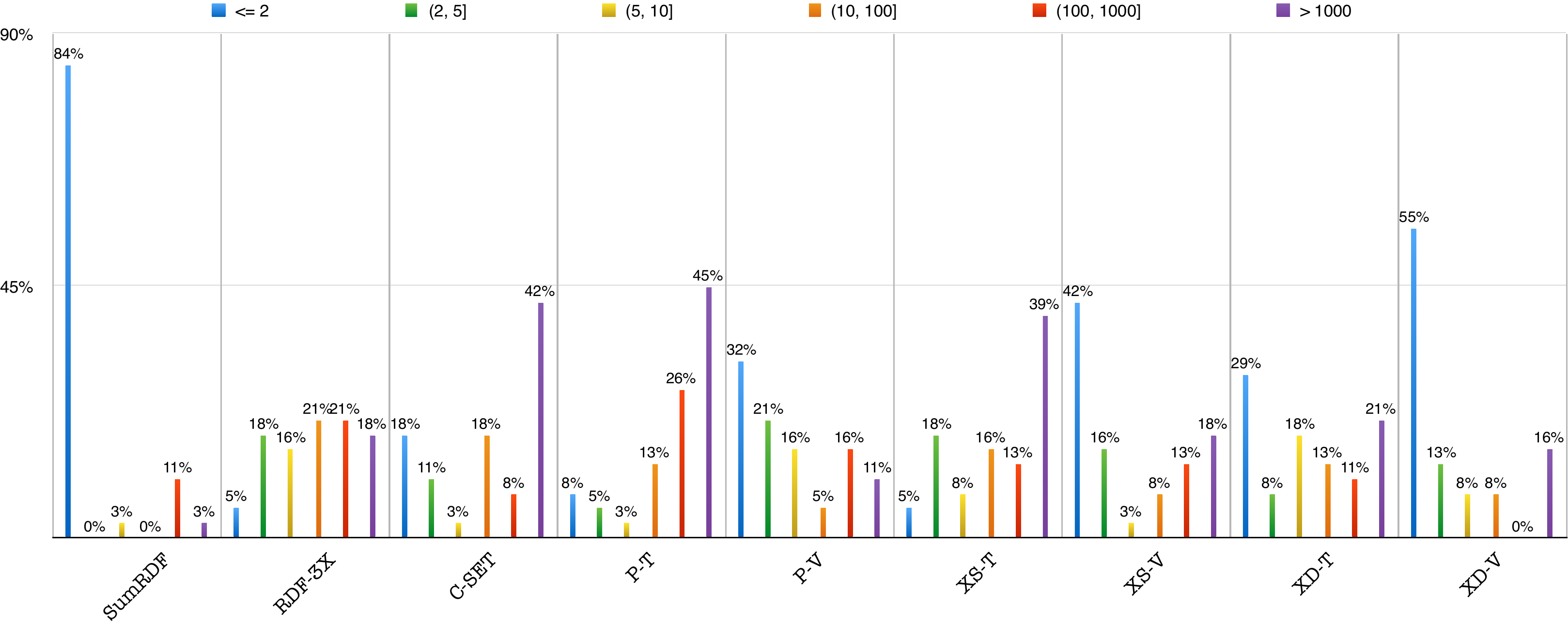} & \begin{minipage}[t]{0.3\linewidth} \usetikzlibrary{backgrounds}
\begin{tikzpicture}
    \begin{axis}[
        width = 1.1\linewidth,
        height= 140pt,
        boxplot/draw direction=y,
        x axis line style={opacity=0},
        axis x line*=bottom,
        axis y line=left,
        font = \tiny,
        ymajorgrids,
        xtick={1,2,3,4,5,6,7,8,9},
        xticklabel style={rotate=45},
        xtick style={draw=none},
        xticklabels={\our, \rdfx, \cset, \postgres{T}, \postgres{V}, \system{S}{T}, \system{S}{V}, \system{D}{T}, \system{D}{V}},
        ytick={1,10,100,1000,10000, 100000},
        ymode = log,
        ymax = 3000000,
        xmin = 0.5,
        xmax = 9.5,
        yminorticks=false,
        yticklabel style={xshift=4pt},
        trim axis right,
     ]

        \addplot[
            black,
            boxplot prepared = {
                lower whisker = 1.00,
                lower quartile = 1.02,
                median = 1.18,
                average = 125.24,
                upper quartile = 1.41,
                upper whisker = 1020.74,
                box extend = 0.8,
            }
        ] coordinates {};

        \addplot[
            black,
            boxplot prepared = {
                lower whisker = 1.00,
                lower quartile = 5.49,
                median = 27.70,
                average = 4552.17,
                upper quartile = 405.15,
                upper whisker = 111055.00,
                box extend = 0.8,
            }
        ] coordinates {};

        \addplot[
            black,
            boxplot prepared = {
                lower whisker = 1.00,
                lower quartile = 3.21,
                median = 105.08,
                average = 36197.10,
                upper quartile = 5341.00,
                upper whisker = 763852.95,
                box extend = 0.8,
            }
        ] coordinates {};

        \addplot[
            black,
            boxplot prepared = {
                lower whisker = 1.37,
                lower quartile = 51.86,
                median = 344.23,
                average =  6e+05,
                upper quartile = 88523.85,
                upper whisker = 2e+06,
                box extend = 0.8,
            }
        ] coordinates {};

        \draw (axis cs: 4.2, 1.9e+06) node[right] {$10^{12}$};
        \draw (axis cs: 4.2, 6e+05) node[right] {$10^{11}$};
        \draw (axis cs: 4, 2e+05) node[red] {$\approx$};

        \addplot[
            black,
            boxplot prepared = {
                lower whisker = 1.00,
                lower quartile = 1.67,
                median = 4.10,
                average = 3244.32,
                upper quartile = 95.73,
                upper whisker = 111055.00,
                box extend = 0.8,
            }
        ] coordinates {};

        \addplot[
            black,
            boxplot prepared = {
                lower whisker = 1.00,
                lower quartile = 7.00,
                median = 174.50,
                average = 6e+05, 
                upper quartile = 9661.98,
                upper whisker =  2e+06,   
                box extend = 0.8,
            }
        ] coordinates {};

        \draw (axis cs: 6, 2e+05) node[red] {$\approx$};
        \draw (axis cs: 6.2, 2e+06) node[right] {$10^8$};
        \draw (axis cs: 6.2, 6e+05) node[right] {$10^7$};

        \addplot[
            black,
            boxplot prepared = {
                lower whisker = 1.00,
                lower quartile = 1.30,
                median = 3.13,
                average = 3566.57,
                upper quartile = 8.62,
                upper whisker = 111055.00,
                box extend = 0.8,
            }
        ] coordinates {};

        \addplot[
            black,
            boxplot prepared = {
                lower whisker = 1.00,
                lower quartile = 1.72,
                median = 7.93,
                average = 5938.44,
                upper quartile = 417.97,
                upper whisker = 136165.00,
                box extend = 0.8,
            }
        ] coordinates {};

        \addplot[
            black,
            boxplot prepared = {
                lower whisker = 1.00,
                lower quartile = 1.03,
                median = 1.76,
                average = 3645.68,
                upper quartile = 8.17,
                upper whisker = 111055.00,
                box extend = 0.8,
            }
        ] coordinates {};

    \end{axis}
 \end{tikzpicture} \end{minipage} \\[2ex]
         \rotatebox[origin=l]{90}{\bf \hspace{2cm} (b) WatDiv} & \includegraphics[width=0.64\linewidth,height=4.5cm]{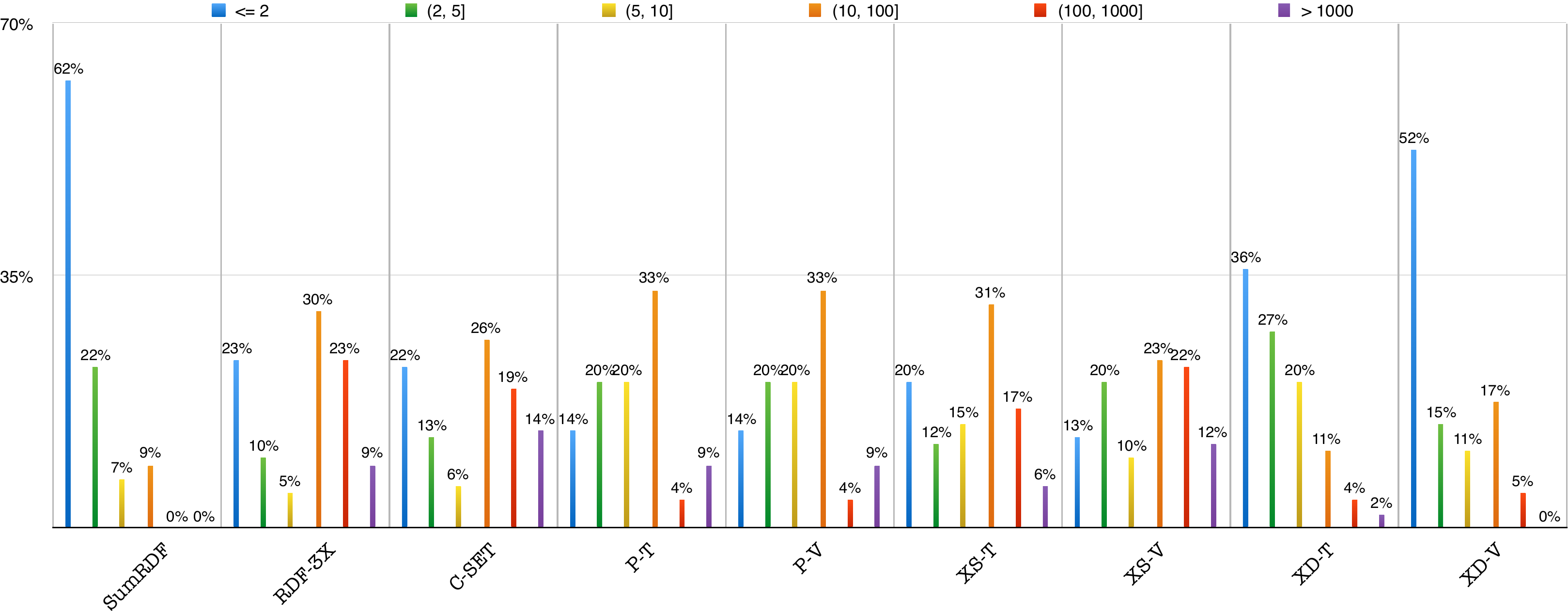} & \begin{minipage}[t]{0.3\linewidth} \begin{tikzpicture}
    \begin{axis}[
        width = 1.1\linewidth,
        height= 140pt,
        boxplot/draw direction=y,
        x axis line style={opacity=0},
        axis x line*=bottom,
        axis y line=left,
        font=\tiny,
        ymajorgrids,
        xtick={1,2,3,4,5,6,7,8,9},
        xticklabel style={rotate=45},
        xtick style={draw=none},
        xticklabels={\our, \rdfx, \cset, \postgres{T}, \postgres{V}, \system{S}{T}, \system{S}{V}, \system{D}{T}, \system{D}{V}},
        ytick={1,10,100,1000,10000, 100000},
        ymode = log,
        ymax = 6000000,
        xmin = 0.5,
        xmax = 9.5,
        yminorticks=false,
        yticklabel style={xshift=4pt}
     ]
        \addplot[
            black,
            boxplot prepared = {
                lower whisker = 1.00,
                lower quartile = 1.03,
                median = 1.37,
                average = 3.14,
                upper quartile = 2.48,
                upper whisker = 18.97,
                box extend = 0.8,
            }
        ] coordinates {};

        \addplot[
            black,
            boxplot prepared = {
                lower whisker = 1.00,
                lower quartile = 2.77,
                median = 26.00,
                average = 6922.33,
                upper quartile = 120.00,
                upper whisker = 508463.00,
                box extend = 0.8,
            }
        ] coordinates {};

        \addplot[
            black,
            boxplot prepared = {
                lower whisker = 1.00,
                lower quartile = 4.13,
                median = 19.00,
                average = 9087.41,
                upper quartile = 210.00,
                upper whisker = 508463.00,
                box extend = 0.8,
            }
        ] coordinates {};

        \addplot[
            black,
            boxplot prepared = {
                lower whisker = 1.0,
                lower quartile = 3.62,
                median = 15.32,
                average = 1e+06, 
                upper quartile = 80.80,
                upper whisker = 4e+06, 
                box extend = 0.8,
            }
        ] coordinates {};

        \draw (axis cs: 4.2, 4e+06) node[right] {$10^8$};
        \draw (axis cs: 4.2, 1e+06) node[right] {$10^6$};

        \draw (axis cs: 4, 508463) node[red] {$\approx$};

        \addplot[
            black,
            boxplot prepared = {
                lower whisker = 1.01,
                lower quartile = 3.29,
                median = 9.00,
                average = 1617.52,
                upper quartile = 35.45,
                upper whisker = 34218.00,
                box extend = 0.8,
            }
        ] coordinates {};

        \addplot[
            black,
            boxplot prepared = {
                lower whisker = 2.88,
                lower quartile = 7.00,
                median = 11.00,
                average = 450.15,
                upper quartile = 64.69,
                upper whisker = 18424.73,
                box extend = 0.8,
            }
        ] coordinates {};

        \addplot[
            black,
            boxplot prepared = {
                lower whisker = 1.01,
                lower quartile = 4.19,
                median = 18.63,
                average = 1491.49,
                upper quartile = 254.88,
                upper whisker = 31443.00,
                box extend = 0.8,
            }
        ] coordinates {};

        \addplot[
            black,
            boxplot prepared = {
                lower whisker = 1.00,
                lower quartile = 1.43,
                median = 3.95,
                average = 129.92,
                upper quartile = 7.10,
                upper whisker = 8851.00,
                box extend = 0.8,
            }
        ] coordinates {};

        \addplot[
            black,
            boxplot prepared = {
                lower whisker = 1.00,
                lower quartile = 1.02,
                median = 2.00,
                average = 16.04,
                upper quartile = 7.65,
                upper whisker = 511.91,
                box extend = 0.8,
            }
        ] coordinates {};

    \end{axis}
\end{tikzpicture} \end{minipage} \\[2ex]
         \rotatebox[origin=l]{90}{\bf \hspace{2cm} (c) DBLP} & \includegraphics[width=0.64\linewidth,height=4.5cm]{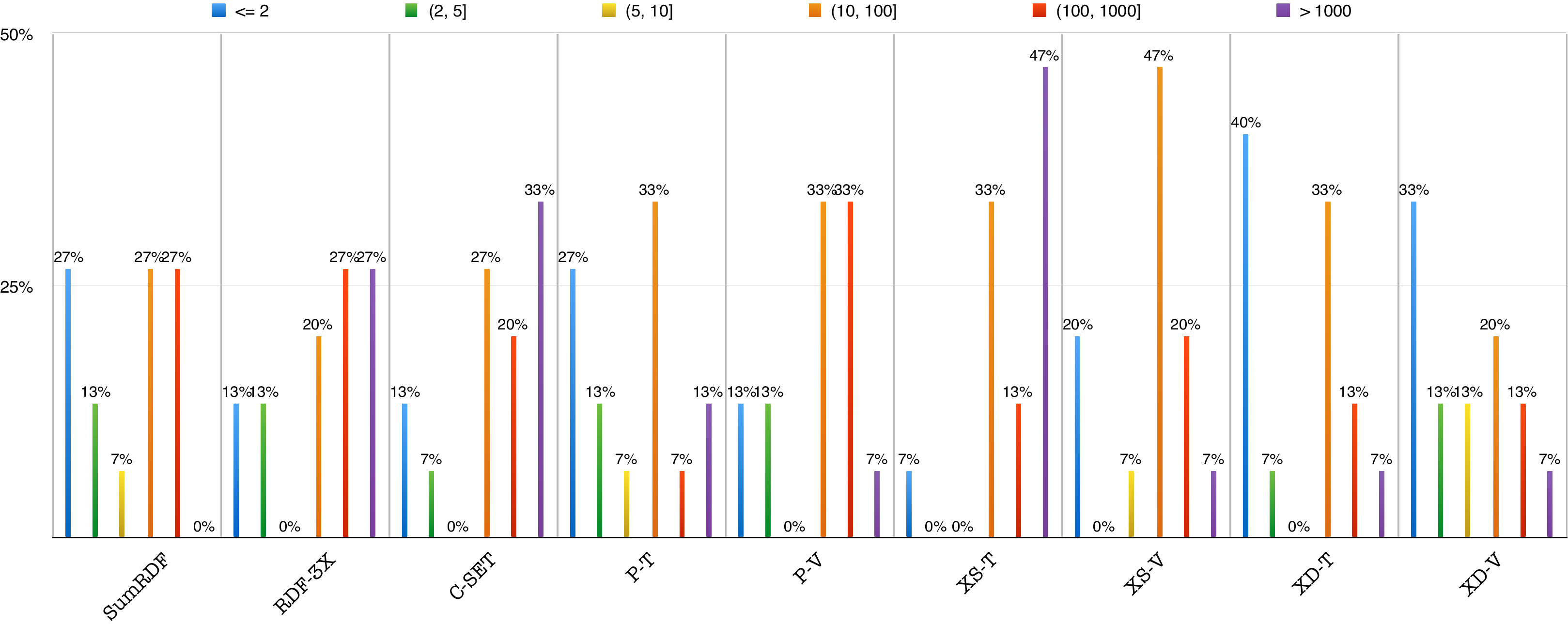} & \begin{minipage}[t]{0.3\linewidth} \begin{tikzpicture}
    \begin{axis}[
        width = 1.1\linewidth,
        height= 140pt,
        boxplot/draw direction=y,
        x axis line style={opacity=0},
        axis x line*=bottom,
        axis y line=left,
        font = \tiny,
        ymajorgrids,
        xtick={1,2,3,4,5,6,7,8,9},
        xticklabel style={rotate=45},
        xtick style={draw=none},
        xticklabels={\our, \rdfx, \cset, \postgres{T}, \postgres{V}, \system{S}{T}, \system{S}{V}, \system{D}{T}, \system{D}{V}},
        ytick={1,10,100,1000,10000, 100000, 1000000},
        ymode = log,
        ymax = 6000000,
        xmin = 0.5,
        xmax = 9.5,
        xtick style={draw=none},
        yminorticks=false,
        yticklabel style={xshift=4pt}
     ]
        \addplot[
            black,
            boxplot prepared = {
                lower whisker = 1.00,
                lower quartile = 1.62,
                median = 19.00,
                average = 146.67,
                upper quartile = 133.66,
                upper whisker = 817.00,
                box extend = 0.8,
            }
        ] coordinates {};

        \addplot[
            black,
            boxplot prepared = {
                lower whisker = 1.00,
                lower quartile = 11.67045125,
                median = 108.34,
                average = 70666.71,
                upper quartile = 6902.31,
                upper whisker = 1022766.00,
                box extend = 0.8,
            }
        ] coordinates {};

        \addplot[
            black,
            boxplot prepared = {
                lower whisker = 1.00,
                lower quartile = 19.50,
                median = 131.14,
                average = 100594.50,
                upper quartile = 7118.50,
                upper whisker = 1022766.00,
                box extend = 0.8,
            }
        ] coordinates {};

        \addplot[
            black,
            boxplot prepared = {
                lower whisker = 1.0,
                lower quartile = 2.325892857,
                median = 19.00,
                average = 2268.68,
                upper quartile = 75.87,
                upper whisker = 23244.68,
                box extend = 0.8,
            }
        ] coordinates {};

        \addplot[
            black,
            boxplot prepared = {
                lower whisker = 1.0,
                lower quartile = 11.19473872,
                median = 40.00,
                average = 458.22,
                upper quartile = 452.88,
                upper whisker = 3703.75,
                box extend = 0.8,
            }
        ] coordinates {};

        \addplot[
            black,
            boxplot prepared = {
                lower whisker = 1,
                lower quartile = 42.25,
                median = 817.00,
                average = 174781.53,
                upper quartile = 111782.00,
                upper whisker = 1142204.00,
                box extend = 0.8,
            }
        ] coordinates {};

        \addplot[
            black,
            boxplot prepared = {
                lower whisker = 1,
                lower quartile = 12.88889391,
                median = 20.00,
                average = 727.98,
                upper quartile = 157.27,
                upper whisker = 5473.33,
                box extend = 0.8,
            }
        ] coordinates {};

        \addplot[
            black,
            boxplot prepared = {
                lower whisker = 1.00,
                lower quartile = 1.239637257,
                median = 15.21,
                average = 1160.00,
                upper quartile = 68.170,
                upper whisker = 10104.00,
                box extend = 0.8,
            }
        ] coordinates {};

        \addplot[
            black,
            boxplot prepared = {
                lower whisker = 1.00,
                lower quartile = 1.51339195,
                median = 8.18,
                average = 357.33,
                upper quartile = 56.41,
                upper whisker = 4133.00,
                box extend = 0.8,
            }
        ] coordinates {};
    \end{axis}
 \end{tikzpicture} \end{minipage} \\
    \end{tabular}

    
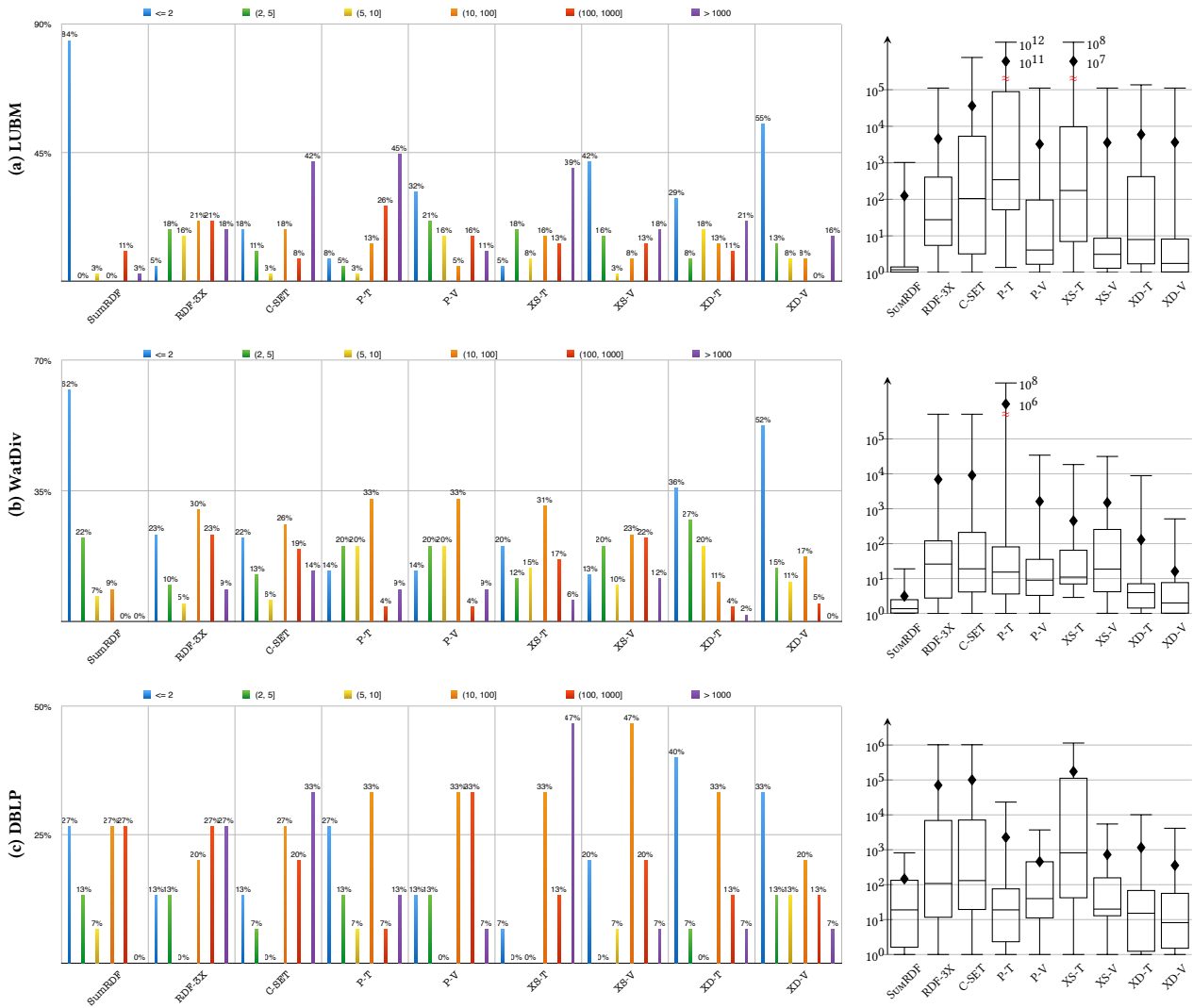
\captionof{figure}{Histograms and box plots of the distribution of the q-error on test datasets}\label{fig:qerror}
\end{table*}

We implemented our techniques in a prototype system called
\our.\footnote{\url{http://www.cs.ox.ac.uk/isg/tools/SumRDF/}} The system is
written in Java. It evaluates queries over graphs stored in RAM using left-deep
index nested loop joins. We evaluated our system with three distinct
objectives. First, we compared the accuracy of the estimates produced by \our
and several state-of-the-art RDF and relational systems. Second, we
investigated the impact of the correct handling of $\mu$-unifiable queries;
these correspond to self-joins and are difficult to estimate. Third, we
evaluated the usefulness of the bounds provided by
Theorem~\ref{th:qerrorbound}. We ran our experiments on a Dell computer with
96~GB of RAM, Fedora 22, and two Xeon X5667 processors with 16 physical cores.

\subsection{Test Data}

We used three well-known RDF benchmarks shown in Table~\ref{table:data}. We
group queries by their shape as linear, star, snowflake (joined stars), and
complex (e.g., cyclic). For each group, we show the number of all queries, the
number of $\mu$-unifiable queries, and the minimal and maximal numbers of atoms
in a query. Each query is assigned an ID. All datasets and queries are
available on the \our Web site.

\myparagraph{LUBM}~\cite{DBLP:journals/ws/GuoPH05} is a synthetic benchmark
comprising a data generator, an OWL~2~DL ontology, and 14 test queries. The
ontology must be taken into account for the queries to produce results, so we
generated an RDF graph with 500 universities and extended it with triples
implied by the ontology, thus obtaining 91.1~M triples. The 14 queries are
relatively simple, and are all $\mu$-unification-free, so we additionally
handcrafted 24 queries, five of which are $\mu$-unifiable.

\myparagraph{WatDiv}~\cite{DBLP:conf/semweb/AlucHOD14} is another synthetic
benchmark designed to stress-test RDF systems. We used its data generator to
obtain an RDF graph with 108.9~M triples. The benchmark also includes three
queries and 17 query templates with one parameter, and a generator that
replaces the parameter with resources from the RDF graph. We instantiated each
query template into five concrete queries and, if the parameter was not in the
class position, we obtained one more query by replacing the parameter with a
fresh variable. We thus obtained 103 queries overall, all of which are
$\mu$-unification-free.

\myparagraph{DBLP} is a bibliography database that was converted into an RDF
graph of 55.5~M triples. We used six queries already been considered in the
literature \cite{Zou:2014:GGS}, and we handcrafted nine additional queries. We
thus obtained 15 queries, eight of which are $\mu$-unifiable.

\subsection{Constructing the Summaries}

LUBM can be easily partitioned into ten partitions, and the numbers of
connections do not vary much across resources so we used just one bucket per
histogram. The typed summary for LUBM contained fewer than 15,000 edges, so the
refinement step was not needed. In contrast, WatDiv and DBLP could not be
effectively partitioned, so we used just one partition type, and we manually
selected between one and 64 histogram buckets per predicate. The typed
summaries were large so we further refined them using ${\mdim = 20}$, ${\ndim =
2}$, and $\mathit{target}$ of 45,000 for WatDiv and 30,000 for DBLP.
Table~\ref{table:data} shows that each graph was compressed by a factor more
than 1,000.

\subsection{Comparison Systems}

We evaluated our system against a portfolio of the following four systems,
which represent the state of the art in open-source and commercial (RDF) data
management.

\myparagraph{\rdfx} \cite{Neumann2010} v0.3.7 is a state of the art RDF system
whose cardinality estimator was specifically tailored to RDF and was shown to
outperform the related approaches.

\myparagraph{\cset} is the \emph{characteristic sets} technique
\cite{Neumann:2011} that specifically targets RDF and pays particular attention
to star queries. There is no publicly available implementation of this
technique, so we reimplemented it ourselves. It was shown to outperform the
approach by \citet{Stocker:2008}, so we did not test the latter separately.

\myparagraph{PostgreSQL} v9.4.5 uses a conventional cardinality estimator with
one-dimensional histograms. Statistics collection was automatic.

\myparagraph{SystemX} is a commercial RDBMS.\footnote{Publishing benchmarks
with the system named is prohibited by the system's license.} Its highly tuned
cardinality estimator operates in two modes: in the \emph{static mode} the
system uses statistics collected automatically prior to estimation, and in the
\emph{dynamic mode} it can sample the data during estimation if the estimate
accuracy is low. We used the maximum sampling level.

\smallskip

To store RDF data into RDBMSs, we replaced all resources with unique integers,
and then we considered two storage modes. First, we stored all triples in one
ternary \emph{triple table}, so all queries amount to self-joins over this
table. Second, we \emph{vertically partitioned} the data by storing $\tuple{s,
\rdftype, o}$ as tuple $\tuple{s}$ in a unary relation $o$, and $\tuple{s, p,
o}$ with ${p \neq \rdftype}$ as tuple $\tuple{s,o}$ in a binary relation $p$.
Using multiple tables generally allows for more accurate statistics collection.
We thus obtained a total of eight comparison systems: \rdfx, \cset, two
(\postgres{T} and \postgres{V}) PostgreSQL variants, and two static
(\system{S}{T} and \system{S}{V}) and two dynamic (\system{D}{T} and
\system{D}{V}) SystemX variants. For systems other than \cset, we obtained
their cardinality estimates using their \texttt{EXPLAIN} facility.

\subsection{Accuracy Experiments}\label{sec:evaluation:accuracy}

Figure~\ref{fig:qerror} shows the q-error distribution per dataset and system,
thus summarising the results of our accuracy experiments on all 156 test
queries. Each bar chart shows the percentages of the queries with q-error in
the intervals from the legend. Each box plot groups the q-errors per system:
the box shows the lower and upper quartiles of the q-error, the line dividing
the box shows the median, the line whiskers show the minimum and maximum, and
the diamond mark shows the average. Please note that the maximum q-errors
sometimes fall beyond the shown portion of the $y$ axis.

As one can see from the figure, \our outperforms the other systems by several
orders of magnitude on LUBM and WatDiv. On DBLP, it exhibits the best maximum
and average q-errors, and SystemX with dynamic sampling is the only system that
provides comparable performance. We conjecture the latter to be due to
selections in queries: if a query contains a resource that has not been
captured accurately using histograms, more accurate statistics are obtained on
demand. Overall, our technique considerably improves the state of the art in
cardinality estimation of RDF queries.

We noticed no pattern in the systems' behaviour on queries with q-error above
10: each system undershoots on some, but overshoots on other queries. 
Different variants of the same system often differ greatly on the same query;
for example, \postgres{T} might overshoot while \postgres{V} would undershoot,
and similarly for SystemX. Also, for each dataset and system, there is a
query where the system is most accurate, but \our is never the least accurate.

Atoms of the form $\tuple{x,\rdftype, C}$, whose presence in the query does not
affect the query answers, were a common source of estimation errors. In \cset,
\postgres{T}, and \system{S}{T}, such atoms impose a selection on the
$\rdftype$ property that is combined using independence assumption, which
usually leads to a significant underestimation.

We further analysed \emph{difficult queries}, where \our exhibited a q-error
above 10. Five queries of LUBM have q-errors between 750 and 1,000; three are
triangular and two are cyclic. Nine queries of WatDiv have q-error between 10
to 20, and all of these contain one or two selections. Finally, on DBLP, three
queries have q-errors between 10 and 20, and additional five have q-errors
above 20; all are either cyclic or have up to three selections. Thus, queries
containing cycles and/or selections are sometimes hard for \our. However, most
systems exhibited a q-error above 10 on these queries as well. SystemX with
dynamic sampling performed better on acyclic queries with selections: the
system would additionally sample whenever the statistics were insufficiently
precise.

\subsection{Impact of $\mu$-Unification}

\begin{table}[tb]
    \centering
    \footnotesize
    \setlength{\tabcolsep}{1.1pt}
    \def\arraystretch{1.3}
    \newcommand{\pK}{\phantom{k}}
    \newcommand{\vr}[1]{\rotatebox[origin=c]{90}{\;\;#1\;\;}}
    \newcommand{\hr}[1]{\multicolumn{1}{c|}{\rotatebox[origin=c]{90}{\;\;#1\;\;}}}
    \caption{The q-error for $\mu$-unifiable queries}\label{tab:qerror:unifiable}
    \begin{tabular}{|r|r|*{10}{r|}}
    \hline
                                & \hr{ID}   & \hr{Eq.\ \eqref{eq:freeexpect}}   & \hr{\our} & \hr{\rdfx}    & \hr{\cset}    & \hr{\postgres{T}} & \hr{\postgres{V}} & \hr{\system{S}{T}}    & \hr{\system{S}{V}}    & \hr{\system{D}{T}}    & \hr{\system{D}{V}} \\
    \hline
    \multirow{5}{*}{\vr{LUBM}}  & U1        & 1.4~k                             & 769.8     & 71.3~\pK      & 205.2~k       & 3.5~T             & 122.5~\pK         & 26.3~G                & 2.6~k                 & 17.5~k                & 3.4~k              \\
                                & U2        & 111.0~k                           & 926.3     & 111.0~k       & 111.0~k       & 7.9~T             & 111.0~k           & 12.3~M                & 111.0~k               & 982.0~\pK             & 111.0~k            \\
                                & U3        & 1.1~\pK                           & 1.4       & 254.5~\pK     &  67.5~k       & 25~G              & 2.7~\pK           & 87.3~M                & 1.2~\pK               & 4.8~k                 & 1.0~\pK            \\
                                & U4        & 1.2~\pK                           & 1.8       & 455.4~\pK     & 763.9~k       & 650~G             & 9.8~\pK           & 49.2~M                & 1.1~\pK               & 2.7~k                 & 1.2~\pK            \\
                                & U5        & 1.1~\pK                           & 1.0       & 5.9~\pK       & 1.3~\pK       & 11.2~k            & 1.6~\pK           & 10.2~k                & 1.5~\pK               & 12.2~k                & 1.5~\pK            \\
    \hline
    \multirow{8}{*}{\vr{DBLP}}  & D1        & 1.2~\pK                           & 1.2       & 9.9~k         & 440.8~k       & 106.3~\pK         & 3.4~\pK           & 220.4~k               & 1.4~\pK               & 2.8~\pK               & 2.8~\pK            \\
                                & D2        & 2.1~\pK                           & 2.1       & 1.0~M         & 1.0~M         & 23.2~k            & 40.0~\pK          & 1.0~M                 & 8.8~\pK               & 15.2~\pK              & 3.2~\pK            \\
                                & D3        & 238.0~\pK                         & 19.9      & 11.9~k        & 238.0~\pK     & 45.4~\pK          & 238.0~\pK         & 238.0~\pK             & 238.0~\pK             & 79.3~\pK              & 238.0~\pK          \\
                                & D4        & 320.2~\pK                         & 320.2     & 3.9~k         & 29.7~k        & 432.1~\pK         & 1.0~k             & 213.5~k               & 5.5~k                 & 6.3~k                 & 55.8~\pK           \\
                                & D5        & 721.3~\pK                         & 721.3     & 313.3~\pK     & 4.1~k         & 3.9~\pK           & 344.4~\pK         & 4.1~k                 & 4.1~k                 & 2.0~\pK               & 4.1~k              \\
                                & D6        & 57.0~\pK                          & 57.0      & 57.0~\pK      & 57.0~\pK      & 2.7~\pK           & 57.0~\pK          & 57.0~\pK              & 57.0~\pK              & 57.0~\pK              & 57.0~\pK           \\
                                & D7        & 210.3~\pK                         & 210.3     & 10.1~k        & 10.1~k        & 10.1~k            & 561.3~\pK         & 10.1~k                & 76.5~\pK              & 10.1~k                & 8.2~\pK            \\
                                & D8        & 817.0~\pK                         & 817.0     & 817.0~\pK     & 817.0~\pK     & 38.9~\pK          & 817.0~\pK         & 817.0~\pK             & 817.0~\pK             & 817.0~\pK             & 817.0~\pK          \\
    \hline
    \multicolumn{2}{|c|}{min}   & 1.1~\pK                                       & 1.0       & 5.9~\pK       & 1.3~\pK       & 2.7~\pK           & 1.6~\pK           & 57.0~\pK              & 1.1~\pK               & 2.0~\pK               & 1.0~\pK            \\
    \multicolumn{2}{|c|}{med}   & 210.3~\pK                                     & 57.0      & 817.0~\pK     & 29.7~k        & 10.1~k            & 122.5~\pK         & 213.5~k               & 76.6~\pK              & 982.1~\pK             & 55.8~\pK           \\
    \multicolumn{2}{|c|}{avg}   & 8.8~k                                         & 296.1     & 90.1~k        & 204.3~k       & 940.8~G           & 8.8~k             & 31.8~M                & 9.5~k                 & 4.3~k                 & 9.2~k              \\
    \multicolumn{2}{|c|}{max}   & 111.0~k                                       & 926.2     & 1.0~M         & 1.0~M         & 7.9~T             & 111.0~k           & 263.4~M               & 111.0~k               & 17.6~k                & 111.0~k            \\
    \hline
    \end{tabular}
\end{table}

Formula \eqref{eq:freeexpect} for $\mu$-unification-free queries does not
require computing the partition base for the query, so the formula should be
easier to evaluate than the general formula from Theorem~\ref{th:expect}.
Hence, we investigated whether using the general formula improves estimation
accuracy. Table~\ref{tab:qerror:unifiable} compares the q-error of the estimate
computed using formula \eqref{eq:freeexpect} with the q-errors of all other
systems (including \our) for all $\mu$-unifiable queries. WatDiv does not occur
in the table since all of its queries are $\mu$-unification-free.

As one can see, $\mu$-unifiable queries tend to be difficult: for each query,
at least one system exhibits a q-error above 900. However, handling
$\mu$-unifiable queries correctly reduces the maximum error by two orders of
magnitude, and it also considerably improves the median and the average.
Finally, \our considerably outperforms all other systems on all aggregate
metrics, apart from the median of \system{D}{V}, whereas the aggregate metrics
of formula \eqref{eq:freeexpect} seem similar to \postgres{V}. Thus, our
principled approach to $\mu$-unifiable queries contributes substantially to
estimation accuracy.

\subsection{Probabilistic Error Bounds}\label{sec:evaluation:bounds}

For $\varepsilon$ equal to 10, 100, and 1,000, Table~\ref{table:std} shows the
bound on the probability of q-error exceeding $\varepsilon$ computed as stated
in Theorem~\ref{th:qerrorbound}. The probabilities (expressed as percentages)
are grouped into five intervals. Each table cell shows the number of queries in
the interval, and in parentheses the number of queries where the q-error indeed
exceeds $\varepsilon$. For example, on LUBM, there are 20 queries for which the
probability of q-error being larger than 10 is less than 1\%, and for two of
these the q-error is larger than 10.

We could not compute $\std{q}$ in several cases: query ${q \cup \rho(q)}$ from
Theorem~\ref{th:variance} is always $\mu$-unifiable and tends to be very
complex. Moreover, Theorem~\ref{th:qerrorbound} does not apply to queries
with ${\expected{q} < 1}$. The numbers in parentheses next to each dataset name
show the number of queries on which we could compute $\prob{\qerror \geq
\varepsilon}$.

Already for ${\varepsilon = 10}$, we can correctly bound with 99\% certainty
the estimates for 18 out of 29 LUBM queries, 15 out of 48 WatDiv queries, and
two out of three DBLP queries. With ${\varepsilon = 100}$, we cover 75\% of
LUBM, 87\% of WatDiv, and all of the DBLP queries. These results suggest that
the confidence bounds provided by Theorem~\ref{th:qerrorbound} are indeed very
useful. In our future work, we shall incorporate them into query planning
algorithms so that a planner can prefer plans with higher confidence
cardinality estimates.

\section{Conclusion and Outlook}\label{sec:conclusion}

We presented a new approach for estimating the cardinality of CQs on RDF
graphs. Our approach is based on graph summarisation and it formalises the
estimation problem based on a precise `possible worlds' semantics. Our
technique can also provide statistical confidence for an estimate, even for
arbitrarily shaped, complex CQs. Finally, we showed experimentally that our
approach outperforms state of the art RDF and relational cardinality estimators.

We see many exciting opportunities for future work. On the theoretical side,
supporting range queries should be easy, but adding \texttt{DISTINCT} and
aggregation is likely to be more involved. We shall also try to incorporate
further information about the data and thus reduce the number of graphs
represented by a summary. On the practical side, we shall further analyse the
sources of errors and develop ways of computing more precise summaries,
possibly by incorporating variants of dynamic sampling. We shall also
investigate how to update the summary without recomputing it from scratch when
the underlying graph is updated.

\begin{acks}
The research presented in this paper was supported by the
\grantsponsor{EPSRC}{EPSRC}{http://www.epsrc.ac.uk/} projects MaSI$^3$
(\grantnum{EPSRC}{EP/P025943/1}), DBOnto (\grantnum{EPSRC}{EP/L012138/1}), and
ED$^3$ (\grantnum{EPSRC}{EP/N014359/1}).
\end{acks}

\bibliographystyle{ACM-Reference-Format}
\balance
\bibliography{references}

\ifappendix{
    \onecolumn
    \clearpage
    \appendix
    \section{Proof for Section 3}

In this appendix, we prove Proposition \ref{prop:consistency-checking} and Theorem \ref{th:qerrorbound}.

\subsection{Proof of Proposition \ref{prop:consistency-checking}}

To prove Proposition \ref{prop:consistency-checking}, we start by defining the
inverse of the summarisation function $\mu$ in a summary $\summary =
\summarytuple$.

\begin{definition}\label{def:inverse-mapping}\em
    For $\summary = \summarytuple$ a summary, the \emph{inverse function}
    $\inv{\mu}$ of $\mu$ maps each bucket ${b \in \res{\sgraph}}$ and each triple
    $\striple \in \sgraph$ as follows:
    \begin{align*}
        \inv{\mu}(b)        & = \setof{\gres \in \dom{\mu} \mid \mu(\gres) = b},\\
        \inv{\mu}(\striple) & = \setof{\gtriple \in \dom{\mu} \times \dom{\mu} \times \dom{\mu} \mid \mu(\gtriple) = \striple}.
    \end{align*}
\end{definition}

Please note that, by Definition \ref{def:summary} of size $\size{\cdot}$, we have
that $\size{b} = |\inv{\mu}(b)|$ and $\size{\striple} = |\inv{\mu}(\striple)|$, for
each $b \in \res{\sgraph}$ and each $\striple \in \sgraph$. The next lemma shows
that the consistency of a summary can be determined as described in Proposition
\ref{prop:consistency-checking}; in addition, it shows how to count the number
of graphs represented by a consistent summary.

\begin{lemma}\label{lemma:consistent-count}
For $\summary = \summarytuple$ a summary, $\summary$ is consistent if and only
if $\weight{\striple} \leq \size{\striple}$ for each ${\striple \in \sgraph}$.
Furthermore, when $\summary$ is consistent, the following holds:
\begin{equation}
    |\semantics| = \prod_{\striple \in \sgraph} \binom{\size{\striple}}{\weight{\striple}}. \label{eq:consistent-count}
\end{equation}
\end{lemma}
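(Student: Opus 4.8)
The plan is to establish a bijection between the graphs represented by $\summary$ and the tuples of independent choices of triples into which each summary triple is expanded, and then to count these tuples. The engine of the argument is the inverse function $\inv{\mu}$ from Definition~\ref{def:inverse-mapping}: first I would observe that the sets $\inv{\mu}(\striple)$ for distinct ${\striple \in \sgraph}$ are pairwise disjoint, since $\mu$ assigns a unique image to each triple over $\dom{\mu}$. This disjointness is the crux: it guarantees that the expansions of different summary triples cannot interfere, so the choices are independent and the count factorises as a product over $\sgraph$.

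Next I would prove the bijection in both directions. For the forward direction, every ${\graph \in \semantics}$ decomposes as ${\graph = \bigcup_{\striple \in \sgraph} \graph_\striple}$ with ${\graph_\striple = \graph \cap \inv{\mu}(\striple)}$: condition~(i) of Definition~\ref{def:semantics} ensures $\mu$ is defined on every triple of $\graph$, condition~(ii) ensures every such triple maps into $\sgraph$ so the union exhausts $\graph$, and condition~(iii) yields ${|\graph_\striple| = \weight{\striple}}$ together with ${\graph_\striple \subseteq \inv{\mu}(\striple)}$. For the backward direction, any family ${(\graph_\striple)_{\striple \in \sgraph}}$ with ${\graph_\striple \subseteq \inv{\mu}(\striple)}$ and ${|\graph_\striple| = \weight{\striple}}$ produces a represented graph via ${\graph = \bigcup_{\striple} \graph_\striple}$; here I must verify all three conditions, using the disjointness of the $\inv{\mu}(\striple)$ to recover each $\graph_\striple$ from $\graph$ (so the map is injective) and to confirm condition~(iii).

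For the counting, since ${\size{\striple} = |\inv{\mu}(\striple)|}$, the number of admissible $\graph_\striple$ for a fixed $\striple$ is the number of $\weight{\striple}$-element subsets of a $\size{\striple}$-element set, namely $\binom{\size{\striple}}{\weight{\striple}}$. By the independence of the choices, ${|\semantics|}$ equals the product of these counts, which is exactly~\eqref{eq:consistent-count}. The consistency characterisation then falls out: this product is nonzero iff every factor is at least one, i.e.\ iff ${\weight{\striple} \leq \size{\striple}}$ for all $\striple$; and if ${\weight{\striple} > \size{\striple}}$ for some $\striple$, then no admissible $\graph_\striple$ exists, so ${\semantics = \emptyset}$.

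The main obstacle I anticipate is the careful verification of the backward direction, and specifically of condition~(ii) for the reconstructed graph. This is precisely where the hypothesis that $\weightfunction$ takes \emph{positive} natural values is essential: positivity forces each $\graph_\striple$ to be nonempty, which makes every ${\striple \in \sgraph}$ appear in $\mu(\graph)$ and hence guarantees ${\mu(\graph) = \sgraph}$. Without this, a summary triple could receive an empty expansion and the equality in condition~(ii) would fail, breaking the bijection.
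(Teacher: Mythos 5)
Your proposal is correct and follows essentially the same route as the paper's proof: both decompose each represented graph into the per-triple pieces ${\graph \cap \inv{\mu}(\striple)}$, characterise membership in $\semantics$ by the conditions ${\graph \subseteq \bigcup_{\striple} \inv{\mu}(\striple)}$ and ${|\inv{\mu}(\striple) \cap \graph| = \weight{\striple}}$, and count the independent choices of $\weight{\striple}$-element subsets of $\inv{\mu}(\striple)$ to obtain the product of binomial coefficients. Your explicit remarks on the pairwise disjointness of the sets $\inv{\mu}(\striple)$ and on the positivity of $\weightfunction$ being what forces ${\mu(\graph) = \sgraph}$ are details the paper leaves implicit, but they are correct and strengthen the argument rather than change it.
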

\begin{proof}
Let $\summary = \summarytuple$ be a summary. Then, by Definition
\ref{def:semantics} of a graph represented by $\summary$ and by Definition
\ref{def:inverse-mapping} of $\inv{\mu}$, for each graph $\graph$, we have that
${\graph \in \semantics}$ if and only if
\begin{enumerate}[label = (\alph{*})]
    \item \label{item:repr1} $\graph \subseteq \bigcup_{\striple \in \sgraph} \inv{\mu}(\striple)$, and
    \item \label{item:repr2} ${\weight{\striple} = |\inv{\mu}(\striple) \cap \graph|}$ for each ${\striple \in \sgraph}$.
\end{enumerate}
We next show that $\summary$ is consistent if and only if ${\weight{\striple}
\leq \size{\striple}}$ for each ${\striple \in \sgraph}$.

\smallskip
    
$(\Rightarrow)\;$ Assume that $\summary$ is consistent, and so ${\semantics
\neq \emptyset}$. Consider an arbitrary RDF graph ${\graph \in \semantics}$ and
an arbitrary triple ${\striple \in \sgraph}$. Then, $\weight{\striple}$ is the number
of triples in $\inv{\mu}(\striple)$ that occur in $\graph$. Since ${\size{\striple} =
|\inv{\mu}(\striple)|}$, we have that ${\weight{\striple} \leq \size{\striple}}$.

\smallskip

$(\Leftarrow)$\; Assume that $\weight{\striple} \leq \size{\striple}$ for each
$\striple \in \sgraph$. Then, for each ${\striple \in \sgraph}$, let $\graph_{\striple}$
be an arbitrary, but fixed subset of ${\inv{\mu}(\striple)}$ of size
$\weight{\striple}$; since $\weight{\striple} \leq \size{\striple}$, such
$\graph_{\striple}$ exists. Finally, let $G = \bigcup_{\striple \in \sgraph}
\graph_{\striple}$. By properties \ref{item:repr1} and \ref{item:repr2}, we have that
$G \in \semantics$.

\medskip

Therefore, $\summary$ is consistent if and only if ${\weight{\striple} \leq
\size{\striple}}$ for each ${\striple\in \sgraph}$. To conclude the proof, we
assume that $\summary$ is consistent and show that Equation
\eqref{eq:consistent-count} holds. By properties \ref{item:repr1} and
\ref{item:repr2}, each graph ${\graph \in \semantics}$ can be generated by
independently choosing, for each triple ${\striple \in \sgraph}$, a set
$\graph_{\striple}$ of atoms of size $\weight{\striple}$ such that ${\graph_{\striple}
\subseteq \inv{\mu}(\striple)}$. For each ${\striple \in \sgraph}$, there are
precisely
\begin{displaymath}
    \binom{\size{\striple}}{\weight{\striple}}
\end{displaymath}
distinct such sets $\graph_{\striple}$. So, we can count the total
number of RDF graphs in $\semantics$ as shown in Equation
\eqref{eq:consistent-count}.
\end{proof}

\subsection{Proof of Theorem \ref{th:qerrorbound}}

\qerrorbound*
\begin{proof}

First, note that since $\varepsilon > 1$,
$$
\frac{\varepsilon \cdot \std{q}}{(\varepsilon - 1) \cdot \expected{q}} > \frac{\std{q}}{(\varepsilon - 1) \cdot \expected{q}},
$$ 
so the additional inequality implies the main one. Therefore, it is enough to prove the first bound for the case when $\expected{q} > \varepsilon > 1$ and the second bound for the case when $\varepsilon \geq \expected{q} \geq 1$. We do this in the rest of the proof.

\medskip

Let $\expected{q} > \varepsilon > 1$. The inequality
 \begin{displaymath}
\qerror =    \max \left( \frac{ \max(|\ans{q}{\graph}|, 1)}{\max(\expected{q}, 1)}, \frac{\max(\expected{q}, 1)}{ \max(|\ans{q}{\graph}|, 1)}\right) < \varepsilon
 \end{displaymath}
is equivalent to 
$$
\frac{\max(\expected{q}, 1)}{\varepsilon} < \max(|\ans{q}{\graph}|, 1) < \varepsilon \cdot \max(\expected{q}, 1)
$$
and, since $\expected{q} > \varepsilon > 1$, to 
$$
\frac{\expected{q}}{\varepsilon} < |\ans{q}{\graph}| < \varepsilon \cdot \expected{q}.
$$
Therefore, since $\varepsilon \cdot \expected{q} - \expected{q} > \expected{q} - \myfrac{\expected{q}}{\varepsilon}$, 
\begin{multline*}
\prob{\qerror < \varepsilon} = 
\prob{\frac{\expected{q}}{\varepsilon} < |\ans{q}{\graph}| < \varepsilon \cdot \expected{q} } = 
\prob{\frac{\expected{q}}{\varepsilon} - \expected{q} < |\ans{q}{\graph}| - \expected{q} < \varepsilon \cdot \expected{q} - \expected{q}} > {} \\
\prob{\frac{\expected{q}}{\varepsilon} - \expected{q} < |\ans{q}{\graph}| - \expected{q} < \expected{q} - \frac{\expected{q}}{\varepsilon}} = 
\prob{\left| |\ans{q}{\graph}| - \expected{q} \right| < \expected{q} - \frac{\expected{q}}{\varepsilon}} = 
\prob{\left| |\ans{q}{\graph}| - \expected{q} \right| < \frac{(\varepsilon - 1) \cdot \expected{q}}{\varepsilon}} > \\
1 - \left(\frac{\varepsilon \cdot \std{q}}{(\varepsilon - 1) \cdot \expected{q}}\right)^2.
\end{multline*}
The last step follows from Chebyshev's inequality ${\prob{\big| |\ans{q}{\graph}| - \expected{q}\big| \geq k \cdot \std{q}} \leq
\frac{1}{k^2}}$ for $k = \frac{(\varepsilon - 1) \cdot \expected{q}}{\varepsilon \cdot \std{q}}$. This immediately implies the required bound
     $$
     {\prob{\qerror \geq \varepsilon} \leq \left(\frac{\varepsilon \cdot \std{q}}{(\varepsilon - 1) \cdot \expected{q}}\right)^2}.
    $$

\medskip

Let now $\varepsilon \geq \expected{q} \geq 1$. The inequality
$\qerror < \varepsilon$
is again equivalent to 
$$
\frac{\max(\expected{q}, 1)}{\varepsilon} < \max(|\ans{q}{\graph}|, 1) < \varepsilon \cdot \max(\expected{q}, 1)
$$
and, since $\varepsilon \geq \expected{q} \geq 1$, to $
|\ans{q}{\graph}| < \varepsilon \cdot \expected{q}$.
Therefore, 
\begin{multline*}
\prob{\qerror < \varepsilon} = 
\prob{ |\ans{q}{\graph}| < \varepsilon \cdot \expected{q} } = 
\prob{ |\ans{q}{\graph}| - \expected{q} < \varepsilon \cdot \expected{q} - \expected{q}} \geq {} \\
\prob{\expected{q} - \varepsilon \cdot \expected{q} < |\ans{q}{\graph}| - \expected{q} < \varepsilon \cdot \expected{q} - \expected{q}} = 
\prob{\left| |\ans{q}{\graph}| - \expected{q} \right| < \varepsilon \cdot \expected{q} - \expected{q}} = 
\prob{\left| |\ans{q}{\graph}| - \expected{q} \right| < (\varepsilon - 1) \cdot \expected{q}} > \\
1 - \left(\frac{\std{q}}{(\varepsilon - 1) \cdot \expected{q}}\right)^2.
\end{multline*}
The last step follows from Chebyshev's inequality for $k = \frac{(\varepsilon - 1) \cdot \expected{q}}{\std{q}}$. This immediately implies the required bound
     $$
     {\prob{\qerror \geq \varepsilon} \leq \left(\frac{\std{q}}{(\varepsilon - 1) \cdot \expected{q}}\right)^2}.
    $$
This concludes the proof of the theorem.
 \end{proof}

\section{Proofs for Section 4}

In this appendix, we prove Theorems \ref{th:expect} and \ref{th:variance}, and
Proposition \ref{prop:free-expect}. For the rest of this section, we fix an
arbitrary consistent summary ${\summary = \summarytuple}$ and a CQ $q$ such
that ${\res{q} \subseteq \dom{\mu}}$.

\subsection{Properties of the Partition Base}

We start by formalising the properties of the unifiable partitions of $q$ that
were intuitively outlined in Section \ref{sec:estimation:intuition}. To this
end, we first define the notion of an expansion w.r.t.\ a partition $P$ of an
answer $\tau$ to $\mu(q)$ over $\sgraph$.

\begin{definition}\label{def:expansion}\em
    Let $\tau$ be an answer to $\mu(q)$ on $\sgraph$ and let $P \in
    \pbase_\tau$ be a partition. A substitution $\pi$ with $\dom{\pi} =
    \var{q}$ and $\rng{\pi} \subseteq \dom{\mu}$ is a \emph{$\tau$-expansion}
    to $P$ if, for each ${x \in \var{q}}$,
    \begin{enumerate}[label=(\arabic*)]
        \item $\pi(x) \in \inv{\mu}(\tau(x))$,
        \item \label{item:subst1} $\psubst_{P}(x) \in \var{q}$ implies that $\pi(x) = \pi(\psubst_{P}(x))$,
    
        \item \label{item:subst2} $\psubst_{P}(x) \in \res{q}$ implies that $\pi(x) = \psubst_{P}(x)$.
    \end{enumerate}
    Then, $\expa[\tau]{P}$ contains each $\tau$-expansion to $P$, and
    $\mexpa[\tau]{P}$ contains each ${\pi \in \expa[\tau]{P}}$ for which no
    ${P' \in \pbase_\tau}$ exists such that ${P \prec P'}$ and ${\pi \in
    \expa[\tau]{P'}}$.
\end{definition}

\begin{proposition}\label{prop:base}
    For each answer $\tau$ to $\mu(q)$ on $\sgraph$ and each partition ${P \in
    \pbase_\tau}$, the following properties hold:
    \begin{enumerate}[label=(P\arabic*)]
        \item \label{item:base1} for each ${\pi \in \expa[\tau]{P}}$, exactly one partition ${P'
        \in \pbase_\tau}$ exists such that ${P \preceq P'}$ and ${\pi \in
        \mexpa[\tau]{P'}}$,
        
        \item \label{item:base2} for each ${\pi \in \mexpa[\tau]{P}}$ and all
        atoms $\qatom,\qatom' \in q$, there exists $u \in P$ with
        $\qatom,\qatom'\in u$ if and only if $\pi(\qatom) = \pi(\qatom')$.
    \end{enumerate}
\end{proposition}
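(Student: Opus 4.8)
The plan is to prove the two properties separately, exploiting the combinatorial structure of the partition base under the partial order $\preceq$. Throughout, I fix an answer $\tau$ to $\mu(q)$ on $\sgraph$ and a partition $P \in \pbase_\tau$. The central object is, for a given expansion $\pi \in \expa[\tau]{P}$, the \emph{collapse partition} $P_\pi$ of $q$ that groups two atoms $\qatom$ and $\qatom'$ together exactly when $\pi(\qatom) = \pi(\qatom')$. The whole argument hinges on showing that $P_\pi$ is well-defined (i.e.\ lies in $\pbase_\tau$) and is the unique maximal partition witnessing $\pi$.

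\textbf{Proof of \ref{item:base1}.}
First I would verify that $P_\pi$ is a genuine partition of $q$: the relation ``$\pi$ maps to the same triple'' is an equivalence relation on the atoms, so its classes form a partition. Next I must check $P_\pi \in \pbase_\tau$, which requires two things. For unifiability, I observe that if two atoms land on the same triple under $\pi$, then all three positional pairs (subjects, predicates, objects) are equated by $\pi$; since $\pi$ maps into $\dom{\mu}$ and no two \emph{distinct} resources of $\res{q}$ can be identified by any substitution, the term graph $\mathcal{G}_{P_\pi}$ cannot connect two distinct resources, so $P_\pi$ is unifiable. For membership in $\pbase_\tau$ (satisfaction by $\tau$), I use that $\pi(x) \in \inv{\mu}(\tau(x))$ for every variable, so whenever $\psubst_{P_\pi}$ equates two variables or a variable with a resource, applying $\mu$ and using $\mu(\pi(x)) = \tau(x)$ recovers exactly conditions (i) and (ii) of Definition~\ref{def:factor}. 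Then I show $P \preceq P_\pi$: every block $u \in P$ forces, via condition~\ref{item:subst1} and~\ref{item:subst2} of Definition~\ref{def:expansion}, that $\pi$ agrees on the relevant positions of atoms in $u$, hence those atoms collapse together and $u$ is contained in some block of $P_\pi$. Finally, $\pi \in \mexpa[\tau]{P_\pi}$ and uniqueness: by construction $P_\pi$ is the finest description of which atoms $\pi$ actually identifies, so no strictly larger $P' \succ P_\pi$ in $\pbase_\tau$ can still satisfy $\pi \in \expa[\tau]{P'}$ (such a $P'$ would demand an equality among atoms that $\pi$ does not realise), giving maximality; and any $P'$ with $\pi \in \mexpa[\tau]{P'}$ must equal $P_\pi$ because $\mexpa$-membership pins down the collapse pattern exactly.

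\textbf{Proof of \ref{item:base2}.}
This is essentially the defining property of $\mexpa$ read off the characterisation just established. For $\pi \in \mexpa[\tau]{P}$, part~\ref{item:base1} applied with $P$ itself shows $P$ is the unique maximal partition, so $P = P_\pi$; then the statement ``there exists $u \in P$ with $\qatom,\qatom' \in u$ iff $\pi(\qatom) = \pi(\qatom')$'' is literally the definition of the collapse partition $P_\pi$. The one direction that needs care is deriving the equivalence from maximality rather than circularly: I would argue that the forward direction (same block $\Rightarrow$ $\pi$ agrees) follows from $\pi \in \expa[\tau]{P}$ via conditions~\ref{item:subst1}--\ref{item:subst2}, while the reverse direction (equal image $\Rightarrow$ same block) is exactly where maximality of $P$ is used: if $\pi$ identified two atoms from different blocks, I could merge those blocks to obtain $P' \succ P$ with $\pi \in \expa[\tau]{P'}$, contradicting $\pi \in \mexpa[\tau]{P}$.

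\textbf{Main obstacle.}
The delicate step is establishing $P_\pi \in \pbase_\tau$ cleanly, specifically the interplay between unifiability of $P_\pi$ (a purely syntactic condition on the term graph, independent of $\tau$) and satisfaction by $\tau$ (a semantic condition tying $\psubst_{P_\pi}$ to the concrete images under $\tau$). I expect to spend most of the effort verifying that the $\leq$-least-term definition of $\psubst_{P_\pi}$ is compatible with the equalities $\pi$ enforces, and that no unintended resource-collision arises; once $P_\pi$ is shown to be a well-defined member of $\pbase_\tau$ that dominates $P$ and uniquely realises $\pi$ maximally, both \ref{item:base1} and \ref{item:base2} follow quickly.
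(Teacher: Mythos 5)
Your proof is correct, but it reorganises the argument around a single object that the paper only uses for part of the proof. You build the canonical \emph{collapse partition} $P_\pi$ (atoms grouped iff $\pi$ identifies them), verify once that $P_\pi \in \pbase_\tau$ with $P \preceq P_\pi$ and $\pi \in \mexpa[\tau]{P_\pi}$, and then derive both existence and uniqueness in \ref{item:base1} from the fact that any maximal $P'$ realising $\pi$ must coincide with $P_\pi$; property \ref{item:base2} then falls out as the definition of $P_\pi$. The paper instead splits \ref{item:base1}: existence comes for free from finiteness of $\pbase_\tau$ and the fact that $\preceq$ is a partial order, while uniqueness is proved by showing $\mexpa[\tau]{P_1} \cap \mexpa[\tau]{P_2} = \emptyset$ for distinct $P_1, P_2$ via the \emph{join} of the two partitions (the partition generated by the smallest equivalence relation refining both), and only the backward direction of \ref{item:base2} constructs the collapse partition to reach a contradiction with maximality. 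Your route is more economical --- the hard verification (unifiability of $P_\pi$, satisfaction by $\tau$, and compatibility of $\psubst_{P_\pi}$ with the equalities $\pi$ enforces, which you correctly flag as the delicate step) is done once rather than being repeated in slightly different forms for the join and for the collapse partition --- whereas the paper's join argument is more symmetric and makes the lattice-like structure of $\pbase_\tau$ explicit, which is perhaps easier to audit locally. One small caution: in your sketch of the backward direction of \ref{item:base2} you suggest merging just the two offending blocks to obtain $P' \succ P$; you should either check that this merged partition is itself unifiable and satisfied by $\tau$ (it is, since all atoms in the merged block share a $\pi$-image, but this needs saying) or simply take $P' = P_\pi$ as your main argument already does.
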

\begin{proof}
For the rest of this proof, we let $\tau$ be an arbitrary answer to $\mu(q)$ on
$\sgraph$.

\medskip

\emph{Property \ref{item:base1}}~~Since $\pbase_\tau$ is finite and $\preceq$
is a partial order on $\pbase_\tau$, for each ${P \in \pbase_\tau}$ and each
${\pi \in \expa[\tau]{P}}$, there exists a partition ${P' \in \pbase_\tau}$ such
that ${P \preceq P'}$ and ${\pi \in \mexpa[\tau]{P'}}$. Hence, we next show
that ${\mexpa[\tau]{P_1} \cap \mexpa[\tau]{P_2} = \emptyset}$ for all
distinct partitions ${P_1, P_2 \in \pbase_\tau}$.

Consider two arbitrary distinct partitions ${P_1,P_2 \in \pbase_\tau}$. By the
definition of $\mexpa[\tau]{P_1}$ and $\mexpa[\tau]{P_2}$, if ${P_1 \prec P_2}$
or ${P_2 \prec P_1}$ holds, we then have ${\mexpa[\tau]{P_1} \cap
\mexpa[\tau]{P_2} = \emptyset}$. Thus, in the rest of this proof, we consider
the case in which ${P_1 \not \prec P_2}$ and ${P_2 \not \prec P_1}$.

Let $\approx$ be the smallest equivalence relation on (the atoms of) $q$ such
that ${\qatom \approx \qatom'}$ for all atoms ${\qatom,\qatom' \in q}$ such
that ${\qatom,\qatom' \in u}$ for ${u \in P_i}$ with ${i \in \interval{1}{2}}$.
Then, let $P$ be the set of all the equivalence classes of $\approx$. Since
$P_1$ and $P_2$ are partitions of $q$ and $\approx$ is an equivalence relation
on the atoms of $q$, we have that $P$ is a partition of $q$ as well. By the
definition of $\approx$, for each ${i \in \interval{1}{2}}$ and each ${u \in
P_i}$, there exists an equivalence class of $\approx$ that contains each atom
in $u$; thus, we have that ${P_i \preceq P}$. By the initial assumptions, we
have that ${P_1 \neq P_2}$, ${P_1 \not \prec P_2}$, and ${P_2 \not \prec P_1}$;
therefore, we also have that ${P_1 \prec P}$ and ${P_2 \prec P}$. We next show
that ${P \in \pbase_\tau}$ and, for each ${\pi \in \expa[\tau]{P_1} \cap
\expa[\tau]{P_2}}$, we have that ${\pi \in \expa[\tau]{P}}$. Please note that
this suffices to show that ${\mexpa[\tau]{P_1} \cap \mexpa[\tau]{P_2} =
\emptyset}$ since ${P_1 \prec P}$ and ${P_2 \prec P}$.

Consider an arbitrary substitution $\pi$ with ${\dom{\pi} = \var{q}}$ and
${\rng{\pi} \subseteq \dom{\mu}}$ such that ${\pi \in \expa[\tau]{P_1} \cap
\expa[\tau]{P_2}}$. Since $\pi$ is a $\tau$-expansion to $P_1$ and $P_2$, for
each $x \in \var{q}$, we have that ${\tau(x) = \mu(\pi(x))}$. Next, let
$\mathcal{G}_P$ be the term graph for $P$ defined in
Definition~\ref{def:partition}. To prove that $\pi$ is a $\tau$-expansion to
$P$, we first show that, for all terms $s, t \in \term{q}$ such that $s$
reaches $t$ in $\mathcal{G}_{P}$, we have that ${\pi(s) = \pi(t)}$. To this
end, we show that the following property holds for all atoms $\qatom, \qatom'
\in q$:

\begin{enumerate}[label=(A\arabic{*})]
     \item \label{item:aux-sat-1} $\qatom \approx \qatom'$ implies that
     $\pi(\qatom) = \pi(\qatom')$.
\end{enumerate}

To prove auxiliary property \ref{item:aux-sat-1}, we proceed by induction on
the number of steps required to derive ${\qatom \approx \qatom'}$. For the base
case, the empty relation $\approx_0$ clearly satisfies the property. For the
inductive step, consider an arbitrary relation $\approx_n$ obtained in $n$
steps that satisfies the property; we show that the same holds for all the
equivalences derivable from $\approx_n$. Since the equality relation $=$ is
reflexive, symmetric, and transitive, the derivation of $\qatom \approx_{n+1}
\qatom'$ due to reflexivity, symmetry, or transitivity clearly preserves the
required property. So, we next consider an arbitrary $i \in \interval{1}{2}$, a
set $u \in P_i$, and two atoms ${\qatom, \qatom' \in u}$, so that we derive
${\qatom \approx_{n+1} \qatom'}$. Let $\qatom = \tuple{s, p, o}$ and let
$\qatom' = \tuple{s', p', o'}$. Then, by the definition of $\mathcal{G}_{P_i}$,
terms $s$, $p$, and $o$ reach $s'$, $p'$, and $o'$ in $\mathcal{G}_{P_i}$,
respectively. Thus, by the definition of $\psubst_{P_i}$, we also have that
${\psubst_{P_i}(s) = \psubst_{P_i}(s')}$, ${\psubst_{P_i}(p) =
\psubst_{P_i}(p')}$, and ${\psubst_{P_i}(o) = \psubst_{P_i}(o')}$. By
properties \ref{item:subst1} and \ref{item:subst2} in
Definition~\ref{def:expansion}, we have that ${\pi(\qatom) = \pi(\qatom')}$, as
required.

Please note that $\mathcal{G}_P$ is obtained by adding an undirected edge between
$s_1$ and $s_2$, $p_1$ and $p_2$, and $o_1$ and $o_2$ for each $u \in P$ and
all atoms $\tuple{s_1, p_1, o_1}, \tuple{s_2, p_2, o_2} \in u$. By the
definition of $P$, we have that $\tuple{s_1, p_1, o_1} \approx \tuple{s_2, p_2,
o_2}$ for each such pair of atoms. So, because the equality relation $=$ is
reflexive, symmetric, and transitive, property \ref{item:aux-sat-1} implies
that $\pi(s) = \pi(t)$ for all terms that are reachable in $\mathcal{G}_P$.

\smallskip

Since $\pi(s) = \pi(t)$ for all $s, t \in \term{q}$ such that $s$ reaches $t$
in $\mathcal{G}_P$, partition $P$ is unifiable, and so $P \in B$. We next show
that $P$ is satisfied by $\tau$ and that $\pi$ satisfies conditions
\ref{item:subst1} and \ref{item:subst2} in Definition~\ref{def:expansion}. For
$x \in \var{q}$ a variable, we next consider two cases.
\begin{itemize}
    \item Assume that $\psubst_{P}(x)$ is a variable $y \in \var{q}$. By the
    definition of $\psubst_P$, we have that $x$ reaches $y$ in $\mathcal{G}_P$,
    and so $\pi(x) = \pi(y)$. Since $\tau(x) = \mu(\pi(x))$ and $\tau(y) =
    \mu(\pi(y))$, we conclude that $\tau(x) = \tau(y)$.
    
    \item Assume that $\psubst_{P}(x)$ is a resource $\gres \in \res{q}$. By
    the definition of $\psubst_P$, we have that $x$ reaches $\gres$ in
    $\mathcal{G}_P$, and so $\pi(x) = \gres$. Since $\tau(x) = \mu(\pi(x))$, we
    conclude that $\tau(x) = \mu(\gres)$.
\end{itemize}
Therefore, $P$ is satisfied by $\tau$ and $\pi$ satisfies conditions \ref{item:subst1} and \ref{item:subst2}
in Definition~\ref{def:expansion}; thus, $\pi$ is a $\tau$-expansion to $P$, as required.

\medskip

\emph{Property \ref{item:base2}}~~ Let $P \in \pbase_\tau$ be a partition and
let ${\pi \in \mexpa[\tau]{P}}$ be a $\tau$-expansion; we show that, for all
$\qatom, \qatom'\in q$, there exists $u \in P$ with $\qatom, \qatom'\in u$ if
and only if $\pi(\qatom) = \pi(\qatom')$.

\smallskip

$(\Rightarrow)$~By properties \ref{item:subst1} and \ref{item:subst2} in
Definition \ref{def:expansion} and by Definition \ref{def:partition} of
$\psubst_{P}$, we have that $\pi(\qatom) = \pi(\qatom')$ for each $u \in P$ and
all ${\qatom,\qatom'\in u}$, as required.

\smallskip

$(\Leftarrow)$ For the sake of a contradiction, assume that two atoms $\qatom,
\qatom' \in q$ exist such that $\pi(\qatom) = \pi(\qatom')$, but no ${u \in P}$
exists such that $\qatom, \qatom' \in u$. Then, let $P'$ be the partition of
$q$ such that, for all ${\qatom_1, \qatom_2 \in q}$, there exists $u'\in P'$
such that ${\qatom_1, \qatom_2 \in u'}$ if and only if ${\pi(\qatom_1) =
\pi(\qatom_2)}$. Please note that, by the definition of $P'$, a set $u' \in P'$
exists such that ${\qatom, \qatom' \in u'}$.

We first show that $P \prec P'$. To this end, consider an arbitrary $u \in P$
and two arbitrary atoms $\qatom_1, \qatom_2 \in u$. Because $\pi$ is a
$\tau$-expansion to $P$, we have that ${\pi(\qatom) = \pi(\qatom')}$, and so a
set $u' \in P'$ exists such that $\qatom_1, \qatom_2 \in u'$. Thus, we have
that ${P \preceq P'}$. Furthermore, by the initial assumption, there exists no
${u \in P}$ such that ${\qatom, \qatom' \in u}$; hence, we also have that ${P
\prec P'}$.

By the construction of $P'$, we have that ${\pi(s) = \pi(t)}$ for all terms $s,
t \in \term{q}$ such that $s$ reaches $t$ in $\mathcal{G}_{P'}$. So, $P'$ is a
unifiable partition of $q$ and satisfies properties \ref{item:subst1} and
\ref{item:subst2} of Definition \ref{def:expansion}. Since $\pi$ is a
$\tau$-expansion to $P$, we have that $\tau(x) = \mu(\pi(x))$ for each $x \in
\var{q}$; thus, $P'$ is satisfied by $\tau$ and $\pi$ is a $\tau$-expansion to
$P'$. This is a contradiction, since $P \prec P'$ and $\pi \in
\mexpa[\tau]{P}$. \end{proof}

\smallskip

\subsection{Counting Expansions}

We next show how one can effectively count, given an answer $\tau$ to $\mu(q)$
over $\sgraph$ and a partition $P \in \pbase_\tau$, the number of
$\tau$-expansions to $P$. To this end, we first an auxiliary lemma.

\begin{lemma}\label{lemma:kappa}
    For all partitions $P, P' \in \pbase$ with $P \prec P'$,
    \begin{equation}
        \label{eq:kappa} - \dcoeff{P}{P'} ~~ = \sum_{P \prec P'' \preceq P'} \dcoeff{P''}{P'}.
    \end{equation}
\end{lemma}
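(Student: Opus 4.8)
The plan is to prove the identity directly, by a combinatorial decomposition of chains according to their first step, rather than appealing to the general theory of Möbius functions on posets (indeed, $\dcoeff{\cdot}{\cdot}$ is precisely the Möbius function of the poset $\pbase$ by Philip Hall's theorem). Equivalently, moving the $P''=P$ summand $\dcoeff{P}{P'}$ to the right-hand side, the statement asserts that $\sum_{P \preceq P'' \preceq P'} \dcoeff{P''}{P'} = 0$ whenever $P \prec P'$, which is the familiar Möbius recurrence; it is this form that makes the argument most transparent.

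First I would fix local notation: for partitions $P \preceq P'$ in $\pbase$ and an integer $\ell \geq 0$, let $N_\ell(P,P')$ be the number of chains from $P$ to $P'$ of length $\ell$. By Definition~\ref{def:partition}, since $\Ceven{P}{P'}$ and $\Codd{P}{P'}$ count the chains of even and of odd length respectively, we have $\dcoeff{P}{P'} = \sum_{\ell \geq 0} (-1)^\ell N_\ell(P,P')$. Because the assumption $P \prec P'$ rules out a chain of length $0$, this alternating sum effectively starts at $\ell = 1$.

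The key step is the first-step decomposition. Every chain ${P = P_0 \prec P_1 \prec \cdots \prec P_\ell = P'}$ of length $\ell \geq 1$ is determined uniquely by the choice of its second element $P_1$ --- which ranges over exactly the partitions with $P \prec P_1 \preceq P'$ --- together with the truncated chain from $P_1$ to $P'$ of length $\ell - 1$. This bijection gives the recurrence ${N_\ell(P,P') = \sum_{P \prec P_1 \preceq P'} N_{\ell-1}(P_1,P')}$ for every $\ell \geq 1$. Substituting it into the alternating sum, interchanging the two summations, and re-indexing with $m = \ell - 1$, I would obtain
\begin{equation*}
    \dcoeff{P}{P'} = \sum_{\ell \geq 1} (-1)^\ell \!\!\sum_{P \prec P_1 \preceq P'}\!\! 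N_{\ell-1}(P_1,P') = -\!\!\sum_{P \prec P_1 \preceq P'} \sum_{m \geq 0} (-1)^m N_m(P_1,P') = -\!\!\sum_{P \prec P_1 \preceq P'} \dcoeff{P_1}{P'},
\end{equation*}
which is exactly the claimed identity \eqref{eq:kappa}.

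I do not anticipate a genuine obstacle; the only delicate point is the range $P \prec P_1 \preceq P'$ of the inner sum. It must include the degenerate case $P_1 = P'$, which corresponds to the single direct chain $P \prec P'$ and contributes the boundary term $\dcoeff{P'}{P'} = 1$ arising from the length-$0$ chain counted by $N_0(P',P')$. Getting this endpoint right --- rather than stopping at $P_1 \prec P'$ --- is precisely what makes the final equality hold, so I would state the decomposition carefully to ensure $P_1 = P'$ is not inadvertently excluded.
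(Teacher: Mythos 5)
Your proof is correct and takes essentially the same approach as the paper: both arguments rest on the same first-step decomposition of a chain from $P$ to $P'$ according to its second element $P''$ with ${P \prec P'' \preceq P'}$, which is valid because ${P \prec P'}$ rules out the length-$0$ chain. The only difference is presentational---the paper tracks $\Ceven{P}{P'}$ and $\Codd{P}{P'}$ separately via two parity-swapping recurrences, whereas you fold the parity into a single alternating sum over chain lengths---and your care about including the endpoint ${P'' = P'}$ matches the paper's range of summation exactly.
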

\begin{proof}
Let $P$ and $P'$ be as specified in the lemma. By Definition \ref{def:partition},
we have that
\begin{displaymath}
    - \dcoeff{P}{P'}    \ \ =\ \ -(\Ceven{P}{P'} - \Codd{P}{P'}) \ \ = \ \ \Codd{P}{P'} - \Ceven{P}{P'}.
\end{displaymath}
Because $P$ and $P'$ are distinct partitions, by Definition \ref{def:partition} of
a chain from $P$ to $P'$, the number of even chains from $P$ to $P'$ in
$\pbase$ can be computed by summing, for each $P'' \in B$ with $P \prec P''$
and $P'' \preceq P'$, the number of odd chains from $P''$ to $P'$ in $\pbase$.
Similarly, the number of odd chains from $P$ to $P'$ in $B$ can be computed by
summing, for each $P'' \in B$ with $P \prec P''$ and $P'' \preceq P'$, the
number of even chains from $P''$ to $P'$ in $B$. Hence, we have that
\begin{align*}
   \Ceven{P}{P'}  &~~= \sum_{P \prec P'' \preceq P'} \Codd{P''}{P'}, \text{ and}\\
   \Codd{P}{P'}  &~~= \sum_{P \prec P'' \preceq P'} \Ceven{P''}{P'}.
\end{align*}
Therefore, we have that
\begin{align*}
    - \dcoeff{P}{P'}   &= \Codd{P}{P'} - \Ceven{P}{P'} = \left(\sum_{P \prec P''  \preceq P'} \Ceven{P''}{P'}\right) - \left(\sum_{P \prec P'' \preceq P'} \Codd{P''}{P'}\right) \\[1em]
                       &= \sum_{\substack{P \prec P'' \preceq P'}} \left(\Ceven{P''}{P'} - \Codd{P''}{P'}\right) = \sum_{\substack{P \prec P'' \preceq P'}} \dcoeff{P''}{P'}.
\end{align*}
\end{proof}

\begin{proposition}\label{prop:count-expansions}
The following equations hold for each answer $\tau$
to $\mu(q)$ on $\res{\sgraph}$ and each $P \in \pbase_\tau$:
\begin{align}
    \label{eq:sol1} |\expa[\tau]{P}|       &~~~=~~~ \coeff{\tau}{P},\\
    \label{eq:sol2} |\mexpa[\tau]{P}|    &~~~= \sum_{P'\in \pbase_\tau, P \preceq P'} \dcoeff{P}{P'} \cdot \coeff{\tau}{P'}.
\end{align}
\end{proposition}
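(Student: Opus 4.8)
The plan is to establish the two identities separately: \eqref{eq:sol1} by a direct counting argument, and \eqref{eq:sol2} by inverting \eqref{eq:sol1} through a M\"obius-style downward induction on the poset $(\pbase_\tau, \preceq)$, using Lemma~\ref{lemma:kappa} as the inversion kernel.

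For \eqref{eq:sol1}, I would show that a $\tau$-expansion $\pi$ to $P$ is completely determined by its values on the representative variables in $\vrng{\psubst_P}$. Conditions~\ref{item:subst1} and~\ref{item:subst2} of Definition~\ref{def:expansion} force $\pi(x) = \pi(\psubst_P(x))$ when $\psubst_P(x)$ is a variable and $\pi(x) = \psubst_P(x)$ when $\psubst_P(x)$ is a resource, so the only free choices are the images $\pi(y)$ with $y \in \vrng{\psubst_P}$. Because $P \in \pbase_\tau$ is satisfied by $\tau$, the membership requirement $\pi(x) \in \inv{\mu}(\tau(x))$ for a dependent variable $x$ follows automatically from $\tau(x) = \tau(\psubst_P(x))$ (resp.\ $\tau(x) = \mu(\psubst_P(x))$) in Definition~\ref{def:factor}; hence each representative $y$ may be assigned, independently, any of the $|\inv{\mu}(\tau(y))| = \size{\tau(y)}$ resources in $\inv{\mu}(\tau(y))$. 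Since every such choice extends uniquely to a $\tau$-expansion and every $\tau$-expansion arises this way, we obtain $|\expa[\tau]{P}| = \prod_{y \in \vrng{\psubst_P}} \size{\tau(y)} = \coeff{\tau}{P}$.

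For \eqref{eq:sol2}, I would first prove the monotonicity fact that $P \preceq P'$ implies $\expa[\tau]{P'} \subseteq \expa[\tau]{P}$. The key observation is that $P \preceq P'$ makes $\mathcal{G}_P$ a subgraph of $\mathcal{G}_{P'}$, so reachability in $\mathcal{G}_P$ entails reachability in $\mathcal{G}_{P'}$; tracing this through Definition~\ref{def:partition}, every constraint that $\psubst_P$ places on a $\tau$-expansion is already implied by the corresponding constraint from $\psubst_{P'}$ (in the resource case one additionally uses the unifiability of $P'$ and the $\leq$-minimality of resources to conclude $\psubst_{P'}(x) = \psubst_P(x)$). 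Combining this with property~\ref{item:base1} of Proposition~\ref{prop:base}---each $\pi \in \expa[\tau]{P}$ lies in $\mexpa[\tau]{P'}$ for exactly one $P' \succeq P$---shows that $\expa[\tau]{P}$ is the disjoint union $\bigcup_{P' \in \pbase_\tau,\, P \preceq P'} \mexpa[\tau]{P'}$. Together with \eqref{eq:sol1} this yields the summation identity
\[
    \coeff{\tau}{P} = \sum_{P' \in \pbase_\tau,\, P \preceq P'} |\mexpa[\tau]{P'}|.
\]

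Finally, I would invert this summation identity by downward induction on $(\pbase_\tau, \preceq)$. Rewriting it as $|\mexpa[\tau]{P}| = \coeff{\tau}{P} - \sum_{P' \succ P} |\mexpa[\tau]{P'}|$ and substituting the inductive hypothesis for each $P' \succ P$, one reorganises the resulting double sum by grouping the coefficient of each $\coeff{\tau}{P''}$, which becomes $\sum_{P \prec P' \preceq P''} \dcoeff{P'}{P''}$; by Lemma~\ref{lemma:kappa} this inner sum equals $-\dcoeff{P}{P''}$. Adding back the diagonal term $\dcoeff{P}{P} \cdot \coeff{\tau}{P}$ then collapses everything to $\sum_{P' \succeq P} \dcoeff{P}{P'} \cdot \coeff{\tau}{P'}$, which is \eqref{eq:sol2}. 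I expect the monotonicity step to be the main obstacle: verifying that an expansion to the coarser partition $P'$ remains an expansion to the finer partition $P$ requires carefully matching the reachability relations of the two term graphs and handling the resource case via the fixed order $\leq$, whereas the subsequent inversion is a routine rearrangement designed to match the exact form of Lemma~\ref{lemma:kappa} (which keeps the upper endpoint of the chains fixed).
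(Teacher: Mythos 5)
Your proposal is correct and follows essentially the same route as the paper's proof: the same direct count of $|\expa[\tau]{P}|$ over the free representatives in $\vrng{\psubst_P}$, and the same downward induction on $(\pbase_\tau, \preceq)$ that uses property~\ref{item:base1} of Proposition~\ref{prop:base} to decompose $\expa[\tau]{P}$ into the sets $\mexpa[\tau]{P'}$ and Lemma~\ref{lemma:kappa} to collapse the resulting double sum. The only difference is that you explicitly isolate and prove the monotonicity $P \preceq P'$ implies $\expa[\tau]{P'} \subseteq \expa[\tau]{P}$, which the paper's cardinality subtraction uses implicitly; making it explicit is a sound (and arguably welcome) refinement rather than a different argument.
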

\begin{proof}
Let $\tau$ be as specified in the lemma, and let $\preceq_\tau$ be the
restriction of $\preceq$ to $\pbase_\tau$.

\medskip

We first prove Equation \eqref{eq:sol1}. To this end, consider a partition $P
\in \pbase_\tau$. By Definition \ref{def:expansion} of a $\tau$-expansion to
$P$, each ${\pi \in \expa[\tau]{P}}$ can vary only on the variables occurring
in the range of $\psubst_{P}$. Furthermore, for each $x \in \vrng{\psubst_P}$,
we have that $\pi(x) \in \inv{\mu}(\tau(x))$ and ${|\inv{\mu}(\tau(x))| =
\size{\tau(x)}}$; thus, there are ${\coeff{\tau}{P} = \prod_{x \in
\vrng{\psubst_{P}}} \size{\tau(x)}}$ distinct $\tau$-expansions to $P$.

\medskip 

We next prove Equation \eqref{eq:sol2} by induction on $\preceq_\tau$.

\smallskip

\textit{Base case.} Consider an arbitrary partition ${P \in \pbase_\tau}$ that
is $\preceq_\tau$-maximal. We then have that ${\mexpa[\tau]{P} =
\expa[\tau]{P}}$; furthermore, by Equation \eqref{eq:sol1}, we have that
${|\expa[\tau]{P}| = \coeff{\tau}{P}}$. In addition, we have that ${\sum_{P
\preceq_\tau P'} \dcoeff{P}{P'} \cdot \coeff{\tau}{P'} = \dcoeff{P}{P} \cdot
\coeff{\tau}{P}}$. Finally, by Definition \ref{def:partition} of $K$, we have that
${\dcoeff{P}{P} = 1}$; so, Equation \eqref{eq:sol2} holds.

\smallskip

\textit{Inductive Step.} Consider an arbitrary ${P \in \pbase_\tau}$
and assume that, for each ${P'' \in \pbase_\tau}$ with ${P \prec_\tau
P''}$, we have that
\begin{displaymath}
    |\mexpa[\tau]{P''}| ~= \sum_{P'' \preceq_\tau P'} \dcoeff{P''}{P'} \cdot \coeff{\tau}{P'};
\end{displaymath}
we show that the same holds for $P$. By the definition of $\mexpa[\tau]{P}$,
we have that ${\mexpa[\tau]{P} = \expa[\tau]{P} \setminus \bigcup_{P \prec_\tau
P''} \expa[\tau]{P''}}$. By Proposition \ref{prop:base}, for each ${\pi \in
\bigcup_{P \prec_\tau P''} \expa[\tau]{P''}}$, exactly one partition ${P'' \in
\pbase_\tau}$ exists such that ${P \prec_\tau P''}$ and ${\pi \in
\mexpa[\tau]{P''}}$. Therefore, we have that
\begin{align*}
    \mexpa[\tau]{P}    &~=~ \expa[\tau]{P} \setminus \bigcup_{P \prec_\tau P''} \mexpa[\tau]{P''}\text{, and}\\
    |\mexpa[\tau]{P}|  &~=~  |\expa[\tau]{P}| - \sum_{P \prec_\tau P''} |\mexpa[\tau]{P''}|.
\end{align*}
By Equation \eqref{eq:sol1}, we have that ${|\expa[\tau]{P}| =
\coeff{\tau}{P}}$. So, by the inductive hypothesis, we can compute ${|\mexpa[\tau]{P}|}$ as follows.
\begin{displaymath}
    |\mexpa[\tau]{P}| = \coeff{\tau}{P} -  \sum_{P \prec_\tau P''} \left( \sum_{P'' \preceq_\tau P'}  \dcoeff{P''}{P'} \cdot \coeff{\tau}{P'}\right) = \coeff{\tau}{P} -  \sum_{P \prec_\tau P'}  \coeff{\tau}{P'} \cdot  \left(\sum_{P \prec_\tau P'' \preceq_\tau P'}  \dcoeff{P''}{P'}\right).
\end{displaymath}
By Lemma \ref{lemma:kappa}, for each ${P' \in \pbase_\tau}$ with ${P \prec_\tau P'}$,
we have that $\sum_{P \prec_\tau P'' \preceq_\tau P'} \dcoeff{P''}{P'} = -\dcoeff{P}{P'}$; thus, we have that
\begin{align*}
    |\mexpa[\tau]{P}|     &= \coeff{\tau}{P} - \sum_{P \prec_\tau P'}  \coeff{\tau}{P'} \cdot \left( \sum_{P \prec_\tau P'' \preceq_\tau P'}  \dcoeff{P''}{P'}\right) \\[1em]
                                &= \coeff{\tau}{P} - \left(\sum_{P \prec_\tau P'} - \coeff{\tau}{P'} \cdot  \dcoeff{P}{P'}\right) =  \coeff{\tau}{P} + \left(\sum_{P \prec_\tau P'}  \coeff{\tau}{P'} \cdot  \dcoeff{P}{P'}\right).
\end{align*}
Finally, by Definition \ref{def:partition} of $K$, we have that $\dcoeff{P}{P}$ is $1$, so the following holds:
\begin{align*}
    |\mexpa[\tau]{P}|     &=  \coeff{\tau}{P} + \left(\sum_{P \prec_\tau P'}  \coeff{\tau}{P'} \cdot  \dcoeff{P}{P'}\right)\\[1em]
                                &=  \dcoeff{P}{P} \cdot \coeff{\tau}{P} + \left(\sum_{P \prec_\tau P'}  \dcoeff{P}{P'}\cdot \coeff{\tau}{P'} \right) = \sum_{P \preceq_\tau P'} \dcoeff{P}{P'}\cdot \coeff{\tau}{P'}.
\end{align*}
\end{proof}

\subsection{Proofs of Theorem \ref{th:expect} and Proposition
\ref{prop:free-expect}}

We are now ready to prove the correctness of the formulae for computing
$\expected{q}$ given in Theorem \ref{th:expect} and Proposition
\ref{prop:free-expect}.

\begin{lemma}\label{lemma:falling-factorial}
    For all $n, k, m \in \nat$ with $n \geq m \geq k > 0$,
    \begin{displaymath}
        \frac{\binom{n-k}{m-k}}{\binom{n}{m}} = \frac{(m)_k}{(n)_{k}}.
    \end{displaymath} 
\end{lemma}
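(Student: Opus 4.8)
The plan is to prove the identity by a direct algebraic manipulation: expand both binomial coefficients into factorials, cancel the common factor, and then recognise the two falling factorials in the resulting ratio. Before doing so, I would record that the hypotheses $n \geq m \geq k > 0$ are exactly what guarantees that every factorial argument appearing below is nonnegative, so that all the expressions are well-defined. Concretely, $n - k \geq 0$ since $n \geq k$, $m - k \geq 0$ since $m \geq k$, and $(n-k) - (m-k) = n - m \geq 0$ since $n \geq m$.

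First I would expand the numerator using the definition of the binomial coefficient, observing that $(n-k) - (m-k) = n-m$, so that
$$\binom{n-k}{m-k} = \frac{(n-k)!}{(m-k)!\,(n-m)!}.$$
Dividing this by $\binom{n}{m} = \frac{n!}{m!\,(n-m)!}$, the two factors of $(n-m)!$ cancel, and I would be left with
$$\frac{\binom{n-k}{m-k}}{\binom{n}{m}} = \frac{(n-k)!\,m!}{(m-k)!\,n!} = \frac{m!/(m-k)!}{n!/(n-k)!}.$$
Applying the definition of the falling factorial, $(m)_k = m!/(m-k)!$ and $(n)_k = n!/(n-k)!$, this last expression is exactly $(m)_k/(n)_k$, as required.

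I expect no substantive obstacle here: this is purely a bookkeeping computation, and the only care required is to track the factorial arguments correctly and to confirm, as in the opening remark, that the hypotheses keep each argument nonnegative. The role of the lemma is evidently to justify rewriting the ratios of binomial coefficients that arise in the expectation computation---such as the ${\binom{\size{\striple_3}-2}{\weight{\striple_3}-2}}/{\binom{\size{\striple_3}}{\weight{\striple_3}}}$ appearing in the treatment of $q_3$ in Section~\ref{sec:estimation:intuition}---into the falling-factorial form used to define $\factor{\tau}{P}$ in Definition~\ref{def:factor}.
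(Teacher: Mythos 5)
Your proof is correct and follows essentially the same route as the paper's: expand both binomial coefficients into factorials, cancel the common $(n-m)!$, and identify the remaining ratio $\frac{(n-k)!\,m!}{(m-k)!\,n!}$ with $\frac{(m)_k}{(n)_k}$. The only cosmetic difference is that you invoke the identity $(m)_k = m!/(m-k)!$ directly where the paper writes out the product $m\cdots(m-k+1)$ explicitly.
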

\begin{proof}
Let $n, k, m$ be  natural numbers as specified in the lemma.
By the definition of binomial coefficient, we have that
\begin{align*}
    \binom{n}{m}     &= \frac{n!}{m!\cdot(n-m)!}, \text{ and}\\
    \binom{n-k}{m-k} &= \frac{(n-k)!}{(m-k)!\cdot(n-k - (m-k))!} = \frac{(n-k)!}{(m-k)!\cdot(n-m)!}.    
\end{align*}    
By using the standard laws of arithmetics, we then have that
\begin{displaymath}
    \frac{\binom{n-k}{m-k}}{\binom{n}{m}}   = \frac{\frac{(n-k)!}{(m-k)!\cdot(n-m)!}}{\frac{n!}{m!\cdot(n-m)!}} 
                                            = \frac{(n-k)! \cdot m!}{n! \cdot (m-k)!} 
                                            = \frac{m\cdots (m-k+1)}{n\cdots (n-k+1)} 
                                            = \frac{(m)_k}{(n)_k}.
\end{displaymath}
\end{proof}

\begin{lemma}\label{lemma:factor}
Let $\tau$ be an answer to $\mu(q)$ over $H$ and let $P \in
\pbase_\tau$ be a partition. The following holds for each
 ${\pi \in \mexpa[\tau]{P}}$:
\begin{equation}    
    \label{eq:factor} \factor{\tau}{P} = \expected{\pi(q)}.
\end{equation}
\end{lemma}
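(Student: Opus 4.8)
The plan is to show that $\expected{\pi(q)}$, the expected cardinality of the variable-free query $\pi(q)$, coincides with the closed-form quantity $\factor{\tau}{P}$ of Definition~\ref{def:factor}. Since $\pi$ maps every variable of $q$ to a resource, $\pi(q)$ is ground, so ${|\ans{\pi(q)}{\graph'}| \in \setof{0,1}}$ for each ${\graph' \in \semantics}$, and it equals $1$ exactly when ${\pi(q) \subseteq \graph'}$. Hence, as already noted after Definition~\ref{def:expect-var}, $\expected{\pi(q)}$ equals the fraction of graphs in $\semantics$ that contain all atoms of $\pi(q)$, and I would reduce the lemma to counting such graphs.

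The first step is to identify the index set and the multiplicities. For each ${x \in \var{q}}$, the first condition of Definition~\ref{def:expansion} gives ${\pi(x) \in \inv{\mu}(\tau(x))}$, i.e.\ ${\mu(\pi(x)) = \tau(x)}$; since $\pi$ fixes resources and $\tau$ fixes buckets, this yields ${\mu(\pi(\qatom)) = \tau(\mu(\qatom))}$ for every atom ${\qatom \in q}$, and therefore ${\mu(\pi(q)) = \tau(\mu(q))}$. Thus the ground atoms of $\pi(q)$ are summarised exactly as the triples in $\tau(\mu(q))$. Next, by property~\ref{item:base2} of Proposition~\ref{prop:base}, two atoms of $q$ receive the same image under $\pi$ if and only if they lie in a common block of $P$; consequently distinct blocks of $P$ yield distinct ground atoms of $\pi(q)$, and for atoms ${\qatom,\qatom'}$ of a single block $u$ we get ${\tau(\mu(\qatom)) = \mu(\pi(\qatom)) = \mu(\pi(\qatom')) = \tau(\mu(\qatom'))}$, so $\tau(\mu(u))$ is a singleton $\setof{\striple_u}$. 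It follows that, for each ${\striple \in \tau(\mu(q))}$, the number of distinct ground atoms of $\pi(q)$ summarised as $\striple$ equals ${\pcount{\striple} = |\setof{u \in P \mid \tau(\mu(u)) = \setof{\striple}}|}$, and that ${\pcount{\striple} \geq 1}$ for every such $\striple$.

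The second step is to count the graphs. By (the proof of) Lemma~\ref{lemma:consistent-count}, a graph ${\graph' \in \semantics}$ is obtained by independently choosing, for each ${\striple \in \sgraph}$, a size-$\weight{\striple}$ subset of $\inv{\mu}(\striple)$, and there are $\binom{\size{\striple}}{\weight{\striple}}$ such choices. Grouping the atoms of $\pi(q)$ by their summarisation, ${\pi(q) \subseteq \graph'}$ holds iff, for each $\striple$, all $\pcount{\striple}$ atoms summarised as $\striple$ are chosen; the number of size-$\weight{\striple}$ subsets containing a fixed set of $\pcount{\striple}$ triples is $\binom{\size{\striple}-\pcount{\striple}}{\weight{\striple}-\pcount{\striple}}$ if ${\pcount{\striple} \leq \weight{\striple}}$ and $0$ otherwise. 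As the choices are independent and triples with ${\pcount{\striple} = 0}$ contribute a factor of $1$, I obtain, when ${\pcount{\striple} \leq \weight{\striple}}$ for all ${\striple \in \tau(\mu(q))}$,
\begin{displaymath}
    \expected{\pi(q)} = \prod_{\striple \in \tau(\mu(q))} \frac{\binom{\size{\striple}-\pcount{\striple}}{\weight{\striple}-\pcount{\striple}}}{\binom{\size{\striple}}{\weight{\striple}}},
\end{displaymath}
and ${\expected{\pi(q)} = 0}$ otherwise. Applying Lemma~\ref{lemma:falling-factorial} with ${n = \size{\striple}}$, ${m = \weight{\striple}}$, and ${k = \pcount{\striple}}$ (valid since ${\weight{\striple} \leq \size{\striple}}$ by consistency and ${0 < \pcount{\striple} \leq \weight{\striple}}$) rewrites each factor as ${(\weight{\striple})_{\pcount{\striple}}/(\size{\striple})_{\pcount{\striple}}}$, which is exactly $\factor{\tau}{P}$ by Definition~\ref{def:factor}.

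The main obstacle I anticipate is the first step: the bookkeeping that turns the block structure of $P$ into the multiplicities $\pcount{\striple}$. I must argue carefully, using both directions of property~\ref{item:base2}, that each block of $P$ collapses to a single ground atom under $\pi$ whose summarisation is well defined, that distinct blocks yield distinct ground atoms, and that this matching is consistent with the singleton condition $\tau(\mu(u)) = \setof{\striple}$ defining $\pcount{\striple}$. Once the correspondence ${\pcount{\striple} = |\setof{\gtriple \in \pi(q) \mid \mu(\gtriple)=\striple}|}$ is established, the counting and the falling-factorial rewriting are routine given Lemmas~\ref{lemma:consistent-count} and~\ref{lemma:falling-factorial}.
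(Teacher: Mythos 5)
Your proposal is correct and follows essentially the same route as the paper's proof: reduce $\expected{\pi(q)}$ to counting the graphs of $\semantics$ that contain the ground query, use Proposition~\ref{prop:base}\ref{item:base2} to identify $\pcount{\striple}$ with the number of ground atoms of $\pi(q)$ summarised as $\striple$, count the admissible expansions per triple via binomial coefficients together with Lemma~\ref{lemma:consistent-count}, and finish with Lemma~\ref{lemma:falling-factorial}. The paper splits the degenerate case ${\pcount{\striple} > \weight{\striple}}$ into a separate case rather than folding it into a vanishing binomial coefficient, but this is a cosmetic difference.
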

\begin{proof}
Let $\tau$ and $P$ be as specified in the lemma. Consider an arbitrary
substitution ${\pi \in \mexpa[\tau]{P}}$ and let ${q_\pi = \pi(q)}$; we show
that ${\factor{\tau}{P} = \expected{q_\pi}}$.  Note that, for each $\qatom \in q_\pi$,
we have that $\var{\qatom} = \emptyset$ and $\res{\qatom} \subseteq \dom{\mu}$. 

By the definition of $\tau$-expansion, we have that $\tau(x) = \mu(\pi(x))$ for
each $x \in \var{q}$, and so $\mu(q_\pi) = \tau(\mu(q))$. Because $P$ is
satisfied by $\tau$, for each ${u \in P}$ and all atoms ${\qatom, \qatom' \in
u}$, we have that ${\tau(\mu(\qatom)) = \tau(\mu(\qatom'))}$; so,
$\tau(\mu(u))$ is a set consisting of a single triple from $\sgraph$. In the
following, for each triple ${\striple \in \tau(\mu(q))}$, let
${\pcount{\striple} \in \nat}$ be as specified in Definition \ref{def:factor}.

By property \ref{item:base2} of Proposition \ref{prop:base}, for all atoms
${\qatom, \qatom' \in q}$, we have that ${\pi(\qatom) = \pi(\qatom')}$ if and
only if there exists $u \in P$ with ${\qatom, \qatom' \in u}$. That is, there
is a one-to-one correspondence between the triples in $q_\pi$ and the sets in
$P$. So, for each triple ${\striple \in \tau(\mu(q))}$, we have that
\begin{equation}
    \label{eq:atoms} \pcount{\striple}  ~~=~~ |\setof{u \in P \mid \tau(\mu(u)) = \setof{\striple}}| 
                                        ~~=~~ |\setof{\gtriple \in  q_\pi \mid \mu(\gtriple) = \striple}|.
\end{equation}
By Definition \ref{def:expect-var} of expectation, we have that
\begin{equation}
    \label{eq-expected-pi}\expected{q_\pi} ~= \sum_{\graph \in \semantics} \frac{|\ans{q_\pi}{\graph}|}{|\semantics|}.
\end{equation}
By the initial assumption, $\summary$ is consistent; so, ${|\semantics| > 0}$
and $\expected{q_\pi}$ is well-defined. Next, let ${\mathcal{N} = \sum_{\graph
\in \semantics} |\ans{q_\pi}{\graph}|}$. Since $q_\pi$ is a variable-free,
${|\ans{q_\pi}{\graph}|}$ is either $0$ or $1$; thus, ${\mathcal{N}}$ is the
number of graphs ${\graph \in \semantics}$ with ${q_\pi \subseteq \graph}$. By
Definition \ref{def:semantics} of a graph represented by $\summary$, for each
graph $\graph$, we have that ${\graph \in \semantics}$ 
 and $q_\pi \subseteq \graph$ if and only if
\begin{enumerate}[label = (\Alph{*})]
    \item \label{item:repr-1} $\graph \subseteq \bigcup_{\striple \in \sgraph} \inv{\mu}(\striple)$, 
    \item \label{item:repr-2} for each $\striple \in \sgraph$, the set ${\inv{\mu}(\striple) \cap \graph}$ contains each $\gtriple \in q_\pi$  with $\mu(\gtriple) = \striple$, and
    \item \label{item:repr-3} ${\weight{\striple} = |\inv{\mu}(\striple) \cap \graph|}$ for each ${\striple \in \sgraph}$.
\end{enumerate}
To prove the lemma, we next consider two alternative cases.

\medskip

\textit{(Case 1)}~ Assume that a triple $\striple \in \tau(\mu(q))$ exists such
that ${|\pcount{\striple}| > \weight{\striple}}$; so ${\factor{\tau}{P} = 0}$. By
Equation \eqref{eq:atoms}, $\pcount{\striple}$ is the number of triples in $q_\pi$
that $\mu$ maps onto $\striple$. So, there is no graph $\graph$ that satisfies
both properties \ref{item:repr-2} and \ref{item:repr-3} above, and thus
${\mathcal{N} = 0}$ and ${\expected{q_\pi}} = 0$. Therefore, ${\factor{\tau}{P}
= \expected{q_\pi}}$ and the lemma holds.

\medskip

\textit{(Case 2)}~ Assume that ${\pcount{\striple} \leq \weight{\striple}}$ for each ${\striple \in
\tau(\mu(q))}$; so 
\begin{displaymath}
    \factor{\tau}{P} = \prod_{\striple \in \tau(\mu(q))}\frac{(\weight{\striple})_{\pcount{\striple}}}{(\size{\striple})_{\pcount{\striple}}}.
\end{displaymath}
Each graph ${\graph \in \semantics}$ with ${q_\pi \subseteq \graph}$ can be
generated by independently choosing, for each ${\striple \in \sgraph}$, a subset
$G_\striple$ of $\inv{\mu}(\striple)$ such that $|G_\striple| = \weight{\striple}$ and
$G_\striple$ contains each $\gtriple \in q_\pi$ with $\mu(\gtriple) = \striple$.
Please recall that $\size{\striple}$ is the size of $\inv{\mu}(\striple)$. Next, we
 show how to compute, for each $\striple \in \sgraph$, the number
$\mathcal{N}_\striple$ of distinct such sets by considering two alternative cases.
\begin{itemize}
    \item Assume that ${h \in \tau(\mu(q))}$.  By Equation \eqref{eq:atoms},
    there are precisely $\pcount{\striple}$ distinct triples $\gtriple \in q_\pi$ such that ${\mu(\gtriple)} = \striple$, and so
    \begin{displaymath}
        \mathcal{N}_\striple = \binom{\size{\striple} - \pcount{\striple}}{\weight{\striple} - \pcount{\striple}}.
    \end{displaymath}

    \item Assume that ${h \not \in \tau(\mu(q))}$. Hence, no
    triple ${\gtriple \in q_\pi}$ exists such that ${\mu(\gtriple) = \striple}$. So, we have that
    \begin{displaymath}
        \mathcal{N}_\striple = \binom{\size{\striple}}{\weight{\striple}}.
    \end{displaymath}
\end{itemize}
Therefore, we can compute $\mathcal{N}$ as follows:
\begin{displaymath}
        \mathcal{N} \ \ = \sum_{\graph \in \semantics} |\ans{q_\pi}{\graph}| 
        \ = \ \prod_{\striple \in \sgraph} \mathcal{N}_\striple 
        \ \ = \prod_{\striple \in \tau(\mu(q))} \binom{\size{\striple} - \pcount{\striple}}{\weight{\striple} - \pcount{\striple}} \ \ \cdot  \prod_{\striple \in (\sgraph \setminus \tau(\mu(q)))}\binom{\size{\striple}}{\weight{\striple}}.
\end{displaymath}
By the definition of expected answer, we then have that
\begin{displaymath}
    \expected{q_\pi} \ \ = \sum_{\graph \in \semantics} \frac{|\ans{q_\pi}{\graph}|}{|\semantics|} 
                    \ \ = \ \ \frac{\mathcal{N}}{|\semantics|} 
                    \ \ = \ \ \frac{\prod_{\striple \in \tau(\mu(q))} \binom{\size{\striple} - \pcount{\striple}}{\weight{\striple} - \pcount{\striple}} \cdot  \prod_{\striple \in (\sgraph \setminus \tau(\mu(q)))}\binom{\size{\striple}}{\weight{\striple}}}{|\semantics|}.
\end{displaymath}
By using the formula for computing $|\semantics|$ from Lemma \ref{lemma:consistent-count},
we obtain the following:
\begin{align*}
    \expected{q_\pi}    &~=~  \frac{\prod_{\striple \in \tau(\mu(q))} \binom{\size{\striple} - \pcount{\striple}}{\weight{\striple} - \pcount{\striple}} \cdot  \prod_{\striple \in (\sgraph \setminus \tau(\mu(q)))}\binom{\size{\striple}}{\weight{\striple}}}{|\semantics|}
                        \ \ = \ \ \frac{\prod_{\striple \in \tau(\mu(q))} \binom{\size{\striple} - \pcount{\striple}}{\weight{\striple} - \pcount{\striple}} \cdot  \prod_{\striple \in (\sgraph \setminus \tau(\mu(q)))}\binom{\size{\striple}}{\weight{\striple}}}{\prod_{\striple \in \sgraph} \binom{\size{\striple}}{\weight{\striple}}}\\[1em]
                        &~=~ \frac{\prod_{\striple \in \tau(\mu(q))} \binom{\size{\striple} - \pcount{\striple}}{\weight{\striple} - \pcount{\striple}} \cdot  \prod_{\striple \in (\sgraph \setminus \tau(\mu(q)))}\binom{\size{\striple}}{\weight{\striple}}}{\prod_{\striple \in \tau(\mu(q))} \binom{\size{\striple}}{\weight{\striple}} \cdot \prod_{\striple \in (\sgraph \setminus \tau(\mu(q)))} \binom{\size{\striple}}{\weight{\striple}}}
                        \ \ = \prod_{\striple \in \tau(\mu(q))} \frac{\binom{\size{\striple} - \pcount{\striple}}{\weight{\striple} - \pcount{\striple}}}{\binom{\size{\striple}}{\weight{\striple}}}.
\end{align*}
Finally, by Lemma \ref{lemma:falling-factorial}, we have that
\begin{displaymath}
    \expected{q_\pi}  \ \  = \prod_{\striple \in \tau(\mu(q))} \frac{\binom{\size{\striple} - \pcount{\striple}}{\weight{\striple} - \pcount{\striple}}}{\binom{\size{\striple}}{\weight{\striple}}} \ \ = \prod_{\striple \in \tau(\mu(q))}\frac{(\weight{\striple})_{\pcount{\striple}}}{(\size{\striple})_{\pcount{\striple}}} \ \ = \ \ \factor{\tau}{P}.
\end{displaymath}
\end{proof}

We are now ready to prove Theorem \ref{th:expect}.

\expect*
\begin{proof}
In the rest of this proof, for each substitution $\pi$ with $\dom{\pi} =
\var{q}$ and $\rng{\pi} \subseteq \dom{\mu}$, we let $q_\pi = \pi(q)$;
furthermore, let $P_0 = \setof{\setof{\qatom} \mid \qatom \in q}$ be a
partition of $q$. By Definition \ref{def:partition} of partition base, we have
that $P_0 \in \pbase$; in addition, for each answer $\tau$ $\mu(q)$ on
$\sgraph$, we have that $P_0 \in \pbase_\tau$; finally, for each $P \in
\pbase$, we have that $P_0 \preceq P$.

Consider an arbitrary $\graph \in \semantics$; because $\summary$ is
consistent, such $\graph$ exists. Then, $|\ans{q}{\graph}|$ is the number of
substitutions $\pi$ with $\dom{\pi} = \var{q}$ and $\rng{\pi} \subseteq
\dom{\mu}$ such that ${q_\pi \subseteq \graph}$. Since ${\graph \in
\semantics}$, for each such $\pi$, the substitution $\tau$ such that ${\tau(x)
=\mu(\pi(x))}$, for each $x \in \dom{\pi}$, is an answer to $\mu(q)$ on
$\sgraph$. In addition, we have that $\pi \in \expa[\tau]{P_0}$. So, by
Definition~\ref{def:expect-var} of expected number of answers, the following
holds:
\begin{displaymath}
    \expected{q}  \ = \sum_{\graph \in \semantics} \frac{|\ans{q}{\graph}|}{|\semantics|} \ = \sum_{\graph \in \semantics} \sum_{\tau \in \ans{\mu(q)}{\sgraph}} \sum_{\pi \in \expa[\tau]{P_0}} \frac{|\ans{q_\pi}{\graph}|}{|\semantics|}.
\end{displaymath}
By Proposition \ref{prop:base}, we have that $\expa[\tau]{P_0} = \bigcup_{P \in
B_\tau, P_0 \preceq P} \mexpa[\tau]{P}$. Then, because ${P_0 \preceq P}$ for each $P \in \pbase_\tau$, we
have that
\begin{displaymath}
    \expected{q} \ = \sum_{\graph \in \semantics} \sum_{\tau \in \ans{\mu(q)}{\sgraph}} \sum_{\pi \in \expa[\tau]{P_0}} \frac{|\ans{q_\pi}{\graph}|}{|\semantics|} \ \ = \sum_{\graph \in \semantics} \sum_{\tau \in \ans{\mu(q)}{\sgraph}} \sum_{P \in \pbase_\tau} \sum_{\pi \in \mexpa[\tau]{P}} \frac{|\ans{q_\pi}{\graph}|}{|\semantics|}.
\end{displaymath}
So, by the definition of $\expected{q_\pi}$, we can compute
$\expected{q}$ as follows:
\begin{displaymath}
    \expected{q}  \  = \sum_{\tau \in \ans{\mu(q)}{\sgraph}} \sum_{P \in \pbase_\tau} \sum_{\pi \in \mexpa[\tau]{P}} \sum_{\graph \in \semantics}  \frac{|\ans{q_\pi}{\graph}|}{|\semantics|} \ \ = \sum_{\tau \in \ans{\mu(q)}{\sgraph}} \sum_{P \in \pbase_\tau} \sum_{\pi \in \mexpa[\tau]{P}} \expected{q_\pi}.
\end{displaymath}
By Lemma \ref{lemma:factor}, we then have that
\begin{displaymath}
    \expected{q}  = \sum_{\tau \in \ans{\mu(q)}{\sgraph}} \sum_{P \in \pbase_\tau}  \sum_{\pi \in \mexpa[\tau]{P}}  \factor{\tau}{P}.
\end{displaymath}
Finally, Proposition \ref{prop:count-expansions} shows how to count the number of $\preceq$-maximal $\tau$-expansions, so we can rewrite $\expected{q}$ as follows:
\begin{displaymath}
    \expected{q}  =  \sum_{\tau \in \ans{\mu(q)}{\sgraph}} \sum_{P \in \pbase_\tau}  \factor{\tau}{P} \ \ \cdot \sum_{P' \in \pbase_\tau, P \preceq P'} \dcoeff{P}{P'} \cdot \coeff{\tau}{P'}.
\end{displaymath}
\end{proof}

As pointed out in Section \ref{sec:estimation:formalisation}, Proposition
\ref{prop:free-expect} follows immediately from Theorem \ref{th:expect}.

\freeexpect*

\subsection{Proof of Theorem \ref{th:variance}}

We next show that our method for computing $\std{q}$ is correct.

\variance*
\begin{proof}
Let $\rho$ be a substitution mapping each variable in $\var{q}$ to a fresh
variable, and let ${q' = \rho(q)}$. We first show that equation
\eqref{eq:variance-rewr} holds for each RDF graph $\graph \in \semantics$.
\begin{equation}
    |\ans{q}{\graph}|^2 = |\ans{q \cup q'}{\graph}| \label{eq:variance-rewr}
\end{equation}
To this end, consider an arbitrary graph $\graph \in \semantics$. Then, $|\ans{q}{\graph}|^2$
is the cardinality of the set $\mathit{Pairs}$ that contains each pair
$\tuple{\pi, \pi'}$ of substitutions mapping the variables of $q$ to $\dom{\mu}$ such that
${\setof{\pi(q), \pi'(q)} \subseteq \graph}$. But then, Equation
\eqref{eq:variance-rewr} holds because the set of answers to ${q \cup q'}$ over
$\graph$ contains precisely, for each pair ${\tuple{\pi, \pi'} \in \mathit{Pairs}}$,
 one distinct substitution ${\kappa_{\pi, \pi'}}$ from ${\var{q}\cup
\var{q'}}$ to $\dom{\mu}$. Such $\kappa_{\pi,\pi'}$ can be obtained from $\pi$
and $\pi'$ by setting, for each ${x \in \var{q}}$,
\begin{displaymath}
    \kappa_{\pi,\pi'}(x)  = \pi(x) \text{ and } \kappa_{\pi,\pi'}(\rho(x)) = \pi'(x).
\end{displaymath}

\smallskip

We are now ready to prove the theorem. By Definition \ref{def:expect-var} of
variance, we have that
\begin{displaymath}
    \std{q}^2   = \sum_{\graph\in \semantics} \frac{|\ans{q}{\graph}|^2 - 2 \cdot |\ans{q}{\graph}| \cdot \expected{q}  + \expected{q})^2}{|\semantics|}.
\end{displaymath}
By using Equation \eqref{eq:variance-rewr} and by manipulating summations in
the usual way, we can express $\std{q}^2$ as follows.
\begin{displaymath}
    \sum_{\graph\in \semantics} \left[\frac{|\ans{q \cup q'}{\graph}|}{|\semantics|}\right] - 2 \cdot \expected{q}\cdot \sum_{\graph \in \semantics} \left[\frac{|\ans{q}{\graph}|}{|\semantics|}\right]  + \expected{q}^2
\end{displaymath}
By Definition \ref{def:expect-var} of expectation, we then have that
\begin{displaymath}
    \std{q}^2  \ \ = \ \ \expected{q\cup q'} - 2 \cdot \expected{q}^2    +  \expected{q}^2 \ \ = \ \ \expected{q\cup q'} -  (\expected{q})^2,
\end{displaymath}
and so the theorem holds.
\end{proof}

}{}

\end{document}